\documentclass[12pt]{iopart}

\usepackage{iopams}  
\usepackage{amssymb}
\usepackage{amsthm}
\usepackage{braket}
\usepackage{bm}
\usepackage{graphicx}
\usepackage[hidelinks]{hyperref}
\usepackage{setstack}
\usepackage{iopams}
\usepackage{verbatim}
\usepackage{todonotes}
\usepackage[numbers,sort&compress]{natbib}
\newtheorem{definition}{Definition}[section]
\newtheorem{theorem}{Theorem}[section]
\newtheorem{lemma}{Lemma}[section]
\theoremstyle{remark}
\newtheorem{example}{Example}[section]

\newtheorem{remark}{Remark}[section]
\newcommand{\R}{\mathbb{R}}
\newcommand{\abs}[1]{\left|#1\right|}
\newcommand{\norm}[1]{\left\|#1\right\|}

\newcommand{\eqref}[1]{\eref{#1}}
\newcommand{\dd}{\mathrm{d}}
\newcommand{\Range}{\mathop{\mathrm{Range}}}

\begin{document}

\title[Wandering range of robust quantum symmetries]{Wandering range of robust quantum symmetries}

\author{Daniel Burgarth$^1$, Paolo Facchi$^{2,3}$, Marilena Ligab\`{o}$^4$, Vito Viesti$^{2,3}$ and Kazuya Yuasa$^5$}

\address{$^1$Physics Department, Friedrich-Alexander Universit\"at of Erlangen-Nuremberg,
91058 Erlangen, Germany\\
$^2$Dipartimento di Fisica, Universit\`{a} di Bari, 70126 Bari, Italy\\
$^3$INFN, Sezione di Bari, 70126 Bari, Italy\\
$^4$Dipartimento di Matematica, Universit\`{a} di Bari, 70125 Bari, Italy\\
$^5$Department of Physics, Waseda University, Tokyo 169-8555, Japan}
\ead{vito.viesti@uniba.it}
\vspace{10pt}

\begin{abstract}
   This paper introduces the concept of the wandering range of a robust symmetry $S$ of a Hamiltonian $H$. 
This quantity measures how the perturbed time evolution
$\rme^{\rmi t(H+\varepsilon V)} S \rme^{-\rmi t(H+\varepsilon V)}$
deviates from its unperturbed counterpart
$\rme^{\rmi tH} S \rme^{-\rmi tH} = S$.
Although the wandering range does not necessarily scale linearly with the perturbation strength $\varepsilon$, we identify conditions under which this linear behavior is recovered and we obtain explicit nonperturbative bounds.
\end{abstract}

\section*{Introduction}
In the description of a physical system, symmetries play a crucial role~\cite{symmetry,Wigner_symmetry}. Even though it is not always possible to compute the dynamics exactly, they provide a great deal of information about the evolution. In quantum systems, symmetries are identified with operators that commute with the Hamiltonian of the system~\cite{Sakurai}. Therefore, at least in principle, the knowledge of the Hamiltonian allows one to determine all the symmetries of a quantum system. However, in real physical applications, the Hamiltonian is known only approximately. It is then important to understand whether a symmetry remains stable under perturbations of the Hamiltonian~\cite{kam,Szabo2025robust}.
This stability is also of great practical importance, for instance in analog quantum simulation, where the goal is to reproduce the dynamics of a target system using another controllable quantum platform~\cite{simul2}. Since any realistic implementation is inevitably affected by imperfections and errors, it becomes essential to understand how such inaccuracies influence the predictions of the model. In this context, perturbations of the Hamiltonian provide a natural framework to describe the effect of experimental imperfections~\cite{simul,qsimul}.

It is possible to show that, if we consider the evolution for small times, the symmetries stay close to their initial value. It is a simple application of perturbation theory. However, for long times not all symmetries have the same behavior. There exist symmetries that, despite the perturbation, remain close to their initial value throughout the entire evolution; these are called \emph{robust} symmetries~\cite{kam}. On the other hand, it may happen that the effect of the perturbation accumulates over time and the symmetry acquires a non-negligible deviation from its initial value; these are referred to as \emph{fragile} symmetries. In Ref.~\cite{rob}, we provide a precise algebraic characterization of these two behaviors for quantum systems described by Hamiltonians with pure point spectrum. Furthermore, we introduce the concept of the \emph{wandering range} of a robust symmetry, which quantifies the deviation of the perturbed symmetry from its unperturbed evolution.

The goal of this paper is to analyze the dependence of the wandering range on the strength of the perturbation~$\varepsilon$. In Ref.~\cite{rob}, we have shown that, in general, the wandering range is not of order~$O(\varepsilon)$. The present paper is devoted to identifying conditions under which this linear scaling does hold. We prove that the wandering range, when evaluated on 
states belonging to the dense subspace given by the linear span of the eigenvectors of the unperturbed Hamiltonian, or for a finite-rank symmetry, is indeed of order~$\varepsilon$. 
This is established in Theorem~\ref{th:linear} for admissible perturbations of the Hamiltonian satisfying a regularity condition, and in Theorem~\ref{thm:compactres} for arbitrary linear perturbations of a compact-resolvent Hamiltonian. 

In the second part of the paper, we carry out a detailed analysis of the wandering range of \emph{completely robust symmetries}, that are robust against any bounded perturbations. In this case, we not only prove that the wandering range is uniformly of order~$\varepsilon$, but also derive an explicit bound in terms of the norm of the perturbation  and the minimal spectral gap of the Hamiltonian;
see Theorem~\ref{th:bicomm}\@. The core of the proof is the construction of a new perturbation of the Hamiltonian that commutes with the Hamiltonian itself and generates a dynamics which approximates the perturbed evolution uniformly in time ---an \emph{eternal block-diagonal approximation}~\cite{Eternal}. This result, which is of independent interest, is established in Theorem~\ref{th:eternal}\@. 

The technical framework underlying the argument is based on Theorem~\ref{thm:fund}, whose detailed proof and analysis are given in Section~\ref{convergence}\@. The proof relies on the construction of a unitary block-diagonalizing transformation, known as the Schrieffer-Wolff transformation~\cite{schrieffer_wolff,bravyi_schrieffer_wolff,Cederbaum_1989}, and equivalent in this context to a quantum Kolmogorov-Arnold-Moser (KAM) iteration scheme~\cite{harm,gapless}.

Throughout this analysis, we assume that the Hamiltonian~$H$ is self-adjoint, has pure point spectrum, and possesses a nonvanishing minimal spectral gap. The procedure is based on the \emph{homological equation}~\cite{arnold1989classical}, also known as the commutator equation in the context of matrix analysis~\cite{bhatia1997matrix}. Extending the solution of this equation to unbounded Hamiltonians is a crucial step that allows the iteration to be carried out (Theorem~\ref{l:61}).

The paper is organized as follows.
In Section~\ref{sec:intro}, we briefly recall the description of the quantum evolution in the Heisenberg picture, the definition of conserved quantities, and their strict connection with continuous symmetries. Furthermore, we introduce the concept of wandering range of robust symmetries and recall some results of Refs.~\cite{kam} and~\cite{rob}. In Section~\ref{sec:eigenvector}, we analyze the wandering range evaluated on linear combinations of eigenvectors of the Hamiltonian and/or for finite-rank symmetries. Section~\ref{sec:bounded} deals with the wandering range of completely robust symmetries against uniformly bounded perturbations.  
Section~\ref{convergence} is devoted to the proof of Theorem~\ref{thm:fund} via a quantum KAM iteration scheme.
Firstly we provide a  construction in terms of formal power series,
and then prove their convergence by exploiting the classical sequence of Catalan numbers~\cite{stanley2015catalan}. The section concludes with Example~\ref{ex:josephson}, which illustrates an application of our results to the Hamiltonian of a Josephson junction in an inductive loop.

\section{Notations and preliminaries}\label{sec:intro}
Let $\mathcal{H}$ be a complex separable Hilbert space. We consider closed quantum systems whose time evolution is described by a one-parameter strongly continuous unitary group $\{\rme^{-\rmi tH}\}_{t\in\R}$, where the Hamiltonian operator $H$ is the generator of the group and is a (possibly unbounded) self-adjoint operator with domain $D(H) \subset \mathcal{H}$. In the Heisenberg picture, the evolution of a bounded operator $A \in  B(\mathcal{H})$ is ruled by the following law,
\begin{equation}\label{eqn:HEv}
   t \in \R \mapsto \rme^{\rmi tH}A\rme^{-\rmi tH}.
\end{equation}

There are observables that play a privileged role in the description of a system: conserved
quantities. They are the operators that, despite the evolution described in~\eqref{eqn:HEv}, do not depend on time. Because of Noether's theorem, there is a one to one correspondence between conserved quantities and continuous symmetries of a system: we are going to use the terms `conserved quantities' and `symmetries' interchangeably.
We say that $S \in  B(\mathcal{H})$ is a symmetry of the Hamiltonian $H$ if for all $t \in \R$
\begin{equation}
\rme^{\rmi tH}S\rme^{-\rmi tH}=S.
\end{equation}

The set of all symmetries of $H$ coincides with its commutant,
\begin{equation}
    \{H\}'=\{S \in\mathcal{B(H)}:[S,P_H(\Omega)]=0,\ \mathrm{for\ all}\  \Omega\subset\R\ \mathrm{measurable}\},
\end{equation}
where $\{P_H(\Omega)\}$ is the unique projection-valued measure canonically associated with $H$ by the spectral theorem and $[S,P_H(\Omega)]=SP_H(\Omega)-P_H(\Omega)S$ is the commutator~\cite{teschl2014quantum}. When $H$ is bounded, then $\{H\}'$ is the set of operators commuting with $H$~\cite{BR,ViestiPhD}.

In Refs.~\cite{kam,rob}, a refined classification of symmetries has been introduced in terms of the long-time stability of symmetries with respect to continuous perturbation of the Hamiltonian. More precisely, given a symmetry $S$ of the Hamiltonian $H$, by definition it does not evolve in time if the time evolution is precisely the one generated by $H$. However, different behaviors can be observed if the same symmetry $S$ evolves in time with a Hamiltonian $H(\varepsilon)$ that is a perturbation of $H$ depending continuously on a parameter $\varepsilon$, which we assume small.

\begin{definition}[Perturbation]
\label{def:perturbation}
A family $H(\varepsilon)$ of  operators on $\mathcal{H}$ is a \emph{perturbation of the Hamiltonian $H$} if
\begin{itemize}
\item $H(0)=H$;
\item $H(\varepsilon)= H(\varepsilon)^\dagger$ is self-adjoint on $D(H(\varepsilon)) = D(H)$ for all $\varepsilon\in[0,1]$;
\item the map $\varepsilon \mapsto H(\varepsilon)$ is strongly continuous at $\varepsilon = 0$, namely, for all $\psi\in D(H)$,
\begin{equation}
    \|H(\varepsilon)\psi - H\psi\| \to 0, \quad \mathrm{as} \quad \varepsilon \to 0.
    \label{eq:strongcont}
\end{equation}
\end{itemize}
\end{definition}
The simplest example of perturbation of $H$ is the following one.
\begin{example}[Linear perturbation]\label{exampleLD}
Let $V$ be a $H$-bounded operator, namely $D(H) \subset D(V)$ and there are nonnegative constants $a, b \geq 0$ such that, for all $\psi \in D(H)$, $\|V\psi\| \leq a \|H \psi \|+ b \| \psi\|$. We assume for the sake of simplicity that $a <1$. If $V$ is Hermitian, then, by the Kato-Rellich Theorem~\cite{Kato1995}, for all $\varepsilon\in [0,1]$, the operator $H(\varepsilon)=H+\varepsilon{V}$ is self-adjoint, hence $H(\varepsilon)$ is a perturbation of $H$, that we call \emph{linear perturbation of $H$}.
\end{example}
\begin{remark}
    By mimicking the linear perturbation case, in the following it will be sometimes convenient to write a general perturbation in the form
    \begin{equation}
        H(\varepsilon) = H + \varepsilon V(\varepsilon),
    \end{equation}
and the properties of the family $H(\varepsilon)$ in Definition~\ref{def:perturbation} imply that
    \begin{itemize}
\item $V(\varepsilon)$ is Hermitian on $D(V(\varepsilon)) = D(H)$ for all $\varepsilon\in[0,1]$;
\item for all $\psi\in D(H)$,
$\|V(\varepsilon)\psi\| = o(1/\varepsilon)$,  as $\varepsilon \to 0$.
\end{itemize}
\end{remark}

When the Hamiltonian undergoes a perturbation $H(\varepsilon)$, the symmetries are no longer conserved with respect to the perturbed dynamics, namely,
\begin{equation}
    \rme^{\rmi tH(\varepsilon)}S\rme^{-\rmi tH(\varepsilon)}\neq S,\quad\mathrm{for}\quad S\in\{H\}'.
\end{equation}
We now introduce a quantity that measures the maximal deviation of $S$ from its initial value along the perturbed evolution.

\begin{definition}[Wandering range]
Let $S \in \{H\}'$ be a symmetry of the Hamiltonian $H$, and let $H(\varepsilon)$, with $\varepsilon\in[0,1]$, be a perturbation of $H$. 
For all $\psi \in \mathcal{H}$ and all $\varepsilon \in [0,1]$, we define the \emph{wandering range of $S$ under $H(\varepsilon)$} on $\psi$ as
\begin{equation}
   \delta_{H(\varepsilon)}(S,\psi)=\sup_{t\in\R}
   \|(\rme^{\rmi tH(\varepsilon)}S\rme^{-\rmi tH(\varepsilon)}-S)\psi\|.
\end{equation}
\end{definition}

The behavior of symmetries under perturbations is not uniform. Some symmetries accumulate deviations over long times and eventually differ significantly from their initial values; these are called \emph{fragile symmetries}. On the other hand, there are symmetries that remain close to their initial values throughout the evolution despite the perturbation; these are called \emph{robust symmetries}. Mathematically, this is equivalent to say that the wandering range of a robust symmetry goes to zero as $\varepsilon\to0$.

We now formalize this notion.
\begin{definition}\label{def:Pfr}
Let $S \in \{H\}'$ be a symmetry of the Hamiltonian $H$, and let $H(\varepsilon)$ be a perturbation of $H$. We say that $S$ is \emph{robust against $H(\varepsilon)$} if for all $\psi \in \mathcal{H}$
\begin{equation}\label{eq:rob}
\delta_{H(\varepsilon)}(S,\psi) \to 0, \quad \text{as} \quad \varepsilon \to 0. 
\end{equation} 
\end{definition}
\begin{remark}
If a symmetry $S \in \{H\}'$ is robust against a linear perturbation $H(\varepsilon)=H+\varepsilon V$, we say that $S$ is \emph{$V$-robust}.
\end{remark}
The goal of the paper is to discuss the speed of convergence of the limit in~\eqref{eq:rob}.
To achieve this, we aim to derive an explicit upper bound for the wandering range of the perturbed evolution $\rme^{\rmi tH(\varepsilon)}S\rme^{-\rmi tH(\varepsilon)}$ of the symmetry $S$ around its unperturbed evolution $S$, on the state $\psi$.

If the Hilbert space $\mathcal{H}$ is a finite-dimensional Hilbert space and the perturbation is linear, $H(\varepsilon)=H+\varepsilon{V}$, there is an explicit (uniform) upper bound 
\begin{equation}\label{eq:bounfidim}
    \sup_{t \in \R}\|{\rme^{\rmi t(H+\varepsilon V)}S\rme^{-\rmi t(H+\varepsilon V)}-S}\| \leq  \frac{14 \sqrt{d}\, \|V\|\|S\|}{\eta}   \varepsilon ,
\end{equation}
where $d$ is the number of distinct eigenvalues of $H$ and $\eta$ is the minimal spectral gap of $H$, i.e.,
\begin{equation}
    \eta=\min_{k\neq\ell}|{h_k-h_\ell}|,
\end{equation}
where the $h_k$'s are the distinct eigenvalues of $H$~\cite{kam}. 

In this paper, we want to generalize this result to  infinite-dimensional Hilbert spaces, including more general perturbations of the Hamiltonian $H$. Two questions arise in the effort of generalizing this kind of result to unbounded operators:
\begin{enumerate}
    \item [(i)] Given $S$ a robust symmetry against a perturbation $H(\varepsilon)$ of $H$ and a unit vector $\psi\in\mathcal{H}$. Is it possible to find a constant $C_{\psi}>0$ such that
    \begin{equation}\label{speed1}
        \sup_{t \in \R}\|{(\rme^{\rmi tH(\varepsilon)}S\rme^{-\rmi tH(\varepsilon)}-S)\psi}\|\leq C_{\psi} \varepsilon,
    \end{equation}
    for $\varepsilon$ sufficiently small? In other words, is the wandering range on state~$\psi$ of $O( \varepsilon )$?
    \item [(ii)] The constant $C$ is $\psi$-independent? In other words, is it true that
\begin{equation}\label{eq:normtop}  \sup_{t\in\mathbb{R}}\|{\rme^{{\rmi}tH(\varepsilon)}S\rme^{-{\rmi}tH(\varepsilon)}-S}\|\leq C \varepsilon ?
\end{equation} 
\end{enumerate}

The answer to both questions is negative (if $\mathcal{H}$ is infinite-dimensional).
There exist states $\psi$ for which the wandering range is of order $O( \varepsilon ^{\gamma})$ with $\gamma >0$ arbitrarily small. Moreover, in general, the convergence~\eqref{eq:rob} is not uniform in $\psi$ (with $\|\psi\|= 1$), and therefore it does not hold in operator norm.  We now present a simple explicit example~\cite{rob} that clearly illustrates this phenomenon.
\begin{example}
Let $H$ be the Hamiltonian of a one-dimensional harmonic oscillator with $m=\omega=1$ on the Hilbert space $L^2(\R)$, 
\begin{equation}
   H=\frac{1}{2}(p^2+x^2),
\end{equation}
and consider the following symmetry of $H$,
\begin{equation}
   S=\frac{1-\Pi}{2},
\end{equation}
where $\Pi$ is the parity operator, $\Pi \psi(x) =\psi(-x)$.

The symmetry $S$ can be proven~\cite{rob} to be robust against \emph{every} linear perturbation $H(\varepsilon) = H + \varepsilon V$ of $H$, with $V$ being Hermitian and $H$-bounded, so its wandering range vanishes $\delta_{H(\varepsilon)}(S,\psi)\to 0$ as $\varepsilon\to 0$, for all Hermitian and $H$-bounded $V$, and for all wave functions $\psi\in L^2(\mathbb{R})$.

However, by taking $V=p$ and 
\begin{equation}
   \psi_{\alpha}(x)=\frac{1}{(1+x^2)^{\alpha/4}},\quad x \in \R,
\end{equation}
with $\alpha >1$, 
it can be proven that, for all $\varepsilon \in [0,1]$,
\begin{equation}
   \delta_{H(\varepsilon)} (S, \psi_\alpha) =\sup_{t \in \R} \|{(\rme^{{\rmi}t(H + \varepsilon p)}S\rme^{{-\rmi}t (H + \varepsilon p)}-S)\psi_{\alpha}}\|\ge{c_\alpha}\varepsilon^{\frac{\alpha-1}{2}}.
 \end{equation}
Therefore, \eqref{speed1} cannot hold if $\alpha \in (1,3)$, and in fact, the speed of convergence can be arbitrarily slow for $\alpha\approx 1$.
Moreover, by taking $\alpha\downarrow 1$, one also gets that the convergence cannot be in operator norm, since   
\begin{equation}
   \sup_{t \in \R} \|{\rme^{{\rmi}t(H+\varepsilon p)}S\rme^{{-\rmi}t(H +\varepsilon p)}-S}\|\ge\frac{1}{\sqrt{2}}.
\end{equation}
\end{example}
The aim of this paper is to establish sufficient conditions on the perturbation $H(\varepsilon)$ of an unbounded Hamiltonian $H$, on the vector $\psi$, and on the robust symmetry $S$, such that the wandering range of $S$ is linear in $\varepsilon$ as in~\eqref{speed1}. We will also provide sufficient conditions under which the strong topology can be replaced by the uniform topology as in~\eqref{eq:normtop}, in analogy with the finite-dimensional setting.

Let us begin by introducing a special class of perturbations that will be central to our analysis.
These perturbations preserve the spectral structure of $H$ in a controlled way and will be referred to as \emph{admissible}.
\begin{definition}\label{def:addef}
\emph{(Admissible perturbation)}  
Let $H$ be a self-adjoint operator with pure point spectrum, and let $H(\varepsilon)$ be a perturbation of $H$.  
We say that $H(\varepsilon)$ is an \emph{admissible perturbation} of $H$ if for $\varepsilon\in[0,1]$ the following conditions hold:
\begin{enumerate}
    \item $H(\varepsilon)$ has a pure point spectrum. Therefore, for all $\psi \in D(H)$,
    \begin{equation}
    H(\varepsilon) \psi = \sum_{n \geq 1} h_n(\varepsilon) P_n(\varepsilon) \psi,
    \label{eq:spectralHeps}
    \end{equation}
    where $h_n(\varepsilon) \in \mathbb{R}$ and $P_n(\varepsilon)$ are the eigenvalues and spectral projections of $H(\varepsilon)$, respectively.
    \item For each $n \geq 1$, the map $\varepsilon \mapsto h_n(\varepsilon)$ is continuous.
    \item For $n \neq m$, we have $h_n(\varepsilon) \neq h_m(\varepsilon)$ for $\varepsilon \neq 0$.
    \item There exists a unitary operator $U(\varepsilon)$
    such that  
    \begin{equation}
    P_n(\varepsilon) = U(\varepsilon) P_n(0) U(\varepsilon)^\dagger, \quad \mathrm{for\ all}\ n \geq 1,
    \end{equation}
    and the map $\varepsilon \mapsto U(\varepsilon)$ is strongly continuous.
\end{enumerate}
\end{definition}
\begin{example}[Linear perturbation]
Consider the linear perturbation of $H$ of Example~\ref{exampleLD},
\begin{equation}
H(\varepsilon)=H+ \varepsilon V,  \quad \varepsilon \in [0,1].
\end{equation} 
If $H$ has compact resolvent, then  $H(\varepsilon)$ can be proven~\cite{rob} to be an admissible perturbation of $H$ (see also Example~\ref{lineardeformationdiff} for further details).
\end{example}
We recall here the main result of Ref.~\cite{rob}, i.e.~the algebraic characterization of  the robust symmetries of $H$ against its admissible perturbations.
\begin{theorem}[Robust symmetries] \label{thm:main0}
Let $S\in\{H\}'$ be a symmetry of $H$ and let $H(\varepsilon)$ be an admissible perturbation of $H$. $S$ is robust against $H(\varepsilon)$ if and only if $[S, P_n(0)]=0$ for all $n \geq 1$.
\end{theorem}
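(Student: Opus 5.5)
We prove the two implications separately. Both rest on the spectral decomposition of the perturbed evolution given by admissibility. For $\varepsilon\neq 0$ we have $\rme^{-\rmi tH(\varepsilon)}=\sum_n \rme^{-\rmi t h_n(\varepsilon)}P_n(\varepsilon)$ strongly. Since the $h_n(\varepsilon)$ are pairwise distinct by condition (iii), the time average of the Heisenberg evolution is the pinching map:
\begin{equation*}
\lim_{T\to\infty}\frac{1}{T}\int_0^T \rme^{\rmi tH(\varepsilon)}S\rme^{-\rmi tH(\varepsilon)}\,\dd t=\sum_n P_n(\varepsilon)SP_n(\varepsilon)\equiv\mathcal{D}_\varepsilon(S),
\end{equation*}
with the limit taken strongly. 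This follows from $\frac1T\int_0^T \rme^{\rmi t(h_n-h_m)}\dd t\to\delta_{nm}$ together with dominated convergence on finite truncations of $\psi$, using $\|S\|<\infty$.

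\textbf{Sufficiency.} Assume $[S,P_n(0)]=0$ for all $n$, so $S=\sum_n P_n(0)SP_n(0)$. Set $S_\varepsilon=U(\varepsilon)SU(\varepsilon)^\dagger$. Then $S_\varepsilon$ commutes with every $P_n(\varepsilon)=U(\varepsilon)P_n(0)U(\varepsilon)^\dagger$, so it commutes with $H(\varepsilon)$ and is exactly conserved. We split
\begin{equation*}
(\rme^{\rmi tH(\varepsilon)}S\rme^{-\rmi tH(\varepsilon)}-S)\psi=\rme^{\rmi tH(\varepsilon)}(S-S_\varepsilon)\rme^{-\rmi tH(\varepsilon)}\psi+(S_\varepsilon-S)\psi .
\end{equation*}
The second term tends to zero because $U(\varepsilon)\to U(0)$ strongly. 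Here we need $U(0)$ to commute with $S$; one may normalize $U(0)=1$, or else note that $U(0)$ maps each $\Range P_n(0)$ into itself, so only the block structure matters and we replace $U(\varepsilon)$ by $U(\varepsilon)U(0)^\dagger$. The first term has norm $\|(S-S_\varepsilon)\phi_t\|$ with $\phi_t=\rme^{-\rmi tH(\varepsilon)}\psi$, so we need smallness that is uniform in $t$. We first take $\psi$ in the finite span of eigenvectors of $H(\varepsilon)$... but these depend on $\varepsilon$. We therefore use the other form, $S-S_\varepsilon=U(\varepsilon)(U(\varepsilon)^\dagger S U(\varepsilon)-S)U(\varepsilon)^\dagger$, and write $\rme^{-\rmi tH(\varepsilon)}=U(\varepsilon)\sum_n \rme^{-\rmi th_n(\varepsilon)}P_n(0)U(\varepsilon)^\dagger$. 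Projecting $U(\varepsilon)^\dagger\psi$ onto finitely many $P_n(0)$ leaves a tail that is uniformly small in $\varepsilon$, by strong continuity of $U$ and $U^\dagger$. The retained finite part ranges over the compact set of phases $\{\sum_{n\le N} \rme^{-\rmi\theta_n}P_n(0)\chi\}$, and on a compact set strong convergence $U(\varepsilon)^\dagger SU(\varepsilon)\to S$ is uniform. A $3\epsilon$-argument then gives $\delta_{H(\varepsilon)}(S,\psi)\to0$.

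\textbf{Necessity.} Assume $S$ is robust. Averaging the robustness bound in time gives $\|(\mathcal{D}_\varepsilon(S)-S)\psi\|\le\delta_{H(\varepsilon)}(S,\psi)\to0$. We write $\mathcal{D}_\varepsilon(S)=U(\varepsilon)\mathcal{D}_0\big(U(\varepsilon)^\dagger SU(\varepsilon)\big)U(\varepsilon)^\dagger$, where $\mathcal{D}_0(X)=\sum_nP_n(0)XP_n(0)$ is a contraction that is strongly continuous on bounded sets. Letting $\varepsilon\to0$ then yields $\mathcal{D}_0(S)=S$ strongly, which is exactly $[S,P_n(0)]=0$ for all $n$. (Again we use $U(0)$ to be block-diagonal, or normalize it.)

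\textbf{Main obstacle.} The hard step is the uniform-in-$t$ control in the sufficiency part, where strong (not norm) continuity of $U(\varepsilon)$ must be upgraded to uniformity. The compactness argument over the torus of phases on finitely many blocks is the device I would use. The blocks $P_n(0)$ may be infinite-dimensional, but $\chi$ is fixed, so the set of vectors involved is still compact. The necessity direction additionally needs care with interchanging the time average and the strong limits.
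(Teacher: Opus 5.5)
The paper does not prove this theorem: it is quoted from Ref.~\cite{rob}. There is therefore no in-paper proof to compare with, so your argument has to stand on its own, and it does; I find no gap.

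\textbf{Normalization.} Setting $U(0)=\mathbb{I}$ is legitimate. The relation $P_n(0)=U(0)P_n(0)U(0)^\dagger$ forces $U(0)$ to commute with every $P_n(0)$, so $U(\varepsilon)U(0)^\dagger$ is again a strongly continuous intertwiner. The normalization is also genuinely needed: otherwise $S_\varepsilon\to U(0)SU(0)^\dagger$, which differs from $S$ in general, because $U(0)$ and $S$ are both block-diagonal but need not commute inside a block.

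\textbf{Sufficiency.} Write $A_\varepsilon=U(\varepsilon)^\dagger SU(\varepsilon)-S$ and $\tilde H(\varepsilon)=\sum_n h_n(\varepsilon)P_n(0)$. Your first term is $\|A_\varepsilon \rme^{-\rmi t\tilde H(\varepsilon)}U(\varepsilon)^\dagger\psi\|$, where $\|A_\varepsilon\|\le 2\|S\|$ and $A_\varepsilon\to 0$ strongly. The $3\epsilon$-argument you describe then closes in three steps. First, replace $U(\varepsilon)^\dagger\psi$ by $\psi$, at cost $2\|S\|\|U(\varepsilon)^\dagger\psi-\psi\|$. Second, cut with $Q_N=\sum_{n\le N}P_n(0)$; since $Q_N$ commutes with $\rme^{-\rmi t\tilde H(\varepsilon)}$, the tail costs $2\|S\|\|(\mathbb{I}-Q_N)\psi\|$ for all $t$. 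Third, the equibounded family $A_\varepsilon$ converges uniformly on the compact set $\{\sum_{n\le N}\rme^{-\rmi\theta_n}P_n(0)\psi\}$.

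\textbf{Necessity.} No interchange of limits is actually needed. For each fixed $\varepsilon>0$ you let $T\to\infty$ first. Only afterwards do you let $\varepsilon\to 0$, using that $\mathcal{D}_0$ is strongly continuous on bounded sets. Your argument also shows that sufficiency uses only conditions (i) and (iv) of Definition~\ref{def:addef}. The nondegeneracy condition (iii) enters only in necessity, to identify the time average with the pinching onto the $P_n(\varepsilon)$.

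\textbf{Relation to Theorem~\ref{th:linear}.} The closest argument in the paper is the proof of Theorem~\ref{th:linear}. It uses the same intertwining in commutator form, $\rme^{\rmi tH(\varepsilon)}S\rme^{-\rmi tH(\varepsilon)}-S=\rme^{\rmi tH(\varepsilon)}[S,\rme^{-\rmi tH(\varepsilon)}-\rme^{-\rmi t\tilde H(\varepsilon)}]$, which is equivalent to your splitting through the exactly conserved $S_\varepsilon=U(\varepsilon)SU(\varepsilon)^\dagger$. There, uniformity in $t$ comes for free. The blockwise norm hypothesis~\eqref{eq:diff}, combined with $[\rme^{-\rmi t\tilde H(\varepsilon)},P_k]=0$, yields the time-independent bound $4\|S\|\|(U(\varepsilon)-\mathbb{I})P_k\|$ and hence a linear rate. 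Under mere strong continuity of $U(\varepsilon)$, no such norm control on a (possibly infinite-dimensional) block is available. Your compactness of the phase torus over finitely many blocks is exactly the right substitute, at the price of yielding only $\delta_{H(\varepsilon)}(S,\psi)=o(1)$ with no rate. That limitation is consistent with the paper's harmonic-oscillator example, where the decay can be arbitrarily slow.
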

Another relevant result of Ref.~\cite{rob} is the algebraic characterization of the \emph{completely robust symmetries}, i.e.~the conserved quantities which are robust against all the possible (admissible) perturbations. This result is summarized by the following Theorem.
\begin{theorem}[Completely robust symmetries]\label{th:completely_robust}
    Let $S\in\{H\}'$. Then, $S$ is completely robust if and only if $S\in\{H\}''$, where
    \begin{equation}
        \{H\}''=\{S\in B(\mathcal{\mathcal{H}}):[S,A]=0,\ \forall A\in\{H\}'\}
    \end{equation}
    is the bicommutant of the Hamiltonian.
\end{theorem}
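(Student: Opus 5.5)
By Theorem~\ref{thm:main0}, robustness of $S$ against an admissible perturbation $H(\varepsilon)$ is equivalent to $[S,P_n(0)]=0$ for all $n$. So I would prove the statement by showing two things. First, the projections $P_n(0)$ produced by any admissible perturbation always lie in $\{H\}'$. Second, for $S\in\{H\}'\setminus\{H\}''$ one can build an admissible perturbation for which some $P_n(0)$ fails to commute with $S$.

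Throughout I use the pure point structure $H=\sum_k h_k E_k$, with $E_k$ the eigenprojections of $H$ and $h_k$ distinct. Then $\{H\}'$ consists of the block-diagonal operators $A=\bigoplus_k A_k$ with $A_k\in B(\Range E_k)$. Its commutant $\{H\}''$ consists of the operators $\sum_k c_k E_k$ with $(c_k)$ bounded. This holds because the commutant of the full $B(\Range E_k)$ is the scalars.

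\emph{($\Leftarrow$)} Let $S\in\{H\}''$ and let $H(\varepsilon)$ be any admissible perturbation. The projections $P_n(0)$ are the eigenprojections of $H(0)=H$ in the labelling of Definition~\ref{def:addef}. Since different $n$ may share the same eigenvalue at $\varepsilon=0$, each $P_n(0)$ is a subprojection of some $E_k$, and the $P_n(0)$ with $h_n(0)=h_k$ sum to $E_k$. In particular $P_n(0)$ commutes with every $E_j$, so $P_n(0)\in\{H\}'$. Hence $[S,P_n(0)]=0$ for all $n$, and Theorem~\ref{thm:main0} gives robustness. Since the perturbation was arbitrary, $S$ is completely robust.

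\emph{($\Rightarrow$)} Suppose $S\in\{H\}'$ but $S\notin\{H\}''$. Then some block $S_k$ is not a multiple of the identity on $\Range E_k$. An operator commuting with every rank-one projection is scalar, so there is a unit vector $e\in\Range E_k$ with $[S,\ket{e}\bra{e}]\neq0$. I then choose an orthonormal basis $\{e_j\}$ of $\mathcal{H}$ that is adapted to the decomposition $\bigoplus_k\Range E_k$ and contains $e$; this is possible since $\mathcal{H}$ is separable. Next I set $V=\sum_j v_j\ket{e_j}\bra{e_j}$, with distinct real numbers $v_j\in(0,\eta/2)$. This $V$ is bounded, Hermitian and commutes with $H$. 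The perturbation $H(\varepsilon)=H+\varepsilon V$ is diagonal in the basis $\{e_j\}$, with eigenvalues $h_{k(j)}+\varepsilon v_j$, and I take $U(\varepsilon)=1$.

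I then check admissibility. The eigenvalues are continuous in $\varepsilon$. For $\varepsilon\in(0,1]$ they are pairwise distinct: two basis vectors in the same eigenspace of $H$ get different shifts since the $v_j$ are distinct, while vectors in different eigenspaces are separated because $|\varepsilon(v_i-v_j)|<\eta/2<\eta$. The spectral projections $P_j(\varepsilon)=\ket{e_j}\bra{e_j}$ do not depend on $\varepsilon$. With this labelling, $\ket{e}\bra{e}$ is one of the $P_n(0)$, so $[S,P_n(0)]\neq0$ for that $n$. By Theorem~\ref{thm:main0}, $S$ is not robust against this perturbation, hence not completely robust.

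The step I expect to be delicate is this construction. One must make sure the distinctness condition (iii) holds for all $\varepsilon\in(0,1]$, not just small $\varepsilon$, and that Definition~\ref{def:addef} really allows $P_n(0)$ to be this refined, rank-one decomposition. The nonvanishing minimal gap $\eta$ makes the uniform choice $v_j<\eta/2$ work. Without it I would instead pick $v_j$ smaller than half the distance from $h_{k(j)}$ to its nearest neighbour, which still gives distinctness. A further point to check is the identification $\{H\}''=\{\sum_k c_kE_k\}$, which relies on pure point spectrum.
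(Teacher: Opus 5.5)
\textbf{Gap: the ($\Rightarrow$) construction needs a spectral gap that the statement does not assume.}

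The paper gives no proof of this statement; it imports the result from Ref.~\cite{rob}. Your argument therefore has to stand on its own.

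The reduction to Theorem~\ref{thm:main0}, the identification of $\{H\}''$, and the ($\Leftarrow$) direction are fine. The ($\Rightarrow$) direction is complete when $\eta>0$. With your fallback it is also complete whenever each eigenvalue has positive distance from the others, e.g.\ for compact resolvent.

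However, the statement sits in the setting of Definition~\ref{def:addef}, which assumes only pure point spectrum. There your construction breaks and the fallback is wrong, because the distance from an eigenvalue to the remaining eigenvalues can be $0$. Take $\ker H=\mathrm{span}\{e_1,e_2\}$ together with simple eigenvalues $\pm 1/m$, $m\ge 1$.
\begin{itemize}
\item For the eigenvalue $0$ your prescription is void, since its distance to the other eigenvalues is $0$.
\item For any shifts $v_1\neq v_2$ on $\ker H$, some $v\in\{v_1,v_2\}$ is nonzero; say $v>0$.
\item Your rule forces the shifts $w_{\pm m}$ on the $\pm1/m$-eigenvectors to tend to $0$.
\item Then $\varepsilon v$ and $1/m+\varepsilon w_m$ coincide at $\varepsilon=1/(m(v-w_m))\in(0,1]$ for all large $m$. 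The case $v<0$ is symmetric, using $-1/m$.
\end{itemize}
So condition (iii) of Definition~\ref{def:addef} fails. Splitting \emph{every} eigenspace by small independent shifts cannot work when eigenvalues accumulate at a degenerate eigenvalue, or are dense.

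\textbf{Fix.} Split only the eigenspace you need, and make room for the split-off level by shifting the whole spectrum above it coherently. With $c>0$ set
\[
V=c\,P_H((h_k,+\infty))+\frac{c}{2}\ket{e}\bra{e}.
\]
This $V$ is bounded, Hermitian and belongs to $\{H\}'$. For $\varepsilon\in(0,1]$, the eigenvalues of $H+\varepsilon V$ are:
\begin{itemize}
\item $h_\ell$ on $E_\ell$ for $h_\ell<h_k$;
\item $h_k$ on $E_k-\ket{e}\bra{e}$, which is nonzero because $S_k$ is not scalar;
\item $h_k+\varepsilon c/2$ on $\ket{e}$;
\item $h_\ell+\varepsilon c$ on $E_\ell$ for $h_\ell>h_k$.
\end{itemize}
These are pairwise distinct. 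The map $x\mapsto x+\varepsilon c\,\mathbf{1}_{(h_k,+\infty)}(x)$ is injective and leaves the interval $(h_k,h_k+\varepsilon c]$ empty, and the split-off level $h_k+\varepsilon c/2$ lies inside that interval. With $U(\varepsilon)=\mathbb{I}$, this is an admissible perturbation of any $H$ with pure point spectrum, and $\ket{e}\bra{e}$ is one of its $P_n(0)$. The rest of your argument then goes through unchanged.
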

In other words, the completely robust symmetries are the operators that commute not only with the Hamiltonian, but also with all the symmetries of $H$. 

In the next section, we examine the wandering range $\delta_{H(\varepsilon)}(S,\psi)$ under admissible perturbations in two situations: when $\psi$ is a linear combination of eigenvectors of the unperturbed Hamiltonian $H$, and/or when the symmetry $S$ is of finite rank.
\section{Eigenvectors of the Hamiltonian and finite-rank symmetries}\label{sec:eigenvector}
Let $H$ be a self-adjoint operator on $\mathcal{H}$ with pure point spectrum. Its spectral decomposition reads
\begin{equation}\label{eq:spH}
    H\psi=\sum_{k\ge1}h_kP_k\psi, \quad \forall\psi\in{D(H)},
\end{equation}
where $\{h_k\}_{k\ge1}\subset\R$ are the distinct eigenvalues of $H$ and $\{P_k\}_{k\ge1}$ are its spectral projections,
\begin{equation}
    P_k=P_k^\dagger, \quad P_kP_\ell=\delta_{k\ell}P_k, \quad \forall k,\ell \geq 1, \qquad 
    \sum_{k\ge1}P_k\psi=\psi, \quad \forall\psi\in\mathcal{H}.
\end{equation}

Since $H$ has pure point spectrum, it admits an orthonormal basis of eigenvectors, whose linear span is a dense subspace $D$ of $\mathcal{H}$. Furthermore, the spectral projections $P_n(\varepsilon)$ in~\eqref{eq:spectralHeps} associated with an admissible perturbation $H(\varepsilon)$ of $H$  form a family of subprojections of the eigenprojections $P_k$ of $H$.

Given an admissible perturbation $H(\varepsilon)$ of $H$, the following theorem will establish sufficient conditions on a vector $\psi$ and/or a robust symmetry $S$, for the wandering range to be of order $\varepsilon$ in the strong or in the norm topology. 
\begin{theorem}\label{th:linear}
Let $H(\varepsilon)$ be an admissible perturbation of $H$ satisfying the following property: for all $k\geq 1$ there exists $c_k>0$ such that
\begin{equation}\label{eq:diff}
    \|(U(\varepsilon)-\mathbb{I})P_k\| \leq c_k  \varepsilon ,
\end{equation}
for $\varepsilon\in[0,\varepsilon_0]$, with $\varepsilon_0>0$. In other words, $U(\varepsilon)$ is  Lipschitz continuous on each eigenspace $\Range(P_k)$ of $H$.

Let $S \in \{H\}'$ be a robust symmetry against $H(\varepsilon)$. Then, for all $\varepsilon\in[0,\varepsilon_0]$ the following statements hold:
\begin{enumerate}
 \item[(i)] If $\psi \in \mathcal{H}$ belongs to the linear span $D$ of the eigenvectors of $H$, then there exists a
 $C_\psi > 0$ such that 
\begin{equation}\label{speedeig}
    \sup_{t \in \R} \|{(\rme^{\rmi t H(\varepsilon)} S \rme^{-\rmi t H(\varepsilon)} - S) \psi}\| \leq C_\psi  \varepsilon ;
\end{equation}
\item[(ii)] If $S$ is a finite-rank operator, then there exists a
$C > 0$ such that 
\begin{equation}\label{speedfr}
    \sup_{t\in \R} \|{\rme^{\rmi t H(\varepsilon)} S \rme^{-\rmi t H(\varepsilon)} - S}\| \leq C  \varepsilon .
\end{equation}
\end{enumerate}
\end{theorem}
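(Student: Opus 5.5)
Write $H(\varepsilon)=\sum_n h_n(\varepsilon)P_n(\varepsilon)$ with $P_n(\varepsilon)=U(\varepsilon)P_n(0)U(\varepsilon)^\dagger$, and note that $P_n(0)$ are subprojections of the eigenprojections $P_k$ of $H$. By Theorem~\ref{thm:main0}, robustness of $S$ means $[S,P_n(0)]=0$ for all $n$. The plan is to conjugate by $U(\varepsilon)$. Set $\tilde S(\varepsilon)=U(\varepsilon)SU(\varepsilon)^\dagger$. This operator commutes with every $P_n(\varepsilon)$, hence with $\rme^{\rmi tH(\varepsilon)}=\sum_n \rme^{\rmi t h_n(\varepsilon)}P_n(\varepsilon)$ (strong sum). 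Therefore
\begin{equation}
\rme^{\rmi tH(\varepsilon)}S\rme^{-\rmi tH(\varepsilon)}-S=\rme^{\rmi tH(\varepsilon)}\bigl(S-\tilde S(\varepsilon)\bigr)\rme^{-\rmi tH(\varepsilon)}+\bigl(\tilde S(\varepsilon)-S\bigr),
\end{equation}
and the deviation is controlled, uniformly in $t$, by $S-\tilde S(\varepsilon)$ acting on the vector $\rme^{-\rmi tH(\varepsilon)}\psi$. The difficulty is that this vector moves in $t$, so we cannot simply use $\|(S-\tilde S)\psi\|$.

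For (i), take $\psi\in D$, so $\psi=\sum_{k\in F}P_k\psi$ with $F$ finite. Let $Q=\sum_{k\in F}P_k$, and write $\psi=U(\varepsilon)\phi_\varepsilon+(\psi-U(\varepsilon)\psi)$ with $\phi_\varepsilon=\psi$. The second term has norm at most $\sum_{k\in F}c_k\varepsilon\|P_k\psi\|$ by \eqref{eq:diff}. The first term is the key one. Since $U(\varepsilon)Q$ maps into $\Range(U(\varepsilon)Q)=\bigoplus_{n:P_n(0)\le Q}\Range P_n(\varepsilon)$, which is invariant under $\rme^{-\rmi tH(\varepsilon)}$, we get $\rme^{-\rmi tH(\varepsilon)}U(\varepsilon)\psi=U(\varepsilon)Q\chi_t$ with $\|\chi_t\|=\|\psi\|$, where $\chi_t=\sum_n \rme^{-\rmi th_n(\varepsilon)}P_n(0)\psi$. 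Then
\begin{equation}
\|(S-\tilde S)U(\varepsilon)Q\chi_t\|\le\|(SU(\varepsilon)-U(\varepsilon)S)Q\|\,\|\psi\|,
\end{equation}
using $SQ=QS$. Also $\|(SU-US)Q\|\le\|S\|\,\|(U-\mathbb I)Q\|+\|(U-\mathbb I)QS\|$, and each term is $\le\|S\|\sum_{k\in F}c_k\varepsilon$. Combining the three contributions gives $C_\psi$ proportional to $\|S\|\,\|\psi\|\sum_{k\in F}c_k$, plus the corresponding bound for the outer term $(\tilde S-S)\psi$, which is handled by the same estimate.

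For (ii), use that $S$ is finite rank and commutes with each $P_k$. The range of $S$, and of $S^\dagger$, which is also in $\{H\}'$, is finite dimensional and invariant, but it need not lie in finitely many $P_k$'s. This is the main obstacle. The route I would try is to approximate: $S$ finite rank implies that $S\sum_{k\le N}P_k\to S$ in norm, because $SP_k$ are orthogonal pieces and the total rank is finite. So the ranges of $S$ and $S^\dagger$ involve only finitely many $k$ with $SP_k\neq0$. Indeed, $\mathrm{rank}\,S=\sum_k\mathrm{rank}(SP_k)$ since $SP_k=P_kSP_k$ acts block-diagonally. Hence $S=SQ=QS$ for a finite $Q$. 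Then redo the argument of (i) with $\psi$ replaced by arbitrary unit vectors. Write $\rme^{\rmi tH(\varepsilon)}S\rme^{-\rmi tH(\varepsilon)}-S$ in terms of $\tilde S-S=(US-SU)U^\dagger$. Bound $\|(US-SU)\|\le\|(U-\mathbb I)QS\|+\|S Q(U-\mathbb I)\|$; the second piece is bounded by taking adjoints and using $\|(U^\dagger-\mathbb I)Q\|$. That quantity requires Lipschitz control of $U^\dagger$ on $\Range Q$, and I would obtain it from $\|Q(U-\mathbb I)\|$, which equals $\|(U^\dagger-\mathbb I)Q\|$; this equivalence is the delicate step. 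This yields $C$ proportional to $\|S\|\sum_{k:SP_k\ne0}c_k$.
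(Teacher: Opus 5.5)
\textbf{Part (ii) has a gap.} The step you flag as delicate is left open, and the way you propose to close it is circular. Taking adjoints, $\|Q(U-\mathbb{I})\|$ \emph{is} $\|(U^\dagger-\mathbb{I})Q\|$, so obtaining one from the other gains nothing. Your hypothesis only controls $\|(U-\mathbb{I})P_k\|$, i.e.\ the operator $(U-\mathbb{I})Q$ with $Q$ on the right.

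The missing observation is the identity $U^\dagger-\mathbb{I}=-U^\dagger(U-\mathbb{I})$. By unitarity of $U^\dagger$ it gives
\[
\|Q(U-\mathbb{I})\|=\|(U^\dagger-\mathbb{I})Q\|=\|U^\dagger(U-\mathbb{I})Q\|=\|(U-\mathbb{I})Q\|\le\varepsilon\sum_{k\in F}c_k ,
\]
with $F=\{k: SP_k\neq 0\}$. This is exactly the step the paper uses when it rewrites $(U^\dagger-\mathbb{I})P_k$ as $-U^\dagger(U-\mathbb{I})P_k$. There is nothing delicate about it.

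With it your argument closes. From $US-SU=(U-\mathbb{I})QS-SQ(U-\mathbb{I})$ you get $\sup_t\|\rme^{\rmi tH(\varepsilon)}S\rme^{-\rmi tH(\varepsilon)}-S\|\le 2\|\tilde S(\varepsilon)-S\|=2\|US-SU\|\le 4\|S\|\varepsilon\sum_{k\in F}c_k$. This is the paper's constant.

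Your reduction to $S=SQ=QS$, via rank additivity over the orthogonal blocks $SP_k=P_kSP_k$, is correct and is the paper's $S=\sum_m P_{k_m}SP_{k_m}$. It contradicts your earlier sentence that the range of $S$ ``need not lie in finitely many $P_k$'s'', which should be dropped.

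\textbf{Part (i) is correct.} It mirrors the paper's argument. The paper conjugates the Hamiltonian rather than the symmetry: $\tilde H(\varepsilon)=U^\dagger H(\varepsilon)U=\sum_n h_n(\varepsilon)P_n(0)$ commutes with $S$, so $[S,\rme^{-\rmi tH(\varepsilon)}]=[S,\rme^{-\rmi tH(\varepsilon)}-\rme^{-\rmi t\tilde H(\varepsilon)}]$. It then uses $\rme^{-\rmi tH(\varepsilon)}-\rme^{-\rmi t\tilde H(\varepsilon)}=U[\rme^{-\rmi t\tilde H(\varepsilon)},U^\dagger-\mathbb{I}]$. Restricting to $\Range P_k$ gives the per-eigenspace estimate $\sup_t\|(\rme^{\rmi tH(\varepsilon)}S\rme^{-\rmi tH(\varepsilon)}-S)P_k\|\le 4\|S\|c_k\varepsilon$. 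Claim (i) follows by summing over finitely many $k$, and (ii) from the same commutator chain using $S=SQ=QS$.

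Your route uses $\tilde S(\varepsilon)=USU^\dagger$, the splitting $\psi=U\psi+(\psi-U\psi)$, and $\rme^{-\rmi tH(\varepsilon)}U=U\rme^{-\rmi t\tilde H(\varepsilon)}$. It keeps only $(U-\mathbb{I})Q$ in play, so part (i) never needs $U^\dagger$. The cost is roughly a factor $2$ in $C_\psi$ and a separate argument for (ii).
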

\begin{proof} 
Let us introduce the following family of self-adjoint operators,
\begin{equation}
    \tilde{H}(\varepsilon)=U(\varepsilon)^\dagger H(\varepsilon)U(\varepsilon)=\sum_{n\ge1}h_n(\varepsilon)P_n(0), \quad \varepsilon \in [0,\varepsilon_0].
\end{equation}
 Since $S$ is robust against $H(\varepsilon)$, according to  Theorem~\ref{thm:main0}, $[S, P_m(0)]=0$ for all $m \geq 1$. Hence, $[S, \rme^{{\rmi}t\tilde{H}(\varepsilon)}]=0$ for all $t \in \R$ and $\varepsilon \in [0,\varepsilon_0]$.
     Therefore,   by recalling that the $P_n(0)$’s form a family of subprojections of the $P_k$’s, one gets,
         for all $t \in \R$ and $\varepsilon \in [0,\varepsilon_0]$,
\begin{eqnarray}
\fl \qquad    \| (\rme^{{\rmi}tH(\varepsilon)}S\rme^{-{\rmi}tH(\varepsilon)}-S)P_k \|&=& \|  \rme^{{\rmi}tH(\varepsilon)}[ S, \rme^{{-\rmi}tH(\varepsilon)}] P_k \| \nonumber  \\
    &=&\|[ S, \rme^{{-\rmi}tH(\varepsilon)} -\rme^{{-\rmi}t\tilde{H}(\varepsilon)}] P_k\| \nonumber \\
    & \leq & 2 \|S\| \| (\rme^{{-\rmi}tH(\varepsilon)} -\rme^{{-\rmi}t\tilde{H}(\varepsilon)})P_k\| \nonumber\\
    & = & 2 \|S\|\|( U(\varepsilon) \rme^{{-\rmi}t\tilde{H}(\varepsilon)} U(\varepsilon)^\dagger- \rme^{{-\rmi}t\tilde{H}(\varepsilon)} )P_k\| \label{use21}\\
    & = & 2 \|S\| \| U(\varepsilon)[\rme^{{-\rmi}t\tilde{H}(\varepsilon)}, U(\varepsilon)^\dagger ]P_k \| \nonumber\\
    & = & 2 \|S\| \|[\rme^{{-\rmi}t\tilde{H}(\varepsilon)}, U(\varepsilon)^\dagger - \mathbb{I} ]P_k \| \nonumber\\
    & = & 2 \|S\| \| [\rme^{{-\rmi}t\tilde{H}(\varepsilon)}, (U(\varepsilon)^\dagger - \mathbb{I}) P_k]\| \label{use22}\\
    & = & 2 \|S\| \| [U(\varepsilon)^\dagger (U(\varepsilon) - \mathbb{I}) P_k,\rme^{{-\rmi}t\tilde{H}(\varepsilon)} ] \| \nonumber\\
    & \leq & 4  \|S\| \| (U(\varepsilon)- \mathbb{I} )P_k\|\nonumber\\
    & \leq & 4\|S\|c_k\varepsilon, \label{eqnuseful}
\end{eqnarray}
where we used the unitary invariance of the norm,
the relation
\begin{equation}    
{H}(\varepsilon)=U(\varepsilon)\tilde{H}(\varepsilon)U(\varepsilon)^\dagger
\end{equation}
in~\eqref{use21}, 
the commutativity $[\rme^{{-\rmi}t\tilde{H}(\varepsilon)}, P_k]=0$ in~\eqref{use22},
and finally in~\eqref{eqnuseful} we  applied~\eqref{eq:diff}.

Now, we prove $(i)$. By assumption, 
\begin{equation}
  \psi = \sum_{j=0}^d P_{k_j} \psi,  
\end{equation}
 for some finite integer $d$.
Then, for all $\varepsilon\in[0,\varepsilon_0]$,
\begin{equation}
     \sup_{t \in \R} \|{(\rme^{{\rmi}tH(\varepsilon)}S\rme^{-{\rmi}tH(\varepsilon)}-S)\psi}\|\leq\sum_{j=1}^{d}\|{(\rme^{{\rmi}tH(\varepsilon)}S\rme^{-{\rmi}tH(\varepsilon)}-S)P_{k_j}\psi}\|.
\end{equation}
Hence, the bound~\eqref{speedeig} follows immediately by~\eqref{eqnuseful}, with
\begin{equation}
    C_\psi=4\|S\|\|\psi\|\sum_{j=1}^{d}c_{k_j}.
\end{equation}

Now, we prove $(ii)$. Since $S$ is a finite-rank  symmetry, then there is a finite integer $d$ such that
\begin{equation}
    S= \sum_{m=1}^d P_{k_m} S P_{k_m}.
\end{equation}
Therefore, for all $\varepsilon \in [0,\varepsilon_0]$,
\begin{eqnarray}
  \sup_{t \in \R} \|{(\rme^{{\rmi}tH(\varepsilon)}S\rme^{-{\rmi}tH(\varepsilon)}-S)}\| 
 & \leq  2 \|S\| \sum_{m=1}^d  \sup_{t \in \R} \|{( \rme^{-{\rmi}tH(\varepsilon)} - \rme^{-{\rmi}t\tilde{H}(\varepsilon)})P_{k_m}}\|  \nonumber\\
 & \leq 4 \|S\| \sum_{m=1}^d\| (U(\varepsilon)- \mathbb{I})P_{k_m}\|,
 \end{eqnarray}
hence~\eqref{speedfr} follows immediately by~\eqref{eq:diff}, with
\begin{equation}
    C=4\|{S}\|\sum_{m=1}^d c_{k_m}.
\end{equation}
\end{proof}

\subsection{\texorpdfstring{Linear perturbation $H(\varepsilon)= H+ \varepsilon V$}{Linear perturbation H(varepsilon)=H+varepsilon V}}\label{lineardeformationdiff}
Now, we want to present an application of Theorem~\ref{th:linear}\@. First of all, we recall a central result in perturbation theory due to Kato~\cite{Kato1995}.
\begin{theorem}\label{Katoth} Let $H$ be a self-adjoint operator with compact resolvent and spectral resolution~\eqref{eq:spH}. Let $V$ be a Hermitian $H$-bounded operator, with relative bound less than $1$.
Then,
\begin{enumerate}
\item for all $\varepsilon \in [0,1]$, the operator $H(\varepsilon)=H+ \varepsilon V$ is self-adjoint with domain $D(H(\varepsilon))=D(H)$, has compact resolvent, and its spectral decomposition reads
\begin{equation}
H(\varepsilon) \psi= \sum_{n \geq 1}h_n(\varepsilon)P_n(\varepsilon) \psi, \quad \psi \in D(H),
\label{eq:spectraldecthm}
\end{equation}
where $\{h_n(\varepsilon)\}_{n \geq 1}$ are the eigenvalues of $H(\varepsilon)$ and $\{P_n(\varepsilon)\}_{n \geq 1}$ are its finite-rank eigenprojections;
\item for all $n \geq 1$, the maps $ \varepsilon \in [0,1] \mapsto h_n(\varepsilon) \in \R$ and $\varepsilon \in [0,1] \mapsto P_n(\varepsilon) \in  B(\mathcal{H})$ are
analytic, with $h_n(\varepsilon)\neq h_m(\varepsilon)$ for $n\neq  m$ and $\varepsilon\neq0$;
\item the family $\{P_n(0)\}_{n \geq 1}$ is a family of subprojections of $\{P_k\}_{k \geq 1}$, namely, for all $n \geq 1$
there is a unique $k \geq 1$ such that $P_n(0)P_k = P_kP_n(0) = P_n(0)$, so that $\Range(P_n(0)) \subset \Range(P_k )$.
\end{enumerate}
\end{theorem}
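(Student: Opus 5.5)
The plan is to obtain all three items from the Kato--Rellich theorem combined with the analytic perturbation theory of self-adjoint holomorphic families of type (A), following Kato~\cite{Kato1995}.

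\textbf{Item (i).} Since $V$ is Hermitian and $H$-bounded with relative bound $a<1$, the operator $\varepsilon V$ has relative bound $\varepsilon a<1$ for every $\varepsilon\in[0,1]$. Kato--Rellich then gives that $H(\varepsilon)$ is self-adjoint on $D(H)$.

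For compactness of the resolvent, fix $z$ with $|\Im z|$ large and write
\begin{equation}
(H(\varepsilon)-z)^{-1}=(H-z)^{-1}\bigl(\mathbb{I}+\varepsilon V(H-z)^{-1}\bigr)^{-1}.
\end{equation}
The standard estimate $\|V(H-z)^{-1}\|\le a+(a|z|+b)/|\Im z|$ shows that this norm is smaller than $1/\varepsilon$ once $|\Im z|$ is large enough, so the Neumann series for the second factor converges. The right-hand side is then compact, being a compact operator times a bounded one.

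It follows that $H(\varepsilon)$ has purely discrete spectrum with finite multiplicities, which yields~\eqref{eq:spectraldecthm} with finite-rank eigenprojections.

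\textbf{Item (ii).} The family $\varepsilon\mapsto H+\varepsilon V$ extends to complex $\varepsilon$ in a neighbourhood of $[0,1]$. It has constant domain $D(H)$, and $\varepsilon\mapsto H(\varepsilon)\psi$ is holomorphic for $\psi\in D(H)$, so it is a self-adjoint holomorphic family of type (A).

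Rellich's theorem (Kato, Ch.~VII, \S3) then applies. The resolvent is holomorphic in $\varepsilon$, and around each real $\varepsilon_*$ the Riesz projection onto an isolated eigenvalue group splits, because of self-adjointness, into finitely many real-analytic eigenvalue branches with real-analytic eigenprojections. There are no Puiseux singularities in the self-adjoint case.

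Continuing these local branches along $[0,1]$ produces a countable family of analytic functions $\varepsilon\mapsto h_n(\varepsilon)$ and $\varepsilon\mapsto P_n(\varepsilon)$. Branches that coincide identically are merged into one branch, with $P_n(\varepsilon)$ taken to be the sum of the corresponding projections; merging preserves analyticity. For any two remaining branches $n\neq m$, the difference $h_n-h_m$ is a nonzero real-analytic function, so its zeros in $[0,1]$ are isolated. In particular, there is $\varepsilon_{nm}>0$ with $h_n(\varepsilon)\neq h_m(\varepsilon)$ for $0<\varepsilon<\varepsilon_{nm}$.

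\textbf{Item (iii).} By the upper semicontinuity of the spectrum and the continuity of the total Riesz projection $\frac{1}{2\pi\rmi}\oint_{|z-h_k|=r}(z-H(\varepsilon))^{-1}\,\dd z$, every branch satisfies $h_n(0)=h_k$ for exactly one $k$. Moreover, the branches emanating from $h_k$ have projections summing to $P_k$ at $\varepsilon=0$. Since these $P_n(0)$ are mutually orthogonal and their sum is $P_k$, we get $P_n(0)P_k=P_kP_n(0)=P_n(0)$, hence $\Range(P_n(0))\subset\Range(P_k)$, with uniqueness of $k$ because the $P_k$ are orthogonal.

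\textbf{Main obstacle.} The hardest step is the distinctness claim in (ii) for \emph{all} $\varepsilon\neq0$, not just on a punctured neighbourhood of $0$. The local argument above only gives distinctness on $(0,\varepsilon_{nm})$ for each pair, and pairs of branches from different unperturbed levels may cross at some $\varepsilon_*\in(0,1]$.

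To bridge this, I would use that at such a crossing the spectral projection of $H(\varepsilon_*)$ for the common eigenvalue is still $P_n(\varepsilon_*)+P_m(\varepsilon_*)$. So~\eqref{eq:spectraldecthm} and the analyticity of each branch survive, and only literal distinctness fails. I would therefore try to prove distinctness at a crossing point by showing it cannot occur for the branches under consideration, e.g.\ by restricting to a finite set of branches and shrinking the $\varepsilon$-range. I expect this is exactly where the statement as worded must be read with Kato's labelling convention, or with the interval $[0,1]$ replaced by a sufficiently small $[0,\varepsilon_0]$. I would flag this explicitly as the delicate point of the proof rather than claim the unrestricted version.
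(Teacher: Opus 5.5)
Your route (Kato--Rellich, a resolvent identity for compactness, Rellich's theorem for self-adjoint holomorphic families of type (A) with compact resolvent (Kato, Ch.~VII, \S3), and completeness of the eigenvectors at $\varepsilon=0$ for (iii)) is exactly what the paper relies on. The paper gives no proof of its own and simply attributes the statement to Kato. Items (i), (iii) and the analyticity part of (ii) go through as you describe, up to one small repair noted at the end.

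The step you flagged cannot be closed, because the distinctness clause of (ii) is false as stated. Kato's theorem never asserts it: it only provides analytic \emph{repeated} eigenvalues. Take $\mathcal{H}=\ell^2(\mathbb{N})$, $He_k=ke_k$, and $V$ diagonal with $Ve_1=e_1$, $Ve_2=-e_2$, $Ve_k=0$ for $k\ge3$. The branches $1+\varepsilon$ and $2-\varepsilon$ meet at $\varepsilon=1/2$, where $3/2$ is a double eigenvalue. For $\varepsilon\in[0,1]\setminus\{1/2\}$ all eigenvalues are simple. No relabelling rescues (ii): a norm-continuous family of projections has constant rank, so the label carrying the rank-two eigenprojection at $\varepsilon=1/2$ would produce a doubly degenerate eigenvalue for every $\varepsilon$.

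Your proposed fallback, shrinking $[0,1]$ to a small $[0,\varepsilon_0]$, also fails uniformly in $n,m$ when $H$ has infinitely many levels with closing gaps. For $He_k=(\log k)e_k$ and $Ve_k=(-1)^ke_k$, the branches $\log(2j)+\varepsilon$ and $\log(2j+1)-\varepsilon$ cross at $\varepsilon_j=\frac12\log\bigl(1+\frac{1}{2j}\bigr)\to0$. Hence every interval $(0,\varepsilon_0)$ contains a degeneracy.

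What your argument does prove, and what is true, is the pairwise statement. For $n\neq m$, the nonzero real-analytic function $h_n-h_m$ has finitely many zeros in $[0,1]$. So $h_n(\varepsilon)\neq h_m(\varepsilon)$ on some $(0,\varepsilon_{nm})$, and all branches are distinct outside a countable set of $\varepsilon$. The clause should be stated and proved in that form. Note that the proof of Theorem~\ref{thm:compactres} uses the stronger version to verify condition (iii) of Definition~\ref{def:addef}, so it inherits the same issue.

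The small repair in (i): your bound $a+(a|z|+b)/|\Im z|$ tends to $2a$, not $a$, as $|\Im z|\to\infty$. It therefore gives $\varepsilon\|V(H-z)^{-1}\|<1$ only when $2a\varepsilon<1$. Instead take $z=\rmi y$ and use $\|H(H-\rmi y)^{-1}\|\le1$. This yields $\|V(H-\rmi y)^{-1}\|\le a+b/|y|<1$ for $|y|$ large, which covers every $\varepsilon\in[0,1]$.
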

We consider the {linear perturbation} of $H$,
\begin{equation}\label{lineardeformation1}
H(\varepsilon)=H+ \varepsilon V,  \quad \varepsilon \in [0,1],
\end{equation} 
 with $H$ and $V$ being as in Theorem~\ref{Katoth}\@. By making use of Theorems~\ref{th:linear} and~\ref{Katoth}, we get the following result.
\begin{theorem}
\label{thm:compactres}
Let $H$ be a self-adjoint operator with compact resolvent and spectral resolution~\eqref{eq:spH}. Let $V$ be a symmetric $H$-bounded operator.
Let $S \in \{H\}'$ be a robust symmetry against the linear  perturbation $H(\varepsilon)$ in~\eqref{lineardeformation1}.  Then, there exists $\varepsilon_0\in(0,1]$ such that for all $\varepsilon\in[0,\varepsilon_0]$ the following propositions are true:
\begin{enumerate}
\item[(i)] If $\psi \in \mathcal{H}$ belongs to the linear span $D$ of the eigenvectors of $H$,  then there exists a
$C_\psi > 0$ such that 
\begin{equation}\label{speedeig1}
    \sup_{t \in \R} \|{(\rme^{\rmi t (H+\varepsilon V)} S \rme^{-\rmi t (H+\varepsilon V)} - S) \psi}\| \leq C_\psi  \varepsilon ;
\end{equation}
\item[(ii)] If $S$ is a finite-rank operator, then there exists a 
$C > 0$ such that
\begin{equation}\label{speedfr1}
    \sup_{t\in \R} \|{\rme^{\rmi t (H+\varepsilon V)} S \rme^{-\rmi t (H+\varepsilon V)} - S}\| \leq C  \varepsilon .
\end{equation}
\end{enumerate}
\end{theorem}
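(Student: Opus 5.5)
The plan is to reduce the statement to Theorem~\ref{th:linear}. This requires two things: that the linear perturbation $H(\varepsilon)=H+\varepsilon V$ is admissible in the sense of Definition~\ref{def:addef}, and that the block-diagonalizing unitary $U(\varepsilon)$ satisfies the Lipschitz condition~\eqref{eq:diff} on every eigenspace $\Range(P_k)$ for small $\varepsilon$.

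First I will settle the relative bound. The statement only asks $V$ to be $H$-bounded, so I will choose $\varepsilon_1\in(0,1]$ with $\varepsilon_1 a<1$ and regard $\varepsilon V$, for $\varepsilon\le\varepsilon_1$, as a perturbation with relative bound less than one. After rescaling $\varepsilon\mapsto\varepsilon/\varepsilon_1$, Theorem~\ref{Katoth} applies. It gives that $H(\varepsilon)$ has compact resolvent and pure point spectrum, that the eigenvalues $h_n(\varepsilon)$ are analytic and distinct for $\varepsilon\neq0$, and that the projections $P_n(\varepsilon)$ are analytic in norm and are subprojections of the $P_k$ at $\varepsilon=0$.

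Next I will build $U(\varepsilon)$ explicitly, using Kato's transformation function. Since the $P_n(\varepsilon)$ are analytic, I would define $U(\varepsilon)$ as the solution of
\begin{equation}
U'(\varepsilon)=\Bigl(\sum_{n\ge1}[P_n'(\varepsilon),P_n(\varepsilon)]\Bigr)U(\varepsilon),\qquad U(0)=\mathbb{I},
\end{equation}
which intertwines $P_n(0)$ with $P_n(\varepsilon)$. In infinite dimensions the sum over $n$ need not converge in norm, so I will instead build $U(\varepsilon)$ blockwise. On each finite-dimensional $\Range(P_k)$, the $P_n(\varepsilon)$ with $P_n(0)\le P_k$ form a finite group, and their sum $Q_k(\varepsilon)$ is the total projection of the $k$-th cluster. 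This $Q_k(\varepsilon)$ is a Riesz projection over a small circle around $h_k$, valid for $\varepsilon$ small depending on $k$.

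The construction of the unitary then proceeds in two steps. I would take $U(\varepsilon)=\bigoplus_k W_k(\varepsilon)$, where $W_k(\varepsilon)$ first maps $\Range(P_k)$ onto $\Range(Q_k(\varepsilon))$ via Kato's $Q_k(\varepsilon)P_k+(\mathbb{I}-Q_k)(\mathbb{I}-P_k)$ normalized. Then a finite-dimensional analytic transformation function within that range carries the $P_n(0)$ onto the $P_n(\varepsilon)$. Each block is analytic in $\varepsilon$ near $0$, so $\|(U(\varepsilon)-\mathbb{I})P_k\|\le c_k\varepsilon$ on some $[0,\varepsilon_k]$; by continuity on the compact interval, $c_k$ can be chosen for all $\varepsilon\in[0,\varepsilon_0]$. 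The paper states (citing~\cite{rob}) that linear perturbations of compact-resolvent Hamiltonians are admissible, so I can take the existence of a strongly continuous $U(\varepsilon)$ as given. What I really need to check is that this $U$, or a modification of it, is analytic (hence Lipschitz) on each finite-dimensional block.

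The main obstacle is the uniformity issue. The clusters for different $k$ may start interacting at $\varepsilon$-values that shrink as $k\to\infty$, so the blockwise Riesz construction may not yield a single global unitary on a fixed interval. I would try first to sidestep this by using the global analytic eigenprojections from Theorem~\ref{Katoth}(ii), valid for all $\varepsilon\in[0,1]$. I would then define $U(\varepsilon)$ by mapping an orthonormal basis of $\Range(P_n(0))$ onto the analytically continued basis of $\Range(P_n(\varepsilon))$ obtained from the finite-dimensional transformation function of each $P_n(\cdot)$. This is unitary because the $P_n(\varepsilon)$ are mutually orthogonal and complete, and it is analytic on each finite-rank block, so the Lipschitz bound~\eqref{eq:diff} on $\Range(P_k)$ follows since $P_k$ is a finite sum of $P_n(0)$.

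With~\eqref{eq:diff} established, parts (i) and (ii) follow directly from Theorem~\ref{th:linear}(i) and (ii), with $\varepsilon_0$ as above.
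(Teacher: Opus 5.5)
Your proposal is correct, and its skeleton is the paper's. You use Theorem~\ref{Katoth} to get analytic finite-rank eigenprojections $P_n(\varepsilon)$ whose values at $0$ are subprojections of the $P_k$. You then build an intertwining unitary block by block on the $\Range(P_n(0))$, prove a Lipschitz bound on each $P_n(0)$, sum over the finitely many $P_n(0)\le P_k$, and apply Theorem~\ref{th:linear}.

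Where you differ is the intertwiner. The paper uses Kato's pair-of-projections formula $U(\varepsilon)=\sum_n(\mathbb{I}-R_n(\varepsilon))^{-1/2}P_n(\varepsilon)P_n(0)$ together with a binomial-series estimate, which gives the explicit constant $d_n(1+d_n\varepsilon)$. That sum is a unitary only if $\|P_n(\varepsilon)-P_n(0)\|<1$ for every $n$ at once. The ``sufficiently small $\varepsilon$'' it relies on is a priori $n$-dependent, and the paper does not address this.

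Your per-eigenprojection transformation functions, $U_n'=[P_n',P_n]U_n$ with $U_n(0)=\mathbb{I}$, exist on the whole interval where $P_n(\cdot)$ is analytic. So $\sum_n U_n(\varepsilon)P_n(0)$ is unitary for every such $\varepsilon$, since both families are orthogonal and complete. The bound $\|(U(\varepsilon)-\mathbb{I})P_n(0)\|\le 2\varepsilon\sup\|P_n'\|$ then holds with no smallness condition, at the price of a less explicit constant. Strong continuity of this $U$ follows from analyticity on the dense span of eigenvectors plus unitarity. Strictly speaking, the estimate in Theorem~\ref{th:linear} only uses the intertwining relation and the Lipschitz bound anyway.

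Your rescaling, which makes the relative bound of $\varepsilon V$ less than one, also supplies a hypothesis of Theorem~\ref{Katoth} that the statement leaves implicit. The two intermediate constructions you discard (the global ODE with $\sum_n[P_n',P_n]$ and the Riesz-projection clusters) are not needed once you adopt the final one.
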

\begin{proof}
The strategy of the proof is to show that the assumptions of Theorem~\ref{th:linear} are satisfied. In particular, we will show that $H(\varepsilon)$ is an admissible perturbation of $H$ and that property~\eqref{eq:diff} is verified for $\varepsilon\in[0,\varepsilon_0]$, for some $\varepsilon_0>0$.
In fact, according to Theorem~\ref{Katoth}, for all $\varepsilon \in [0,1]$ the operator $H(\varepsilon)$ has compact resolvent, with spectral decomposition~\eqref{eq:spectraldecthm}.

Moreover, $\varepsilon \mapsto h_n(\varepsilon)$ and $\varepsilon  \mapsto P_n(\varepsilon)$ are analytic functions, then conditions $(i)$--$(iii)$ of Definition~\ref{def:addef} are satisfied. 
Now, we show  that condition $(iv)$ of Definition~\ref{def:addef} is also true. 
Following Kato~\cite[sections 4.6 and 6.8]{Kato1995}, we define  
the operator $R_n(\varepsilon)=(P_n(\varepsilon)-P_n(0))^2$, for $n\ge1$. The operator $(\mathbb{I} - R_n (\varepsilon))^{-1/2}$ is well defined for $\|R_n(\varepsilon)\|<1$ and then   
the operator 
\begin{equation}
    U(\varepsilon)=\sum_{n\ge1}(\mathbb{I}-R_n(\varepsilon))^{-1/2}P_n(\varepsilon)P_n(0),
\end{equation}
is well defined and analytic for $\varepsilon$ sufficiently small. Moreover, $U(\varepsilon)$ is a family of unitary operators and $P_n(\varepsilon)=U(\varepsilon)P_n(0)U(\varepsilon)^\dagger$.

Now, we prove property~\eqref{eq:diff}. One has
\begin{eqnarray}
    \|{(U(\varepsilon)-\mathbb{I})P_n(0)}\|&=&\|(\mathbb{I}-R_n(\varepsilon))^{-1/2}P_n(\varepsilon)P_n(0)-P_n(0)\|\nonumber\\
    &\leq&\|{(\mathbb{I}-R_n(\varepsilon))^{-1/2}P_n(\varepsilon)P_n(0)-P_n(\varepsilon)P_n(0)}\|\nonumber\\
    &&{}+\|{P_n(\varepsilon)P_n(0)-P_n(0)}\|\nonumber\\
    &\leq&\|{P_n(\varepsilon)-P_n(0)}\|+\|{(\mathbb{I}-R_n(\varepsilon))^{-1/2}-\mathbb{I}}\|,
\end{eqnarray}
moreover, since the map $\varepsilon \mapsto P_n(\varepsilon)$ is analytic,  there exist $d_n>0$ such that
\begin{equation}
    \|{P_n(\varepsilon)-P_n(0)}\|\leq d_n\varepsilon
\end{equation}
and 
\begin{equation}
    \|{R_n(\varepsilon)}\|= \|{(P_n(\varepsilon)-P_n(0))^2}\|\leq d_n^2\varepsilon^2.
\end{equation}
Moreover,
 \begin{eqnarray}
     \fl \qquad \|{(\mathbb{I}-R_n(\varepsilon))^{-1/2}-\mathbb{I}}\|&\leq&{\sum_{j\ge1}\abs{-1/2\choose j}\|{R_n(\varepsilon)}\|^j} = \sum_{j\ge1}{2j\choose j}\norm{\frac{R_n(\varepsilon)}{4}}^j \nonumber\\
     & \leq & \sum_{j\ge1}{2j\choose j}\left(\frac{d_n^2  \varepsilon ^2}{4}\right)^j =\frac{1-\sqrt{1-(d_n\varepsilon)^2}}{\sqrt{1-(d_n\varepsilon)^2}}\leq d^2_n\varepsilon^2 ,
 \end{eqnarray}
 for $\varepsilon$ sufficiently small.
Now, recall that the $P_n(0)$'s are subprojections of the finite-rank eigenprojections $P_k$'s, whence
$P_k = \sum_{j=1}^J P_{n_j}(0)$ for some integer $J$.
Thus, there exist $c_n>0$ and $\varepsilon_0\in(0,1]$ such that, for all  $\varepsilon \in [0,\varepsilon_0]$,
\begin{equation}
    \fl \qquad \|{(U(\varepsilon)-\mathbb{I})P_k}\| \leq \sum_{j=1}^J \|{(U(\varepsilon)-\mathbb{I})P_{n_j}(0)}\|\leq \sum_{j=1}^J d_{n_j} (1 + d_{n_j}\varepsilon)  \varepsilon \leq c_k\varepsilon.
\end{equation}
Therefore, we can apply Theorem~\ref{th:linear} and the claims follow.
\end{proof}
\section{Uniformly bounded perturbations}\label{sec:bounded}

In this section, we study the wandering range of completely robust symmetries, 
that is, the conserved quantities which remain robust against all admissible 
perturbations of the Hamiltonian. According to Theorem~\ref{th:completely_robust}, 
the completely robust symmetries are precisely the operators belonging to the 
bicommutant of $H$. In the following theorem, we show that, when the perturbations 
are bounded, the wandering range is of order $\varepsilon$ in the norm topology, that is uniformly in the input state $\psi$.

\begin{theorem}\label{th:bicomm}
Let $H$ be a self-adjoint operator with pure point spectrum $\{h_k\}_{k\ge1}$ and nonvanishing minimal spectral gap, i.e.
\begin{equation}
\eta = \inf_{k \neq \ell} |h_k - h_\ell| > 0 .
\end{equation}
Let $H(\varepsilon)=H+\varepsilon V(\varepsilon)$ with 
$\varepsilon\in [0,1]$ be a uniformly bounded self-adjoint perturbation of $H$, with bound
\begin{equation}\label{eq:uniform_bound}
v= \sup_{\varepsilon\in[0,1]}\| V(\varepsilon)\| < +\infty.
\end{equation}

If $S$ belongs to the bicommutant of $H$, i.e.
\begin{equation}
S \in \{H\}''=\{ A \in  B(\mathcal{H}) : [A,B]=0, \forall B \in \{H\}' \},
\end{equation}
then its wandering range is bounded by
\begin{equation}\label{eq:est}
\sup_{t \in \mathbb{R}}
\|
\rme^{\rmi t H(\varepsilon)} S \rme^{-\rmi t H(\varepsilon)} - S
\|
\le
\beta  \frac{ v }{\eta} \|S\| \varepsilon,\quad \mathrm{for} \quad 0\leq\varepsilon\leq\frac{\eta}{v \rho},
\end{equation}
where $\beta\approx15.3$ and $\rho\approx34.8$ are defined in Table~\ref{tab:constants}\@.
\end{theorem}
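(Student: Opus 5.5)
The plan is to construct an \emph{eternal block-diagonal approximation} of the perturbed dynamics. By the introduction, this is the content of Theorem~\ref{th:eternal}, which rests on Theorem~\ref{thm:fund}. Concretely, for $\varepsilon\le \eta/(v\rho)$ I look for a unitary $W(\varepsilon)$ and a bounded self-adjoint $D(\varepsilon)$ with the following properties:
\begin{equation}
H(\varepsilon)=W(\varepsilon)\bigl(H+\varepsilon D(\varepsilon)\bigr)W(\varepsilon)^\dagger,\qquad [D(\varepsilon),P_k]=0\ \ \forall k,\qquad \|W(\varepsilon)-\mathbb{I}\|\le \frac{\beta}{4}\,\frac{v}{\eta}\,\varepsilon .
\end{equation}
These would come from a Schrieffer--Wolff/KAM iteration. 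At each step one removes the off-diagonal part $X_{\rm od}=\sum_{k\ne\ell}P_kXP_\ell$ of the current perturbation by solving the homological equation $[H,G]=X_{\rm od}$. The formal solution is $P_kGP_\ell=P_kX P_\ell/(h_k-h_\ell)$, and the task is to show that $\|G\|\le c\|X\|/\eta$ with a universal constant $c$, independent of the number of eigenvalues. This is Theorem~\ref{l:61}; I would prove it using a Fourier-type integral representation $G=\int f(s)\,\rme^{\rmi sH}X\rme^{-\rmi sH}\,\dd s$, with $\hat f(\omega)=1/\omega$ for $|\omega|\ge\eta$ and $f\in L^1$ of norm $\sim 1/\eta$, in the Bhatia--Davis--McIntosh style. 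After that, one conjugates by $\rme^{\varepsilon G}$, collects the generated higher-order terms as formal power series in $\varepsilon$, and bounds their coefficients by Catalan numbers. This yields convergence for $\varepsilon v/\eta<1/\rho$ and the explicit constants.

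Assuming this decomposition, the rest is algebraic. Since $H+\varepsilon D(\varepsilon)$ commutes with every $P_k$, every spectral projection of it, and hence $\rme^{-\rmi t(H+\varepsilon D)}$, lies in the von Neumann algebra generated by the $P_k$'s, which is $\{H\}'$. Indeed, the $P_k$ are projections in $\{H\}'$, and a bounded operator commuting with all $P_k$ commutes with all spectral projections of $H$ because the spectrum is pure point. Since $S\in\{H\}''$, we get $[S,\rme^{-\rmi t(H+\varepsilon D)}]=0$ for all $t$. Writing $\rme^{-\rmi tH(\varepsilon)}=W\rme^{-\rmi t(H+\varepsilon D)}W^\dagger$, the computation of Theorem~\ref{th:linear} goes through verbatim, now without projecting onto $P_k$:
\begin{equation}
\|\rme^{\rmi tH(\varepsilon)}S\rme^{-\rmi tH(\varepsilon)}-S\|\le 2\|S\|\,\|W\rme^{-\rmi t(H+\varepsilon D)}W^\dagger-\rme^{-\rmi t(H+\varepsilon D)}\|\le 4\|S\|\,\|W-\mathbb{I}\|.
\end{equation}
This bound holds uniformly in $t$, and combined with the estimate on $\|W-\mathbb{I}\|$ it gives \eqref{eq:est}.

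The main obstacle is the convergent construction of $W(\varepsilon)$ in infinite dimensions with unbounded $H$. The finite-dimensional bound \eqref{eq:bounfidim} carries a $\sqrt d$ factor, which must be eliminated. So the crucial point is a dimension-free bound for the homological equation, together with careful domain bookkeeping: $W$ must preserve $D(H)$, and $[H,G]$ must make sense, which holds because $G$ maps $D(H)$ into itself. I would first establish the integral-kernel solution of the commutator equation. I would then run the iteration with generating functions whose majorant satisfies a quadratic equation $y=a+by^2$, whose solution is the Catalan generating series. This fixes the radius $\rho$ and the constant $\beta$ of Table~\ref{tab:constants}.
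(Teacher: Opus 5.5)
Your reduction is the paper's. There, Theorem~\ref{th:eternal} is derived from Theorem~\ref{thm:fund} via $\|\rme^{-\rmi tH(\varepsilon)}-W^\dagger\rme^{-\rmi tH(\varepsilon)}W\|\le 2\|W-\mathbb{I}\|$, and the condition $S\in\{H\}''$ contributes a further factor $2\|S\|$. This is exactly your $4\|S\|\,\|W-\mathbb{I}\|$, with the same target $\|W-\mathbb{I}\|\le\frac{\beta}{4}\frac{v}{\eta}\varepsilon$, and the KAM/Catalan construction of $W$ is also the paper's.

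One slip: the von Neumann algebra generated by the $P_k$ is the abelian algebra $\{P_k\}''=\{H\}''$, which in general does not contain $\rme^{-\rmi t(H+\varepsilon D)}$. What you need, and what your justification actually proves, is that this unitary commutes with every $P_k$. That is, it lies in $\{P_k\}'=\{H\}'$.

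The genuine difference is how you get the dimension-free bound for the homological equation (Lemma~\ref{l:61}).
\begin{itemize}
\item The paper's argument is elementary. It shows $\|X\psi\|^2\le\|B\|^2\sum_\ell\|P_\ell\psi\|^2\sum_{k\neq\ell}|h_k-h_\ell|^{-2}$. Since the eigenvalues lie on a line with gaps at least $\eta$, the inner sum is at most $2\sum_{j\ge1}(j\eta)^{-2}=\pi^2/(3\eta^2)$, which gives the constant $\pi/\sqrt3$. Domain invariance and the relative bound come from the blockwise identity $-\rmi HX=\{B\}-\rmi XH$.
\item Your Fourier representation instead rests on the extremal problem of Sz.-Nagy and Bhatia--Davis--Koosis. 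Its optimal value gives $\|G\|\le\frac{\pi}{2\eta}\|X\|$, a sharper constant, at the price of importing a nontrivial harmonic-analysis fact. Domain invariance still follows, because the blocks of your $G$ coincide with those of the paper's solution.
\item Keep $X$ inside the integral, as you wrote, and take $f$ odd. Then $\hat f(0)=0$ and $G$ is anti-Hermitian with vanishing diagonal blocks. Applying the kernel to $X_{\rm od}$ instead would cost $\|X_{\rm od}\|\le 2\|X\|$, i.e. a constant $\pi/\eta$, which is worse than the paper's.
\item With $\pi/2$ in place of $\pi/\sqrt3$, the paper's bookkeeping gives $\rho=2\pi\alpha\approx30.1$ and $\beta=8\pi\alpha(\rme^{1/(2\alpha)}-1)\approx13.2$. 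These still imply \eqref{eq:est}. So your route does not reproduce the table's constants, as you claim; it improves them.
\end{itemize}

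Two steps in your sketch need the paper's specifics.
\begin{itemize}
\item $B_s$ contains nested commutators of every depth $n$, so the majorant is not automatically quadratic. The paper closes the induction with three ingredients:
  \begin{itemize}
  \item the ansatz $\frac{\pi}{\sqrt3\,\eta}\|B_s\|\le\alpha^{s-1}b^sd_s$;
  \item the inequality $\sum_{|\ell|=s}d_{\ell_1}\cdots d_{\ell_n}\le d_s$ of Theorem~\ref{th:property1};
  \item the choice $(\alpha+1)(\rme^{2/\alpha}-1)=3$, which makes the sum over depths collapse to $1$.
  \end{itemize}
  This is where $\rho$ and $\beta$ actually come from.
\item Since $\varepsilon\mapsto V(\varepsilon)$ may be arbitrarily irregular, you cannot expand in powers of $\varepsilon$ directly. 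Instead, freeze $V(\xi)$, expand in the coupling alone, and evaluate at $\xi=\varepsilon$, as the paper does with $K(\varepsilon,\xi)$.
\end{itemize}
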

\begin{remark}
If the perturbation is linear, namely $H(\varepsilon)=H+\varepsilon V$ with 
$V=V^\dagger \in {B}(\mathcal{H})$, then $V(\varepsilon)=V$ and
$v=\|V\|$.
\vspace{1em}
\end{remark}
\begin{remark}
The perturbation $H(\varepsilon)$ considered in Theorem~\ref{th:bicomm} satisfies the continuity condition~\eqref{eq:strongcont}, and in fact it is norm continuous,
\begin{equation}
    \|H(\varepsilon) - H \| = \varepsilon \|V(\varepsilon)\| \leq v \varepsilon,
\end{equation}
for $\varepsilon\in[0,1]$. However, it is \emph{not} necessarily admissible (see Definition~\ref{def:addef}). Indeed, if $H$ has an eigenvalue with infinite degeneracy, the perturbed Hamiltonian $H+\varepsilon V$ may develop a continuous spectral bands.

In this sense, Theorem~\ref{th:bicomm} substantially broadens the scope of Theorem~\ref{th:completely_robust} by encompassing a  wider class of perturbations. Specifically, it establishes that every element in the bicommutant of a Hamiltonian $H$ with pure point spectrum and nonvanishing minimal spectral gap ---allowing, in particular, for infinitely degenerate eigenspaces--- is robust under arbitrary bounded perturbations, with the wandering range uniformly controlled at order $\varepsilon$.

Moreover, the norm estimate~\eqref{eq:est} extends to unbounded Hamiltonians and nonlinear perturbations the analogous finite-dimensional bound recalled in~\eqref{eq:bounfidim}. At the same time, it strengthens that result by eliminating any dependence on the number $d$ of distinct eigenvalues.

\vspace{1em}
\end{remark}
\begin{remark}
The exact values of $\beta$ and $\rho$ in~\eqref{eq:est} are
\begin{equation}
    \beta= \frac{16 \pi}{\sqrt{3}} \alpha ( \rme^{\frac{1}{2\alpha}} - 1 ), \qquad 
    \rho=\frac{4\pi\alpha}{\sqrt{3}},
\end{equation}
where $\alpha \approx 4.79$ is the unique solution of the transcendental equation
\begin{equation}
(\alpha + 1)( \rme^{\frac{2}{\alpha}} - 1) = 3 .
\end{equation}
For convenience, the definition of these constants, together with other relevant quantities entering the bounds, are gathered in Table~\ref{tab:constants}\@.
\vspace{1em}
\end{remark}
\begin{table}[t]
\centering
\begin{tabular}{c c c}
\hline
Symbol & Definition & Numerical Value\\
\hline
\rule{0pt}{0.5cm}
$\alpha$ &
$\displaystyle (\alpha+1)(\rme^{\frac{2}{\alpha}}-1)=3$&
$<4.80$ \\[0.4cm]

$\beta$ &
$\displaystyle 
\beta=\frac{16\pi\alpha}{\sqrt{3}}(\rme^{\frac{1}{2\alpha}}-1)
$ &
$<15.3$ \\[0.4cm]

$\rho$ &
$\displaystyle \rho=\frac{4\pi\alpha}{\sqrt{3}}$ &
$<34.8$ \\[0.4cm]

$v$ &
$\displaystyle v=\sup_{\varepsilon\in[0,1]} \|V(\varepsilon)\|$ &
$<\infty$

\\[0.4cm]

$\eta$ &
$\displaystyle \eta=\inf_{k,\ell \geq 1 : k\neq \ell} |h_k - h_\ell|$ &
$> 0$

\\[0.4cm]

$b$ &
$\displaystyle b=\frac{\pi}{\sqrt{3}}\frac{v}{\eta}$ &
$\geq 0$
\\ [0.4cm]

\hline
\end{tabular}
\caption{Symbols appearing in Theorem~\ref{th:bicomm}\@.}
\label{tab:constants}
\end{table}
\begin{remark}
The condition~\eqref{eq:uniform_bound} requires a uniform bound on the incremental ratio of the perturbation $H(\varepsilon)$. This does not necessarily imply that $H(\varepsilon)$ is differentiable at $\varepsilon=0$, as shown in the following example.
\vspace{1em}
\end{remark}

\begin{example}
\label{ex:regularity}
    Let $V = V^\dagger \in B(\mathcal{H})$, and consider the following perturbation of $H$,
    \begin{equation}
        H+\varepsilon V(\varepsilon) = \left\{
        \begin{array}{ll}
        H & \mathrm{if}\ \varepsilon = 0, \\
          H+\varepsilon\sin\!\left(\frac{1}{\varepsilon}\right) V & \mathrm{if}\ \varepsilon \neq 0.
        \end{array}
        \right.
        \label{eq:57}
    \end{equation}
    This perturbation is clearly uniformly bounded, since
    \begin{equation}
        \|{V(\varepsilon)}\|\leq\left|{\sin\!\left( \frac{1}{\varepsilon} \right)}\right| \|V\| \leq \|V\|.
    \end{equation}
    However, $H(\varepsilon)$ is not differentiable at $\varepsilon = 0$. Indeed, for $\varepsilon\neq 0$,
    \begin{equation}
        H'(\varepsilon) = \left( \sin\!\left( \frac{1}{\varepsilon} \right) - \frac{1}{\varepsilon} \cos\!\left( \frac{1}{\varepsilon} \right) \right) V,
    \end{equation}
    which diverges as $\varepsilon \to 0$. Thus, uniform boundedness does not imply differentiability of $H(\varepsilon)$.

    Notice that, more generally, there are no constraints on the regularity of  $\varepsilon\mapsto V(\varepsilon)$, which may even be  \emph{everywhere discontinuous}, as far as it remains uniformly bounded. For instance, $\sin\!\left(\frac{1}{\varepsilon}\right)$ in~\eqref{eq:57} can be replaced by the Dirichlet function $\bm{1}_{\mathbb{Q}}(\varepsilon)$, while Theorem~\ref{th:bicomm} continues to hold.
\end{example}

The main idea of the proof is the construction of a new Hamiltonian 
$H+\varepsilon\hat{V}(\varepsilon)$ which generates a dynamics close to the perturbed one and that commutes with $H$. This idea is encoded in the following theorem, which is of independent interest.

\begin{theorem}[Eternal block-diagonal approximation]\label{th:eternal}
Let $H$ be a self-adjoint operator with pure point spectrum $\{h_k\}_{k\ge1}$ and nonvanishing minimal spectral gap
\begin{equation}
\eta = \inf_{k \neq \ell} |h_k - h_\ell| > 0 .
\end{equation}
Let $H(\varepsilon)=H+\varepsilon V(\varepsilon)$ with $\varepsilon\in [0,1]$ be a uniformly bounded self-adjoint perturbation of $H$, with bound
\begin{equation}
v= \sup_{\varepsilon\in[0,1]}\| V(\varepsilon)\| < +\infty.
\end{equation}

Then, there exists a family of bounded operators $\hat{V}(\varepsilon)$ such that
\begin{enumerate}
    \item $\hat{V}(\varepsilon)=\hat{V}(\varepsilon)^\dagger \in \{H\}'$;
    \item for all $t\in\mathbb{R}$,
    \begin{equation}\label{eternal_bound}
        \|
        \rme^{-\rmi t(H+\varepsilon V(\varepsilon))}
        -
        \rme^{-\rmi t(H+\varepsilon\hat{V}(\varepsilon))}
        \|
        \leq
        \frac{\beta}{2}\frac{v}{\eta}\varepsilon,
        \quad \mathrm{for} \quad 0\leq\varepsilon\leq\frac{\eta}{v \rho},
    \end{equation}
    where $\rho$ and $\beta$ are defined in Table~\ref{tab:constants}\@.
\end{enumerate}
The operator $H+\varepsilon\hat{V}(\varepsilon)$ is called the \emph{eternal block-diagonal approximation} of the perturbed Hamiltonian $H+\varepsilon V(\varepsilon)$.
\end{theorem}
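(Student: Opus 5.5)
The plan is to build a unitary Schrieffer--Wolff transformation $W(\varepsilon)=\rme^{\varepsilon X(\varepsilon)}$, with $X(\varepsilon)$ bounded and anti-Hermitian, that exactly block-diagonalizes the perturbed Hamiltonian:
\begin{equation}
W(\varepsilon)^\dagger\,(H+\varepsilon V(\varepsilon))\,W(\varepsilon)=H+\varepsilon\hat V(\varepsilon),
\end{equation}
with $\hat V(\varepsilon)=\hat V(\varepsilon)^\dagger\in\{H\}'$. This is the content of Theorem~\ref{thm:fund}, which I take as available: it supplies $W$ together with the bound $\|W(\varepsilon)-\mathbb I\|\le \frac{\beta}{4}\frac{v}{\eta}\varepsilon$ for $\varepsilon\le \eta/(v\rho)$. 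The constants come from the iteration described below.

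With such a $W$ in hand the estimate is immediate. Write $U_\varepsilon(t)=\rme^{-\rmi t(H+\varepsilon V)}$ and $\hat U_\varepsilon(t)=\rme^{-\rmi t(H+\varepsilon\hat V)}$. Then $U_\varepsilon(t)=W\hat U_\varepsilon(t)W^\dagger$, and therefore
\begin{equation}
\|U_\varepsilon(t)-\hat U_\varepsilon(t)\|=\|[W,\hat U_\varepsilon(t)]\|=\|[W-\mathbb I,\hat U_\varepsilon(t)]\|\le 2\|W-\mathbb I\|\le\frac{\beta}{2}\frac{v}{\eta}\varepsilon,
\end{equation}
uniformly in $t$. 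This is~\eqref{eternal_bound}. The point is that the only $t$-dependence sits inside a unitary, so the bound is eternal.

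To construct $W$, I first solve the homological equation. For bounded $B$, let $\mathcal D(B)=\sum_k P_kBP_k$ denote its block-diagonal part. We want a bounded $X$ with $[H,X]=B-\mathcal D(B)$ (up to sign conventions) and $\|X\|\le c\|B\|/\eta$. Formally $X=\sum_{k\ne\ell}P_kBP_\ell/(h_k-h_\ell)$. The off-diagonal matrix entries are bounded by $1/\eta$, but Schur multipliers are not automatically bounded in norm. So I would use the integral representation
\begin{equation}
X=\int_{\R} f(s)\,\rme^{\rmi sH}B\,\rme^{-\rmi sH}\,\dd s,
\end{equation}
where $f\in L^1(\R)$ has Fourier transform equal to $1/x$ for $|x|\ge\eta$. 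Choosing the optimal such $f$ gives the factor $\pi/(\sqrt3\,\eta)$, which explains $b=\pi v/(\sqrt3\eta)$ in Table~\ref{tab:constants}. This step, corresponding to Theorem~\ref{l:61}, is also where the unboundedness of $H$ has to be handled: one checks that $X$ maps $D(H)$ into itself and that the commutator identity holds on $D(H)$.

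Next comes the iteration. Expand $\rme^{-\varepsilon X}(H+\varepsilon V)\rme^{\varepsilon X}$ as a formal power series in $\varepsilon$, with $X=\sum_n\varepsilon^{n}X_n$. At each order, choose $X_n$ by the homological equation so that the off-diagonal part cancels. The nested commutators produce recursive norm bounds of convolution type, $a_{n+1}\lesssim b\sum_{j}a_ja_{n-j}$, which are dominated by Catalan numbers. Their generating function $(1-\sqrt{1-4z})/(2z)$ converges for $4bz<1$, and this yields the radius $\varepsilon\le\eta/(v\rho)$. The specific $\alpha$ with $(\alpha+1)(\rme^{2/\alpha}-1)=3$ comes from optimizing the exponential factors that arise when bounding $\rme^{\varepsilon X}-\mathbb I$ and the BCH remainders; it gives $\|W-\mathbb I\|\le \rme^{\varepsilon\|X\|}-1$ and hence $\beta$.

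The main obstacle is proving convergence of the series with uniform constants. Because $V(\varepsilon)$ is only bounded, not continuous, $\varepsilon$ must be treated as a fixed parameter in a contraction-type iteration rather than as an analytic expansion variable. Equivalently, one can run the KAM iteration at fixed $\varepsilon$ with a majorant-series argument. Controlling the combinatorics of this iteration is the hard part.
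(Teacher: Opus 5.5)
Your proof is correct and is the paper's own argument. Granting Theorem~\ref{thm:fund}, you write $\rme^{-\rmi t(H+\varepsilon V(\varepsilon))}=W(\varepsilon)\,\rme^{-\rmi t(H+\varepsilon\hat V(\varepsilon))}\,W(\varepsilon)^\dagger$ and bound the commutator with $W(\varepsilon)-\mathbb{I}$ by $2\|W(\varepsilon)-\mathbb{I}\|\le\frac{\beta}{2}\frac{v}{\eta}\varepsilon$, uniformly in $t$.

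Your side remarks on building $W$ are not needed here, and they misattribute two constants. The paper's $\pi/(\sqrt{3}\,\eta)$ comes from the elementary bound $\sup_\ell\sum_{k\neq\ell}|h_k-h_\ell|^{-2}\le 2\sum_{j\ge1}(j\eta)^{-2}=\pi^2/(3\eta^2)$; an optimal Fourier multiplier would actually give the smaller $\pi/(2\eta)$. Likewise, $\alpha$ is fixed by closing the induction on $\|B_s(\varepsilon)\|$ in Lemma~\ref{l:2}, not by the estimate of $W(\varepsilon)-\mathbb{I}$.
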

\begin{remark}
   Since $H+\varepsilon\hat{V}(\varepsilon)$ commutes with the unperturbed Hamiltonian $H$, it is block-diagonal in the eigenbasis of $H$. 
    Moreover, the above estimate shows that the corresponding dynamics approximates the perturbed evolution uniformly in time, which motivates the term \emph{eternal block-diagonal} approximation~\cite{Eternal}. 
\end{remark}
We postpone the proof of Theorem~\ref{th:eternal} to the end of this section. Assuming its validity, we can now prove Theorem~\ref{th:bicomm}\@.
\begin{proof}[Proof of Theorem~\ref{th:bicomm}]
    Take $S\in\{H\}''$. Then, we have
    \begin{eqnarray}
    \|{\rme^{\rmi t(H+\varepsilon V(\varepsilon))}S\rme^{-\rmi t(H+\varepsilon V(\varepsilon))}-S}\|&=\|{ \rme^{\rmi t(H+\varepsilon V(\varepsilon))}[S,\rme^{-\rmi t(H+\varepsilon V(\varepsilon))}]}\|\nonumber\\
    &=\|{[S,\rme^{-\rmi t(H+\varepsilon V(\varepsilon))}-\rme^{-\rmi t(H+\varepsilon\hat{V}(\varepsilon))}]}\|\label{step_rob}\\
    &\leq2\|{S}\|\|{\rme^{-\rmi t(H+\varepsilon V(\varepsilon))}-\rme^{-\rmi t(H+\varepsilon\hat{V}(\varepsilon))}}\|\nonumber\\
    &\leq\frac{\beta\|{S}\| v }{\eta}\varepsilon,
    \label{step-bound}
\end{eqnarray}
where at step~\eqref{step_rob} we have used the fact that $S\in\{H\}''$, i.e.~$S$ commutes with all the operators in $\{H\}'$, in particular with $\hat{V}(\varepsilon)$; and at step~\eqref{step-bound} we  used the bound~\eqref{eternal_bound} on the evolutions of Theorem~\ref{th:eternal}\@.
\end{proof}
In order to construct the block-diagonal approximation appearing in Theorem~\ref{th:eternal}, we conjugate the perturbed Hamiltonian with a unitary operator close to the identity. This allows us to transform $H+\varepsilon V(\varepsilon)$ into an operator of the form $H+\varepsilon\hat V(\varepsilon)$, where $\hat V(\varepsilon)$ commutes with $H$. The precise statement is given in the following theorem.
\begin{theorem}\label{thm:fund}
Let $H$ be a self-adjoint operator with pure point spectrum $\{h_k\}_{k\ge1}$ and nonvanishing minimal spectral gap
\begin{equation}
\eta = \inf_{k \neq \ell} |h_k - h_\ell| > 0 .
\end{equation}
Let $H(\varepsilon)=H+\varepsilon V(\varepsilon)$ with $\varepsilon\in [0,1]$ be a uniformly bounded self-adjoint perturbation of $H$, with bound
\begin{equation}
v= \sup_{\varepsilon\in[0,1]}\| V(\varepsilon)\| < +\infty.
\end{equation}

Then, there exist two families of bounded operators, $W(\varepsilon)$ and $\hat{V}(\varepsilon)$, such that
\begin{enumerate}
    \item $\hat{V}(\varepsilon)=\hat{V}(\varepsilon)^\dagger \in \{H\}'$;
    \item $W(\varepsilon)$ is unitary;
    \item $W(\varepsilon)D(H)\subseteq D(H)$;
    \item $W(0)=\mathbb{I}$;
    \item for all $\psi\in D(H)$,
    \begin{equation}
    W(\varepsilon)^\dagger(H+\varepsilon V(\varepsilon))W(\varepsilon)\psi
    =
    (H+\varepsilon\hat{V}(\varepsilon))\psi ;
    \label{eq:main}
    \end{equation}
    \item $W(\varepsilon)$ obeys the bound
    \begin{equation}\label{ordereps}
        \|W(\varepsilon)-\mathbb{I}\|
        \leq
        \frac{\beta}{4}\frac{v}{\eta}\varepsilon,
       \quad \mathrm{for} \quad 0\leq\varepsilon\leq\frac{\eta}{v \rho}, 
    \end{equation}
    where $\rho$ and $\beta$ are defined in Table~\ref{tab:constants}\@.
\end{enumerate}
\end{theorem}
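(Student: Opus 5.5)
We construct $W(\varepsilon)$ as a product of exponentials of anti-Hermitian generators, by a quantum KAM / Schrieffer--Wolff iteration driven by the homological equation. The basic tool is the solution of $[H,X]=B$ for bounded off-diagonal $B$. Write $\mathcal{D}(A)=\sum_k P_kAP_k$ for the block-diagonal part, so $\mathcal{D}(A)\in\{H\}'$, and $\mathcal{O}(A)=A-\mathcal{D}(A)$ for the off-diagonal part. Given a bounded off-diagonal $B$, we seek bounded $X=\Gamma(B)$ with $P_kXP_\ell=P_kBP_\ell/(h_k-h_\ell)$ for $k\neq\ell$ and $P_kXP_k=0$. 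Then $X$ maps $D(H)$ into $D(H)$ and $[H,X]\psi=B\psi$ for $\psi\in D(H)$. The naive entrywise bound fails in infinite dimension, so we obtain a norm bound from an integral representation. We use a function $f\in L^1(\mathbb R)$ with $\hat f(\omega)=1/\omega$ for $|\omega|\ge\eta$, and set
\[
X=\int f(t)\,\rme^{\rmi tH}B\rme^{-\rmi tH}\,\dd t .
\]
This gives $\|\Gamma(B)\|\le \frac{c}{\eta}\|B\|$. An optimal choice of this kind (Bhatia--Davis--McIntosh type) gives $c=\pi/\sqrt3$ for Hermitian or anti-Hermitian $B$, which matches the constant $b=\frac{\pi}{\sqrt3}\frac v\eta$ in Table~\ref{tab:constants}. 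If $B$ is Hermitian, $X$ is anti-Hermitian, so $\rme^{X}$ is unitary and preserves $D(H)$ (via the Duhamel/commutator identity, using that $[H,X]$ is bounded). This step is the extension to unbounded $H$ referred to as Theorem~\ref{l:61}.

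The iteration runs as follows. Set $V_0=\varepsilon V(\varepsilon)$. At step $j$ we have $H+D_j+R_j$, with $D_j\in\{H\}'$ bounded and Hermitian and $R_j$ bounded. We solve $[H,X_j]=\mathcal{O}(R_j)$ and conjugate by $\rme^{X_j}$ (the sign is chosen to cancel $\mathcal{O}(R_j)$ to first order). Expanding $\rme^{-X}A\rme^{X}=\sum_n \frac{(-1)^n}{n!}\mathrm{ad}_X^n A$, with $[H,X_j]$ bounded so every term is bounded on $D(H)$, the first-order off-diagonal term cancels. The new remainder consists of commutator terms of order $\|X_j\|\,\|R_j\|$ and higher. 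Alternatively, and as the paper suggests, we first expand everything as a formal power series in $\varepsilon$. We define $W=\rme^{X}$ with $X=\sum_n\varepsilon^nX_n$ and $\hat V=\sum_n\varepsilon^{n-1}\hat V_n$. Matching orders gives the recursion $[H,X_n]=\mathcal{O}(\text{polynomial in }X_1,\dots,X_{n-1},V)$ and $\hat V_n=\mathcal{D}(\cdots)$. Each $X_n$ is then bounded by $\frac{\pi}{\sqrt3\eta}$ times a sum of products of earlier bounds. This yields a majorant recursion for $x_n\ge\|X_n\|$ of convolution type, $x_n\le b\sum_{i+j=n-1}x_ix_j\cdot(\text{factorial weights})$, dominated by Catalan numbers $C_n\le 4^n$ together with exponential factors from the series of $\mathrm{ad}$. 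Hence $\sum\varepsilon^n x_n$ converges for $\varepsilon v/\eta$ below a threshold. The parameter $\alpha$ solving $(\alpha+1)(\rme^{2/\alpha}-1)=3$ is where we balance the geometric ratio of this majorant, giving the radius $\varepsilon\le \eta/(v\rho)$ with $\rho=4\pi\alpha/\sqrt3$.

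The remaining points follow once convergence is established. Items (i)--(v) hold by construction, with the limits taken in operator norm; the domain preservation in (iii) comes from the homological solution, and (iv) holds because all $X_n$ enter with positive powers of $\varepsilon$. For (vi) we use $\|W-\mathbb I\|\le \rme^{\|X\|}-1$. Summing the majorant gives $\|X\|\le\frac{1}{2\alpha}\cdot\frac{\rho v\varepsilon}{\eta}\cdot(\ldots)$, which produces $\frac{\beta}{4}\frac v\eta\varepsilon$ with $\beta=\frac{16\pi\alpha}{\sqrt3}(\rme^{1/(2\alpha)}-1)$ by convexity of $\rme^{s}-1$ on $[0,\frac1{2\alpha}]$.

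The main obstacle is the convergence step. We need a clean majorant recursion whose generating function satisfies a quadratic (Catalan-type) equation with the exponential weights of the nested commutators, and this must be closed with the sharp constants of Table~\ref{tab:constants}. Obtaining the $\pi/\sqrt3$ homological bound uniformly, with no dependence on the number of eigenvalues, is the other delicate point. For it we would rely on the Fourier-multiplier representation above rather than on entrywise estimates.
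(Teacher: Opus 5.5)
Your outline follows the paper's route: a single generator $W(\varepsilon)=\rme^{\rmi K(\varepsilon)}$ expanded in powers of $\varepsilon$, a homological equation at each order, a Catalan majorant, and $\|W(\varepsilon)-\mathbb{I}\|\le \rme^{\|K(\varepsilon)\|}-1$. Your argument for (iii) via boundedness of $[H,K]$ is fine once convergence is available. Your final step is also fine: with $x=4\alpha\varepsilon b=\rho v\varepsilon/\eta$ one has $\varepsilon b\,\mathcal{D}(\alpha\varepsilon b)=(1-\sqrt{1-x})/(2\alpha)\le x/(2\alpha)\le 1/(2\alpha)$, and the chord bound $\rme^{s}-1\le 2\alpha(\rme^{1/(2\alpha)}-1)s$ on $[0,1/(2\alpha)]$ gives $\frac{\beta}{4}\frac{v}{\eta}\varepsilon$.

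\textbf{The gap.} The step that carries the whole theorem is convergence of $\sum_s\varepsilon^s K_s$ up to $\varepsilon=\eta/(v\rho)$, with $\|K(\varepsilon)\|\le\varepsilon b\,\mathcal{D}(\alpha\varepsilon b)$. You leave this as ``the main obstacle'', and the majorant you sketch would not close it, for two reasons.

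\emph{A majorant for $\|X_n\|$ alone does not close.} The order-$s$ source $B_s$ contains nested commutators $[K_{\ell_1},[\dots,[K_{\ell_n},H]\dots]]$ with $H$ unbounded. The innermost factor is controlled only through the homological equation, $\|[K_{\ell},H]\|=\|\{B_{\ell}\}\|\le 2\|B_{\ell}\|$. This is why the paper inducts on the source norms, $\frac{\pi}{\sqrt{3}\,\eta}\|B_s\|\le\alpha^{s-1}b^s d_s$, which bound $\|K_s\|$ and $\|[K_s,H]\|$ at once.

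\emph{The recursion is not a binary convolution.} At order $s$ there are $n$-fold terms for every $2\le n\le s$ (from $H$, $|\bm{\ell}|=s$) and every $1\le n\le s-1$ (from $V$, $|\bm{\ell}|=s-1$). After multiplying by $\frac{\pi}{\sqrt{3}\,\eta}$, each is bounded by $(2/\alpha)^n\alpha^{s-1}b^s d_{\ell_1}\cdots d_{\ell_n}$, with an extra factor $\alpha$ for the $H$-terms. What collapses these sums is the inequality $\sum_{|\bm{\ell}|=s}d_{\ell_1}\cdots d_{\ell_n}\le d_s$, valid for every $n$ by induction on $n$ from the Catalan recursion, together with $d_{s-1}\le d_s$. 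Summing over $n$ with weights $1/n!$ then produces the factor $(\alpha+1)(\rme^{2/\alpha}-1)-2$. The defining equation of $\alpha$ is precisely the condition that this equals $1$, so the induction closes. Only then does $\sum_s\varepsilon^s\alpha^{s-1}b^s d_s=\varepsilon b\,\mathcal{D}(\alpha\varepsilon b)$ converge for $4\alpha\varepsilon b\le 1$, i.e.\ $\varepsilon\le\eta/(v\rho)$. Without this ansatz and this inequality you have neither convergence nor the constants $\rho$ and $\beta$.

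\textbf{Two smaller corrections.}

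First, $V(\varepsilon)$ may be nowhere continuous, so matching powers of $\varepsilon$ only makes sense for a frozen operator. The paper builds $K(\varepsilon,\xi)$ and $\hat V(\varepsilon,\xi)$ for fixed $V(\xi)$, with bounds depending only on $v$, and then sets $\xi=\varepsilon$. Your recursion in a fixed ``$V$'' should say this explicitly.

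Second, entrywise estimates do not fail in infinite dimension. The paper obtains $\|X\|\le\frac{\pi}{\sqrt{3}\,\eta}\|B\|$ from $\|P_kX\psi\|\le\|B\|\bigl(\sum_{\ell\ne k}\|P_\ell\psi\|^2/(h_k-h_\ell)^2\bigr)^{1/2}$ and $\sum_{j\ne 0}j^{-2}=\pi^2/3$; that is where $\pi/\sqrt{3}$ comes from. Your Fourier-multiplier route is valid, but it does not give $\pi/\sqrt{3}$. With an odd Sz.-Nagy kernel it gives $\|X\|\le\frac{\pi}{2\eta}\|B\|$ for any bounded $B$, which is even better, provided you apply it to $B$ itself. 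Applying it to the off-diagonal part and then using $\|\{B\}\|\le 2\|B\|$ would cost a factor $2$ and spoil the tabulated constants.
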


The proof of Theorem~\ref{thm:fund} will be the core of the next section, where we are going to explicitly construct the operators \( W(\varepsilon) \) and \( \hat{V}(\varepsilon) \).

Assuming the validity of Theorem~\ref{thm:fund}, we are able to prove Theorem~\ref{th:eternal}\@. More explicitly, we can prove that, if an operator $\hat{V}(\varepsilon)$ satisfies the hypotheses of Theorem~\ref{thm:fund}, then it is an eternal block-diagonal approximation, which satisfies the bound~\eqref{eternal_bound}.
\begin{proof}[Proof of Theorem~\ref{th:eternal}]
By using~\eqref{eq:main}, we have
\begin{eqnarray}
    \|{\rme^{-\rmi t(H+\varepsilon V(\varepsilon))}-\rme^{-\rmi t(H+\varepsilon\hat{V}(\varepsilon))}}\|&=\|{\rme^{-\rmi t(H+\varepsilon V(\varepsilon))}-W(\varepsilon)^\dagger \rme^{-\rmi t(H+\varepsilon V(\varepsilon))}W(\varepsilon)}\|\nonumber\\
    &=\|{W(\varepsilon)^\dagger[W(\varepsilon)-\mathbb{I},\rme^{-\rmi t(H+\varepsilon{V}(\varepsilon))}]}\|\nonumber\\
    &\leq2\|{W(\varepsilon)-\mathbb{I}}\|\nonumber\\
    &\leq \frac{\beta}{2}\frac{v}{\eta}\varepsilon, \label{step56}
\end{eqnarray}
where at step~\eqref{step56} we have used the bound~\eqref{ordereps} on the unitary $W(\varepsilon)$. 
\end{proof}

In the next section, we prove Theorem~\ref{thm:fund} by explicitly constructing the operators \( W(\varepsilon) \) and \( \hat{V}(\varepsilon) \) as formal series, and then establishing the convergence of those series.

\section{Formal construction and convergence analysis}\label{convergence}
\subsection{Homological equation}
In this section, we introduce a key lemma that deals with the so-called homological equation, also known as the commutator equation. 
This equation will play a central role in proving Theorem~\ref{thm:fund}\@. We will discuss its formal solution in terms of a series expansion, along with some important estimates.
Before presenting the lemma, let us introduce some useful notation.

Let $H$ be a self-adjoint operator with pure point spectrum and spectral resolution~\eqref{eq:spH}.
Given a bounded operator \( A \in B(\mathcal{H}) \), we define its block-diagonal part with respect to the Hamiltonian \( H \) as
\begin{equation}
    [A] \psi = \sum_{k \geq 1} P_k A P_k \psi,
\end{equation}
where the $P_k$'s are the eigenprojections of $H$, and the series converges for any $\psi\in\mathcal{H}$. Conversely, the off-diagonal part of $A$ is denoted by $\{A\}$, and defined as
\begin{equation}
    \{A\} \psi = A \psi - [A] \psi = \sum_{k,\ell\geq 1 : k \neq\ell} P_k A P_\ell \psi,
\end{equation}
for all $\psi\in\mathcal{H}$.

\begin{lemma}[Homological equation]\label{l:61}
Let $H$ be a self-adjoint operator with pure point spectrum 
$\{h_k\}_{k\ge1}$ and nonvanishing minimal spectral gap
\begin{equation}
\eta = \inf_{k \neq \ell} |h_k - h_\ell| > 0 .
\end{equation}
Let $B \in  B(\mathcal{H})$ be a bounded operator. Then, there exists a unique bounded operator
$X \in  B(\mathcal{H})$ satisfying the following conditions:
\begin{enumerate}
    \item[(i)] $ X D(H) \subseteq D(H) $;
    \item[(ii)] for all $ \psi \in D(H) $,
    \begin{equation}\label{eq:homological}
        \rmi [X, H] \psi = \{B\}\psi;
    \end{equation}
    \item[(iii)] the block-diagonal part of $ X $ vanishes, i.e.,
    \begin{equation}
        [X] = 0.
    \end{equation}
\end{enumerate}

The operator $X$ has the explicit form
\begin{equation}\label{eq:solution}
    X \psi = \rmi \sum_{k,\ell \geq 1 : k \neq \ell} \frac{P_k B P_\ell}{h_k - h_\ell} \psi,
\end{equation}
for all $\psi\in\mathcal{H}$.
Moreover, if $B$ is self-adjoint, then $X$ is self-adjoint.
Finally, the following estimates hold:
\begin{eqnarray}
    \|X\| \leq \frac{\pi}{\sqrt{3}\, \eta} \|B\|, \\
    \|HX \psi\| \leq \left( 2\|\psi\| + \frac{\pi}{\sqrt{3}\, \eta} \|H\psi\| \right) \|B\|, \quad \forall\psi \in D(H).
    \label{eq:relative}
\end{eqnarray}
Equation~\eqref{eq:homological} is known in the literature as \emph{homological equation}.
\end{lemma}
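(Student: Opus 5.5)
We first establish uniqueness, then existence with the norm bound via an integral representation, and finally domain invariance and the relative bound \eqref{eq:relative}.

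\emph{Uniqueness.} Suppose $X_1,X_2$ both satisfy (i)--(iii), and set $Y=X_1-X_2$. Then $Y D(H)\subseteq D(H)$ and $[Y,H]\psi=0$ for all $\psi\in D(H)$. Taking $\psi=P_\ell\phi$ gives $(h_\ell-H)YP_\ell\phi=0$, so $YP_\ell\phi\in\Range(P_\ell)$. Hence $P_kYP_\ell=0$ for $k\neq\ell$, i.e.\ $Y=[Y]$. Combined with $[Y]=0$ this forces $Y=0$.

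\emph{Self-adjointness.} This is formal from \eqref{eq:solution}. Taking adjoints termwise,
\[
X^\dagger=-\rmi\sum_{k\neq\ell}\frac{P_\ell B^\dagger P_k}{h_k-h_\ell}
=\rmi\sum_{k\neq\ell}\frac{P_\ell B^\dagger P_k}{h_\ell-h_k},
\]
which equals $X$ when $B=B^\dagger$. Once $X$ is known to be bounded and the series is controlled, this computation is justified on finite linear combinations of eigenvectors and extends by density.

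\emph{Existence and the norm bound.} The series \eqref{eq:solution} is not absolutely convergent in norm, so I would not sum it directly. Instead I would represent $X$ as a Bochner/weak integral of Heisenberg-evolved copies of $B$. Since $\rme^{\rmi tH}P_kBP_\ell\rme^{-\rmi tH}=\rme^{\rmi t(h_k-h_\ell)}P_kBP_\ell$, it suffices to find $f\in L^1(\R)$ whose Fourier transform $g(x)=\int f(t)\rme^{\rmi tx}\,\dd t$ satisfies $g(x)=\rmi/x$ for $|x|\ge\eta$ and $g(0)=0$. One then sets
\[
X=\int_\R f(t)\,\rme^{\rmi tH}B\,\rme^{-\rmi tH}\,\dd t .
\]
The condition $g(0)=0$ kills the diagonal blocks automatically. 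Thus $X$ is bounded with $\|X\|\le\|f\|_{L^1}\|B\|$, and matrix elements between eigenvectors reproduce \eqref{eq:solution}, which then converges strongly since it equals a bounded operator on the dense set $D$.

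By scaling $f(t)=\eta^{-1}... $ it suffices to treat $\eta=1$. The main obstacle is choosing an odd interpolant on $(-1,1)$ (for instance $g(x)=\rmi x$ there, or a smoother odd choice) such that the inverse transform has $L^1$ norm exactly $\pi/\sqrt3$. The naive route $\|\{B\}\|\le2\|B\|$ combined with the Sz.-Nagy constant $\pi/2$ only gives $\pi$. I would first try the piecewise choice above, compute $f$ explicitly, and evaluate $\|f\|_1$. If that fails, I would try a periodization argument: use the Fourier series of an odd sawtooth-type function, whose $\ell^1$-type sum yields $\sum 1/n^2$-like constants, plausibly producing $\pi/\sqrt3=\sqrt{6\cdot\pi^2/6\cdot(1/2)}\cdots$.

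\emph{Domain invariance and \eqref{eq:relative}.} For $\psi\in D$ (finite combinations of eigenvectors), a direct computation gives
\[
\rmi(XH-HX)\psi=\sum_{k\neq\ell}P_kBP_\ell\psi=\{B\}\psi .
\]
For general $\psi\in D(H)$ I would argue through the spectral projections. From the relation, $P_kHX\psi=h_kP_kX\psi=P_kXH\psi+\rmi P_k\{B\}\psi$, so $\sum_k h_k^2\|P_kX\psi\|^2=\|XH\psi+\rmi\{B\}\psi\|^2<\infty$. This yields $X\psi\in D(H)$, hence (i), and the identity $HX\psi=XH\psi+\rmi\{B\}\psi$, i.e.\ (ii). Condition (iii) is immediate from the construction. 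Finally, using $\|\{B\}\|\le2\|B\|$ and the norm bound on $X$,
\[
\|HX\psi\|\le\|X\|\,\|H\psi\|+2\|B\|\,\|\psi\|,
\]
which is exactly \eqref{eq:relative}.
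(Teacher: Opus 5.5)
\textbf{Verdict: genuine gap in the norm bound.} Your uniqueness, self-adjointness and domain-invariance steps are sound. The domain argument via $h_kP_kX\psi=P_kXH\psi+\rmi P_k\{B\}\psi$ and $\sum_k h_k^2\|P_kX\psi\|^2<\infty$ is in fact more careful than the paper's formal manipulation of the series for $HX\psi$. Once the norm bound is available, \eqref{eq:relative} follows exactly as you say.

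The gap is the bound $\|X\|\le\frac{\pi}{\sqrt3\,\eta}\|B\|$ itself. It is the only quantitative content of the lemma, and in your write-up existence and \eqref{eq:relative} rest on it as well. You leave the choice of $f$ and the value of $\|f\|_1$ as an open search with an uncertain outcome, and the search is misdirected in two ways.

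First, nothing has to equal $\pi/\sqrt3$; any admissible $f$ with $\|f\|_1\le\pi/\sqrt3$ suffices.

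Second, your dismissal of Sz.-Nagy is mistaken. In $X=\int f(t)\,\rme^{\rmi tH}B\rme^{-\rmi tH}\,\dd t$ you never pass through $\{B\}$: an odd $f$ has $g(0)=\int f=0$, which removes the diagonal blocks of $B$ at no cost. Sz.-Nagy's extremal problem (the input to the Bhatia--Davis--McIntosh bound) supplies such an $f$ with $\|f\|_1=\pi/2$. Explicitly, take $f(t)=\frac{\sin t}{\pi}\int_0^\infty\frac{u\,\dd u}{(u^2+t^2)\sinh u}$. It has these properties:
\begin{itemize}
\item It is odd and integrable.
\item It satisfies $\int_\R f(t)\sin(xt)\,\dd t=1/x$, i.e.\ $g(x)=\rmi/x$, for $|x|\ge1$.
\item It has $f(t)\sin t\ge0$.
\end{itemize}
Pairing it with the square wave $\mathrm{sgn}(\sin t)=\frac4\pi\sum_{n\ge1\ \mathrm{odd}}\frac{\sin nt}{n}$ gives $\|f\|_1=\frac4\pi\sum_{n\ge1\ \mathrm{odd}}n^{-2}=\pi/2$. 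Replacing $f(t)$ by $f(\eta t)$ then yields $\|X\|\le\frac{\pi}{2\eta}\|B\|$, which is stronger than the lemma's claim.

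Even your first candidate, $g(x)=\rmi x$ on $[-1,1]$, would have worked. With $h(x)=x$ on $[0,1]$ and $h(x)=1/x$ beyond, its $f(t)=\frac1\pi\int_0^\infty h(x)\sin(tx)\,\dd x$ is bounded and equals $\frac{2\sin t}{\pi t^2}+O(t^{-3})$ at infinity. A numerical estimate gives $\|f\|_1\approx1.58<\pi/\sqrt3\approx1.81$. None of this is carried out in the proposal, however, so as it stands the central estimate is unproved.

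\textbf{How the paper does it.} The paper uses no Fourier analysis. Your reason for not summing \eqref{eq:solution} directly (it is not absolutely convergent in norm) is no obstacle once one sums vectorwise. Set $\phi_k=\sum_{\ell\ne k}(h_k-h_\ell)^{-1}P_\ell\psi$, so that $P_kX\psi=\rmi P_kB\phi_k$. Orthogonality then gives $\|X\psi\|^2=\sum_k\|P_kX\psi\|^2\le\|B\|^2\sum_\ell\|P_\ell\psi\|^2\sum_{k\ne\ell}|h_k-h_\ell|^{-2}$. The $j$-th eigenvalue above (or below) $h_\ell$ lies at distance at least $j\eta$, so $\sum_{k\ne\ell}|h_k-h_\ell|^{-2}\le2\sum_{j\ge1}(j\eta)^{-2}=\pi^2/(3\eta^2)$.

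Thus $\pi/\sqrt3=\sqrt{2\zeta(2)}$ is a Cauchy--Schwarz constant, not the $L^1$ norm of a natural Fourier multiplier. That is why searching for an $f$ that reproduces it exactly is a dead end.

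\textbf{What each route buys.} The paper's route is two lines and elementary. Yours, once completed, is heavier but gives the better (in fact optimal) constant $\pi/2$. That would improve $b$, $\rho$ and $\beta$ in the subsequent KAM estimates by a factor $\sqrt3/2$.
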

\begin{proof}
Let $ \phi \in \mathcal{H} $. Since~\eqref{eq:homological} must hold for all $ \psi \in D(H) $, it must in particular hold for $ \psi =  P_\ell \phi $, for any $ \ell \geq 1 $. In this case, we obtain
\begin{equation}
\rmi ( h_\ell X P_\ell - H X P_\ell) \phi = \{B\} P_\ell \phi.
\end{equation}
By left-multiplying both sides by $P_k$, with $ k \geq 1 $, we find that
\begin{equation}
(h_k - h_\ell) P_k X P_\ell \phi = \rmi P_k \{B\} P_\ell \phi.
\end{equation}
For $ k = \ell $, the equation is automatically satisfied for any choice of $ X $, while for $ k \neq \ell $ it gives
\begin{equation}
P_k X P_\ell \phi = \rmi \frac{P_k B P_\ell}{h_k - h_\ell} \phi.
\end{equation}
Summing over all $k \neq \ell$, we formally obtain the  identity
\begin{equation}
\{X\} \phi = \rmi \sum_{k,\ell : k \neq \ell} \frac{P_k B P_\ell}{h_k - h_\ell} \phi.
\end{equation}

Thus, the homological equation determines only the off-diagonal part of $ X $, leaving its block-diagonal part arbitrary. Therefore, the solution is unique up to the addition of a block-diagonal operator. By fixing the block-diagonal part to zero, i.e., by imposing $ [X] = 0 $, we select the unique solution
\begin{equation}
    X \phi = \rmi \sum_{k,\ell : k \neq \ell} \frac{P_k B P_\ell}{h_k - h_\ell} \phi.
\label{eq:X}
\end{equation}
Furthermore, if $B=B^\dagger$ it is straightforward to verify that (formally) $X=X^\dagger$.

Clearly, it is necessary to verify the convergence of the series defining $ X $. Let $ \psi \in \mathcal{H} $. Using the orthogonality of the projections $ P_k $, we obtain
\begin{equation}
\|{X\psi}\|^2 \leq \|{B}\|^2 \sum_{\ell \geq 1} \sum_{k\geq 1: k \neq \ell} \frac{\|{P_k \psi}\|^2}{|h_k - h_\ell|^2}.
\end{equation}
Exchanging the order of summations gives
\begin{equation}
\|{X\psi}\|^2 \leq \|{B}\|^2 \sum_{k \geq 1} \|{P_k \psi}\|^2 \sum_{\ell\geq 1 : \ell \neq k} \frac{1}{|h_k - h_\ell|^2}.
\end{equation}

We now use the fact that $H$ has a nonvanishing spectral gap $\eta>0$. 
By assuming that the eigenvalues $\{h_k\}_{k\ge1}$ are ordered increasingly, we have
\begin{equation}
|h_k - h_\ell| \geq \eta |k - \ell|.
\end{equation}
Indeed, without loss of generality, assume $k \ge \ell$. Then,
\begin{eqnarray}
|h_k - h_\ell| = h_k - h_\ell 
&= (h_k - h_{k-1}) + (h_{k-1} - h_{k-2}) + \cdots + (h_{\ell+1} - h_\ell) \nonumber \\
&\ge \eta + \eta + \cdots + \eta = (k-\ell)\eta = \eta |k-\ell|.
\end{eqnarray}
Applying this bound yields
\begin{equation}
\|{X\psi}\|^2 \leq \frac{\|{B}\|^2}{\eta^2} \sum_{k \geq 1} \|{P_k \psi}\|^2 \sum_{\ell\geq 1 : \ell \neq k} \frac{1}{(k - \ell)^2}.
\end{equation}
Changing variables in the second sum by setting $ j = \ell - k $, we find
\begin{eqnarray}
\|{X\psi}\|^2 &\leq& \frac{\|{B}\|^2}{\eta^2} \sum_{k \geq 1} \|{P_k \psi}\|^2 \left(\sum_{j = -k+1}^{-1} \frac{1}{j^2} + \sum_{j = 1}^{\infty} \frac{1}{j^2} \right) \nonumber \\
&=& \frac{\|{B}\|^2}{\eta^2} \sum_{k \geq 1} \|{P_k \psi}\|^2 \left( \sum_{j = 1}^{k - 1} \frac{1}{j^2} + \frac{\pi^2}{6} \right) \nonumber \\
& \leq & \frac{\|{B}\|^2}{\eta^2} \sum_{k \geq 1} \|{P_k \psi}\|^2 \left( \sum_{j = 1}^{+\infty} \frac{1}{j^2} + \frac{\pi^2}{6} \right) \nonumber\\
&=& \frac{2 \pi^2 \|{B}\|^2}{6 \eta^2} \sum_{k \geq 1} \|{P_k \psi}\|^2 = \frac{\pi^2 \|{B}\|^2}{3 \eta^2} \|{\psi}\|^2.
\end{eqnarray}
Hence,
\begin{equation}
\|{X}\| \leq \frac{\pi}{\sqrt{3}\, \eta} \|{B}\|. \label{eq:X1}
\end{equation}

We now prove the bound~\eqref{eq:relative}. Let $ \psi \in D(H) $.  Using the explicit expression~\eqref{eq:solution} for $ X $, we write
\begin{eqnarray}
\fl \qquad -\rmi H X \psi &=& \sum_{k,\ell\geq 1 : k \neq \ell} \frac{h_k}{h_k - h_\ell} P_k B P_\ell \psi 
= \sum_{k,\ell\geq 1 : k \neq \ell} \left( \frac{h_k - h_\ell + h_\ell}{h_k - h_\ell} \right) P_k B P_\ell \psi \nonumber\\
&=& \sum_{k,\ell\geq 1 : k \neq \ell} P_k B P_\ell \psi + \sum_{k,\ell\geq 1 : k \neq \ell} \frac{h_\ell}{h_k - h_\ell} P_k B P_\ell \psi = \{B\} \psi - \rmi X H \psi.
\end{eqnarray}
Taking the norm of both sides and applying the triangle inequality gives
\begin{equation}\label{eq:step79}
    \| H X \psi \| \leq \| \{B\} \psi \| + \| X \|  \| H \psi \|.
\end{equation}
Since
\begin{equation}
    \| \{B\} \| = \| B - [B] \| \leq \| B \| + \| [B] \|,
\end{equation}
and
\begin{equation}
    \| [B] \| = \sup_{k\ge1} \| P_k B P_k \| \leq \| B \|,
\end{equation}
we obtain
\begin{equation}
    \| \{B\} \| \leq 2 \| B \|.
\end{equation}
Substituting this bound into~\eqref{eq:step79} yields
\begin{equation}
    \| H X \psi \| \leq 2 \| B \|  \| \psi \| + \| X \|  \| H \psi \|.
\end{equation}
Finally, using the estimate~\eqref{eq:X1} for $ \| X \| $, we conclude
\begin{equation}
\| H X \psi \| \leq \left( 2 \| \psi \| + \frac{\pi}{\sqrt{3}\, \eta} \| H \psi \| \right) \| B \|.
\end{equation}
\end{proof}
\begin{remark}
The homological equation depends only on the off-diagonal part of the operator $ B $. Indeed, one can always replace $ B $ with $ B + [A] $, for any $ A \in  B(\mathcal{H}) $, without affecting the solution. This freedom allows for an optimization of the norm estimate of the solution $X$ as
\begin{equation}
\|X\| \leq \frac{\pi}{\sqrt{3}\, \eta} \inf_{A \in  B(\mathcal{H})} \| B + [A] \|.
\end{equation}
The infimum is not necessarily attained by choosing $ A = -[B] $, as it may happen that
\begin{equation}
\| B - [B] \| = \| \{ B \} \| > \| B \|.
\end{equation}
For instance, consider the matrix~\cite{bhatia1989offdiagonal}
\begin{equation}
B =
\left(
  \begin{array}{ccc}
    -{1 \over 2} & 1 & 1 \\
    1 & -{1 \over 2} & 1 \\
    1 & 1 & -{1 \over 2}
  \end{array}
\right),
\end{equation}
for which it is straightforward to verify that
\begin{equation}
\| \{ B \} \| = 2 > \| B \| = \frac{3}{2},
\end{equation}
where the off-diagonal decomposition is taken with respect to $1$-dimensional blocks, that is,
\begin{equation}
\{B\} =
\left(
  \begin{array}{ccc}
    0 & 1 & 1 \\
    1 & 0 & 1 \\
    1 & 1 & 0
  \end{array}
\right).
\end{equation}
\end{remark}

\subsection{Quantum KAM iteration: Formal construction}

We return to our main problem: we aim to construct two bounded operators $ W(\varepsilon) $ and $ \hat{V}(\varepsilon) $ satisfying the properties of Theorem~\ref{thm:fund}\@. In particular, we require that, for all $ \psi \in D(H) $,
\begin{equation}
    W(\varepsilon)^\dagger (H + \varepsilon V(\varepsilon)) W(\varepsilon) \psi 
    = 
    (H + \varepsilon \hat{V}(\varepsilon)) \psi.
    \label{eq:diageq}
\end{equation}
For convenience, from now on we omit the dependence on $\psi$, and the equations will be understood to hold on their appropriate domains.

We seek $ W(\varepsilon) $ in the form
\begin{equation}
    W(\varepsilon) = \rme^{\rmi K(\varepsilon)},
\end{equation}
where $ K(\varepsilon) = K(\varepsilon)^\dagger $ and $ K(\varepsilon) D(H) \subseteq D(H) $. 

It will be convenient to regard $K(\varepsilon)$ and $\hat V(\varepsilon)$ as functions of two variables, $K(\varepsilon,\xi)$ and $\hat V(\varepsilon,\xi)$, evaluated along the diagonal $\xi = \varepsilon$, namely,
\begin{equation}
    K(\varepsilon) = K(\varepsilon, \varepsilon), \qquad
    \hat V(\varepsilon) = \hat V(\varepsilon,\varepsilon).
\end{equation}
The operators $K(\varepsilon,\xi)$ and $\hat V(\varepsilon,\xi)$ are required to satisfy the equation
\begin{equation}
    \rme^{-\rmi K(\varepsilon,\xi)}(H + \varepsilon V(\xi))\rme^{\rmi K(\varepsilon,\xi)} = H + \varepsilon \hat{V}(\varepsilon,\xi)
    \label{unitequiv}
\end{equation}
on $D(H)$, which reduces to~\eqref{eq:diageq} for $\xi=\varepsilon$.

We look for $K(\varepsilon,\xi)$ and $\hat{V}(\varepsilon,\xi)$ as formal power series in $\varepsilon$, 
\begin{eqnarray}
    K(\varepsilon,\xi) = \sum_{s \geq 1} \varepsilon^s K_s(\xi), \label{eq:K} \\
    \hat{V}(\varepsilon,\xi) = \sum_{s \geq 0} \varepsilon^s \hat{V}_s(\xi), \label{eq:V}
\end{eqnarray}
with $K(0,\xi)=0$ ensuring that $W(0) = \rme^{\rmi K(0)} = \mathbb{I}$.
\begin{remark}
If the perturbation $V$ in~\eqref{unitequiv} depends on $\xi$, then the coefficients $K_s(\xi)$ and $\hat V_s(\xi)$ also inherit this dependence. This justifies the notation adopted in~\eqref{eq:K}–\eqref{eq:V}.
Recall that, as discussed in Example~\ref{ex:regularity}, there are no constraints on the regularity of the map $\xi\mapsto V(\xi)$, which may even be  everywhere discontinuous, as far as it remains uniformly bounded. Therefore, in general, the functions $K(\varepsilon,\xi)$ and $\hat{V}(\varepsilon,\xi)$  are not expected to have any regularity with respect to the second variable $\xi$, whereas the dependence on the first variable $\varepsilon$ will be shown to be analytic in the norm topology, uniformly for $\xi \in [0,1]$.
\end{remark}
We now derive the recursive conditions on the coefficients $ K_s(\xi) $ and $ \hat{V}_s(\xi) $ in order to satisfy~\eqref{unitequiv} order by order in $ \varepsilon$. This is obtained by inserting the formal series expansions~\eqref{eq:K} and~\eqref{eq:V} into both sides of the equation and comparing terms with the same power of $ \varepsilon $. This procedure is called quantum KAM iteration, since it recalls the analogous procedure used in the Kolmogorov-Arnold-Moser theory of classical mechanics~\cite{Arnold_DynSysIII_1988,Dumas}. The result is encoded in the following lemma.

\begin{lemma}\label{l:62}
    Let $H$ and $V(\varepsilon)$ be as in Theorem~\ref{thm:fund}\@.
    Assume that $ K(\varepsilon,\xi) $ and $ \hat{V}(\varepsilon,\xi) $ admit the expansions~\eqref{eq:K} and~\eqref{eq:V}. Then,~\eqref{unitequiv} holds on $D(H)$ if and only if, for all $ s \geq 1 $,
    \begin{equation}\label{eq:iter}
        \hat{V}_{s-1}(\xi) = B_s(\xi) - \rmi [K_s(\xi), H],
    \end{equation}
    where $B_s(\xi)$ is defined as
    \begin{eqnarray}
        B_1(\xi) &=& V(\xi), \nonumber\\
        B_s(\xi) &=& \sum_{n = 2}^s \frac{(-\rmi)^n}{n!} \sum_{\bm{\ell}\in\mathbb{N}^n : |\bm{\ell}| = s} \tilde{\mathcal{K}}^\xi_{\ell_1} \cdots \tilde{\mathcal{K}}^\xi_{\ell_n}(H) \nonumber \\
        &&{}+ \sum_{n = 1}^{s - 1} \frac{(-\rmi)^n}{n!} \sum_{\bm{\ell}\in\mathbb{N}^n : |\bm{\ell}| = s - 1} \tilde{\mathcal{K}}^\xi_{\ell_1} \cdots \tilde{\mathcal{K}}^\xi_{\ell_n}(V(\xi)),
        \label{eq:Bs}
    \end{eqnarray}
    where $\tilde{\mathcal{K}}^\xi_{\ell_k} (A)=[K_{\ell_k}(\xi),A]$ 
    for any suitable  operator $A$. We use the multi-index notation $ \bm{\ell} = (\ell_1, \dots, \ell_n) \in \mathbb{N}^n$, and $ |\bm{\ell}| = \ell_1 + \cdots + \ell_n $.
\end{lemma}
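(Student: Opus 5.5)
The proof is a formal expansion of the left-hand side of \eqref{unitequiv} in powers of $\varepsilon$ via the adjoint (Lie/Hadamard) series, followed by matching coefficients. First I write the conjugation as
\begin{equation}
\rme^{-\rmi K}A\,\rme^{\rmi K}=\sum_{n\ge0}\frac{(-\rmi)^n}{n!}\,\mathrm{ad}_K^n(A),\qquad \mathrm{ad}_K(A)=[K,A],
\end{equation}
applied separately to $A=H$ and $A=\varepsilon V(\xi)$, with $K=K(\varepsilon,\xi)$. At the formal level this follows from $\frac{\dd}{\dd\lambda}\rme^{-\rmi\lambda K}A\rme^{\rmi\lambda K}=-\rmi\,\rme^{-\rmi\lambda K}[K,A]\rme^{\rmi\lambda K}$, iterated and evaluated at $\lambda=0$. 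For $A=H$ the commutators make sense on $D(H)$ because each $K_s(\xi)$ is assumed to preserve $D(H)$.

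Next I substitute $K=\sum_{s\ge1}\varepsilon^sK_s(\xi)$. By multilinearity,
\begin{equation}
\mathrm{ad}_K^n(A)=\sum_{\bm\ell\in\mathbb N^n}\varepsilon^{|\bm\ell|}\,\tilde{\mathcal K}^\xi_{\ell_1}\cdots\tilde{\mathcal K}^\xi_{\ell_n}(A).
\end{equation}
For fixed $s$, the coefficient of $\varepsilon^s$ in the $H$-part collects all $n\ge0$ and all $\bm\ell$ with $|\bm\ell|=s$. Since every $\ell_i\ge1$, only $n\le s$ contributes. The $n=0$ term is just $H$ at order $\varepsilon^0$, and it cancels against the $H$ on the right-hand side. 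The $n=1$ term at order $s$ is $-\rmi[K_s(\xi),H]$.

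The $\varepsilon V(\xi)$-part carries an extra factor of $\varepsilon$. Its contribution at order $\varepsilon^s$ therefore comes from $|\bm\ell|=s-1$ with $0\le n\le s-1$, and the $n=0$ term gives $V(\xi)$ when $s=1$.

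The right-hand side of \eqref{unitequiv} contributes $\hat V_{s-1}(\xi)$ at order $\varepsilon^s$. Equating coefficients for each $s\ge1$ gives
\begin{equation}
\hat V_{s-1}(\xi)=-\rmi[K_s(\xi),H]+B_s(\xi).
\end{equation}
Here $B_s$ collects the $H$-terms with $2\le n\le s$ and the $V$-terms with $1\le n\le s-1$, together with the $n=0$ $V$-term, which is $V(\xi)$ for $s=1$. This is exactly \eqref{eq:Bs}. For $s=1$ the $H$-sum with $n\ge2$ is empty, and the $V$-sum with $1\le n\le s-1$ is also empty, since $|\bm\ell|=0$ is impossible with $n\ge1$. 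Hence $B_1=V$.

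The converse direction holds because two formal power series in $\varepsilon$ coincide if and only if all their coefficients coincide. So \eqref{eq:iter} for all $s$ is equivalent to \eqref{unitequiv}.

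I expect the main obstacle to be justifying the formal manipulations on $D(H)$. This means making sense of iterated commutators $\mathrm{ad}^n(H)$ when $H$ is unbounded, and rearranging the double series (over $n$ and over $\bm\ell$) by powers of $\varepsilon$. In line with the lemma's hypothesis that the expansions exist, I will treat everything as formal power-series identities on $D(H)$. Each coefficient is a finite sum of products of operators that preserve $D(H)$, with exactly one factor $H$ or $V(\xi)$, so the coefficients are well defined on $D(H)$. Genuine convergence is deferred to the later analysis.
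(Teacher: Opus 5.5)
Your proposal is correct and follows essentially the same route as the paper. Like the paper, you expand the conjugation with the adjoint (Hadamard) series, substitute $K=\sum_{s\ge1}\varepsilon^sK_s(\xi)$ multilinearly, match powers of $\varepsilon$, and isolate the $n=1$ term $-\rmi[K_s(\xi),H]$ of the $H$-part to get \eqref{eq:iter}; your explicit handling of the $n=0$ terms, of the case $s=1$, and of the converse via uniqueness of coefficients is a bit more careful than the paper's, but the argument is the same.
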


\begin{proof}
We insert the expansions~\eqref{eq:K} and~\eqref{eq:V} into~\eqref{unitequiv}, obtaining
\begin{eqnarray}
    \sum_{s \geq 0} \varepsilon^{s+1} \hat{V}_s(\xi) = \varepsilon V(\xi)&&{} + \sum_{s \geq 1} \varepsilon^s \sum_{n = 1}^s \frac{(-\rmi)^n}{n!} \sum_{|\bm{\ell}| = s} \tilde{\mathcal{K}}^\xi_{\ell_1} \cdots \tilde{\mathcal{K}}^\xi_{\ell_n}(H) \nonumber \\
    &&{}+ \sum_{s \geq 1} \varepsilon^{s+1} \sum_{n = 1}^s \frac{(-\rmi)^n}{n!} \sum_{|\bm{\ell}| = s} \tilde{\mathcal{K}}^\xi_{\ell_1} \cdots \tilde{\mathcal{K}}^\xi_{\ell_n}(V(\xi)).
    \label{equation}
\end{eqnarray}
Matching the coefficients of each power of $ \varepsilon $ we find at the first order
\begin{equation}
    \hat{V}_0(\xi) = V(\xi) - \rmi \tilde{\mathcal{K}}^\xi_1(H)
    = V(\xi) - \rmi [K_1(\xi), H],
    \label{first_order}
\end{equation}
which coincides with~\eqref{eq:iter} for $ s = 1$. At higher orders $ s \geq 2 $, we obtain
\begin{eqnarray}
    \hat{V}_{s-1}(\xi) &=& \sum_{n = 1}^s \frac{(-\rmi)^n}{n!} \sum_{|\bm{\ell}| = s} \tilde{\mathcal{K}}^\xi_{\ell_1} \cdots \tilde{\mathcal{K}}^\xi_{\ell_n}(H) \nonumber \\
    &&{} + \sum_{n = 1}^{s-1} \frac{(-\rmi)^n}{n!} \sum_{|\bm{\ell}| = s-1} \tilde{\mathcal{K}}^\xi_{\ell_1} \cdots \tilde{\mathcal{K}}^\xi_{\ell_n}(V(\xi)).
    \label{higher}
\end{eqnarray}
Isolating the term with $ n = 1 $ in the first sum, we write
\begin{equation}
    -\rmi [K_s(\xi), H] + \sum_{n = 2}^s \frac{(-\rmi)^n}{n!} \sum_{|\bm{\ell}| = s} \tilde{\mathcal{K}}^\xi_{\ell_1} \cdots \tilde{\mathcal{K}}^\xi_{\ell_n}(H),
\end{equation}
which, substituted into~\eqref{higher}, yields the recursive identity~\eqref{eq:iter}.
\end{proof}

We now solve~\eqref{eq:iter} by imposing the main requirement of the construction: $\hat{V}(\varepsilon,\xi) \in \{H\}'$. Imposing this condition order by order leads to explicit expressions for both $ \hat{V}_\ell(\xi) $ and $ K_\ell(\xi) $, as stated in the following result.

\begin{lemma}\label{l:63}
Equation~\eqref{eq:iter} in Lemma~\ref{l:62} admits a unique solution under the constraint $ [K_s(\varepsilon)] = 0 $, given by the formally self-adjoint operators
\begin{eqnarray}
    \hat{V}_{s-1}(\varepsilon) = \sum_{k \geq 1} P_k B_s(\varepsilon) P_k, \label{eq:block_s} \\
    K_s(\varepsilon) = \rmi \sum_{k \neq \ell} \frac{P_k B_s(\varepsilon) P_\ell}{h_k - h_\ell}, \label{eq:gen_s}
\end{eqnarray}
\end{lemma}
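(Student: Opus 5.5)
The plan is to proceed by induction on $s$. The key observation is that $B_s(\xi)$ in~\eqref{eq:Bs} depends only on $K_1(\xi),\dots,K_{s-1}(\xi)$ and on $V(\xi)$: the first sum has $n\ge2$ and $|\bm\ell|=s$, so each $\ell_j\le s-1$, and the second sum has $|\bm\ell|=s-1$, so each $\ell_j\le s-1$. Hence at step $s$, once $K_1,\dots,K_{s-1}$ have been fixed, $B_s(\xi)$ is a known operator. Equation~\eqref{eq:iter} then becomes a linear equation in the two unknowns $\hat V_{s-1}(\xi)$ and $K_s(\xi)$, and this is where the homological equation enters.

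At each order I would split~\eqref{eq:iter} into block-diagonal and off-diagonal parts with respect to the projections $P_k$. The requirement $\hat V(\varepsilon,\xi)\in\{H\}'$, imposed order by order, means $\hat V_{s-1}=[\hat V_{s-1}]$. Since $P_k$ is a spectral projection of $H$, $P_k[K_s,H]P_k=h_kP_kK_sP_k-h_kP_kK_sP_k=0$, so the commutator term $-\rmi[K_s,H]$ is purely off-diagonal.

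Taking $[\,\cdot\,]$ of~\eqref{eq:iter} gives $\hat V_{s-1}=[B_s]=\sum_k P_kB_sP_k$, which is~\eqref{eq:block_s}. Taking $\{\,\cdot\,\}$ gives $0=\{B_s\}-\rmi[K_s,H]$, that is, $\rmi[K_s,H]=\{B_s\}$. This is exactly the homological equation~\eqref{eq:homological} with $B=B_s$. Lemma~\ref{l:61}, together with the constraint $[K_s]=0$, then yields the unique solution~\eqref{eq:gen_s}. Conversely, these choices satisfy~\eqref{eq:iter}, since summing the diagonal and off-diagonal parts recovers the full equation. Uniqueness holds at each step, and it propagates through the induction because $B_{s+1}$ is determined by the already unique $K_1,\dots,K_s$.

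Self-adjointness is also proved inductively. $B_1=V(\xi)$ is self-adjoint. If $K_1,\dots,K_{s-1}$ are self-adjoint, then for Hermitian $K$ and $A$ one has $(\rmi[K,A])^\dagger=\rmi[K,A]$, so each $n$-fold term $(-\rmi)^n\tilde{\mathcal K}_{\ell_1}\cdots\tilde{\mathcal K}_{\ell_n}(A)$ is formally self-adjoint. Hence $B_s$ is self-adjoint, and then $[B_s]$ is self-adjoint as well, while $K_s$ is self-adjoint by the last clause of Lemma~\ref{l:61}.

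The main obstacle is that Lemma~\ref{l:61} needs $B_s$ to be bounded, while the terms $\tilde{\mathcal K}\cdots\tilde{\mathcal K}(H)$ involve the unbounded operator $H$. At this formal stage I would handle it inductively as well. By~\eqref{eq:homological}, $\rmi[K_j,H]=\{B_j\}$ on $D(H)$, and this is bounded. Moreover, $K_jD(H)\subseteq D(H)$, and the relative estimate~\eqref{eq:relative} makes the compositions well defined on $D(H)$. So every innermost commutator with $H$ can be replaced by the bounded operator $-\rmi\{B_{\ell_n}\}$, and $B_s$ becomes a finite sum of products of bounded operators. Uniform control of the norms is not needed for this lemma; it is deferred to the convergence analysis.
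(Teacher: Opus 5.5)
Your proposal is correct and follows essentially the same route as the paper: you split \eqref{eq:iter} into block-diagonal and off-diagonal parts, read off $\hat V_{s-1}=[B_s]$ and the homological equation $\rmi[K_s,H]=\{B_s\}$, invoke Lemma~\ref{l:61} with $[K_s]=0$, and prove self-adjointness by induction on $s$. Your additional remarks on the converse direction and on replacing the innermost commutator $[K_{\ell_n},H]$ by the bounded operator $-\rmi\{B_{\ell_n}\}$ make explicit what the paper uses only later, in the proof of Lemma~\ref{l:2}.
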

\begin{proof}
Fix $ s \geq 1 $. According to Lemma~\ref{l:62}, the operators $ \hat{V}_{s-1}(\varepsilon) $ and $ K_s(\varepsilon) $ must satisfy
\begin{equation}
    \hat{V}_{s-1}(\varepsilon) = B_s(\varepsilon) - \rmi [K_s(\varepsilon), H].
    \label{eq:iter1}
\end{equation}
We impose that $ \hat{V}_{s-1}(\varepsilon) \in \{H\}' $, i.e., that it commutes with $ H $. This is equivalent to requiring
\begin{equation}
    \hat{V}_{s-1}(\varepsilon) = [\hat{V}_{s-1}(\varepsilon)].
\end{equation}
Taking the block-diagonal part of both sides of~\eqref{eq:iter1}, and using that the block-diagonal part of a commutator with $ H $ vanishes, we find
\begin{equation}
    \hat{V}_{s-1}(\varepsilon) = [B_s(\varepsilon)] = \sum_{k \geq 1} P_k B_s(\varepsilon) P_k,
\end{equation}
which proves~\eqref{eq:block_s}.
Next, we determine $ K_s(\varepsilon) $ by taking the off-diagonal part of~\eqref{eq:iter1},
\begin{equation}
    \rmi [K_s(\varepsilon), H] = \{ B_s(\varepsilon) \}.
    \label{homological}
\end{equation}
This is a homological equation of the type considered in Lemma~\ref{l:61}, whose unique solution under the constraint $ [K_s(\varepsilon)] = 0 $ is given by
\begin{equation}
    K_s(\varepsilon) = \rmi \sum_{k \neq \ell} \frac{P_k B_s(\varepsilon) P_\ell}{h_k - h_\ell},
\end{equation}
as stated in~\eqref{eq:gen_s}.

It remains to prove that the operators $\hat{V}_{s-1}(\varepsilon)$ and $K_s(\varepsilon)$ are self-adjoint, in order to complete the argument. It is sufficient to show that $B_s(\varepsilon)$ is self-adjoint. We proceed by induction on $s$. For $s = 1$, we have
\begin{equation}
    B_1(\varepsilon) = V(\varepsilon),
\end{equation}
which is bounded and self-adjoint by hypothesis. Suppose now that
\begin{equation}
    B_j(\varepsilon) = B_j(\varepsilon)^\dagger, \quad \mathrm{for\ all}\ j = 1, \dots, s-1.
\end{equation}
Then, the corresponding operators $K_j(\varepsilon)$ are also self-adjoint for all $j = 1, \dots, s-1$ (Lemma~\ref{l:61}). Now consider the definition~\eqref{eq:Bs} of $B_s(\varepsilon)$: it is a linear combination of terms of the form
\begin{equation}
    (-\rmi)^n \tilde{\mathcal{K}}^\varepsilon_{\ell_1} \cdots \tilde{\mathcal{K}}^\varepsilon_{\ell_n}(A)=(-\rmi)^n[K_{\ell_1}(\varepsilon),\cdots,[K_{\ell_n}(\varepsilon),A]\cdots],
\end{equation}
where $A = H$ or $V$ and $\ell_i \leq s - 1$ for all $i=1,\dots, n$. Hence all these terms are self-adjoint operators.
\end{proof}
\subsection{Convergence of the formal expansion}
The construction presented in the previous subsection is purely formal. The operators $B_s(\xi)$, $K_s(\xi)$, and $\hat{V}_s(\xi)$ are defined through infinite series, and their convergence must still be proved. Similarly, the operators $K(\varepsilon,\xi)$ and $\hat{V}(\varepsilon,\xi)$ are introduced via formal expansions in $\varepsilon$, whose convergence remains to be established. It is therefore necessary to prove that these series converge in a nontrivial neighborhood of $\varepsilon = 0$. In order to prove the convergence of the series, we will make use of a well-known combinatorial sequence: the \emph{Catalan numbers}. These numbers are defined recursively by
\begin{equation}
    d_1 = 1; \qquad 
    d_s = \sum_{\ell=1}^{s-1} d_\ell d_{s-\ell}, \quad \mathrm{for}\quad s \geq 2.
    \label{eq:Catalan_numbers}
\end{equation}
\begin{remark}
Catalan numbers are usually indexed starting from $s=0$~\cite{stanley2015catalan}. 
Here, we adopt an equivalent convention starting from $s=1$, which is more convenient for our purposes. 
Further details on the Catalan numbers can be found in~\ref{appendix}\@.
\end{remark}
We begin by deriving explicit estimates for the operators $B_s(\varepsilon)$.
\begin{lemma}\label{l:2}
Let $H$ and $V(\varepsilon)$ be as in Theorem~\ref{thm:fund}\@.
Let $B_s(\varepsilon)$ be the operators defined in~\eqref{eq:Bs} of Lemma~\ref{l:62}. Then, for all $s \geq 1$,
\begin{equation}
    \frac{\pi}{\sqrt{3}\,\eta} \| B_s(\varepsilon) \| \leq \alpha^{s-1} b^s d_s,
    \label{second_bound}
\end{equation}
where $\eta>0$ is the minimal spectral gap of $H$,
\begin{equation}
    b = \frac{\pi v}{\sqrt{3}\,\eta} , \quad \mathrm{with} \quad v=\sup_{\varepsilon\in[0,1]}\|{V(\varepsilon)}\|,
\end{equation}
$\alpha$ is the solution of the transcendental equation
\begin{equation}\label{eq:alpha}
    (\alpha + 1)( \rme^{\frac{2}{\alpha}} - 1 ) = 3,
\end{equation}
and $d_s$, $s\geq 1$, are the Catalan numbers~\eqref{eq:Catalan_numbers}.
\end{lemma}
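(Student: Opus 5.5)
The plan is to prove \eqref{second_bound} by strong induction on $s$, working with the normalized quantities $\beta_s=\frac{\pi}{\sqrt{3}\,\eta}\|B_s(\varepsilon)\|$. Lemma~\ref{l:61} gives $\|K_\ell(\varepsilon)\|\le\beta_\ell$, and it also ensures that each $K_\ell$ preserves $D(H)$, so the nested commutators in \eqref{eq:Bs} make sense on $D(H)$. For $s=1$ we have $B_1=V(\varepsilon)$, hence $\beta_1\le b=\alpha^0 b^1 d_1$.

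For the inductive step, the first task is to remove the unbounded operator $H$ from the first sum in \eqref{eq:Bs}. The homological equation \eqref{homological} gives
\begin{equation}
\tilde{\mathcal{K}}^\varepsilon_{\ell_n}(H)=[K_{\ell_n},H]=-\rmi\{B_{\ell_n}\}.
\end{equation}
This operator is bounded, and $\|\{B_{\ell_n}\}\|\le 2\|B_{\ell_n}\|$. Each remaining commutator with $K_{\ell_i}$ costs at most $2\|K_{\ell_i}\|\le 2\beta_{\ell_i}$. Multiplying by $\frac{\pi}{\sqrt3\,\eta}$ and using $\frac{\pi}{\sqrt3\,\eta}\|V\|\le b$, I expect
\begin{equation}
\beta_s\le \sum_{n=2}^{s}\frac{2^n}{n!}\sum_{|\bm\ell|=s}\prod_{i=1}^n\beta_{\ell_i}+ b\sum_{n=1}^{s-1}\frac{2^n}{n!}\sum_{|\bm\ell|=s-1}\prod_{i=1}^n\beta_{\ell_i}.
\end{equation}
In the first sum every $\ell_i\le s-1$ because $n\ge2$. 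So the induction hypothesis applies and turns each product into $\alpha^{|\bm\ell|-n}b^{|\bm\ell|}\prod_i d_{\ell_i}$.

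The combinatorial step, which I expect to be the main point, is the coefficientwise inequality
\begin{equation}
\sum_{|\bm\ell|=m}\prod_{i=1}^n d_{\ell_i}\le d_m .
\end{equation}
I would prove it with the generating function $C(x)=\sum_{s\ge1}d_sx^s$, which satisfies $C=x+C^2$ by \eqref{eq:Catalan_numbers}. Then $C^n=C^{n-2}(C-x)$, which is coefficientwise $\le C^{n-1}$ since all coefficients are nonnegative. Iterating gives $C^n\le C$ coefficientwise, which is the claim. I would also use $d_{s-1}\le d_s$.

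Inserting these bounds, and extending the $n$-sums to infinity, gives
\begin{equation}
\beta_s\le\alpha^{s-1}b^sd_s\Big[\sum_{n\ge2}\frac{2^n\alpha^{1-n}}{n!}+\sum_{n\ge1}\frac{2^n\alpha^{-n}}{n!}\Big]
=\alpha^{s-1}b^sd_s\big[\alpha(\rme^{2/\alpha}-1)-2+\rme^{2/\alpha}-1\big].
\end{equation}
The bracket equals $(\alpha+1)(\rme^{2/\alpha}-1)-2$, which is $1$ by \eqref{eq:alpha}, closing the induction. The only care needed is the exponent bookkeeping. In the second sum the power of $\alpha$ is $\alpha^{s-1-n}$, which is exactly $\alpha^{s-1}\alpha^{-n}$. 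In the first sum it is $\alpha^{s-n}=\alpha^{s-1}\alpha^{1-n}$, consistent with the bracket above.
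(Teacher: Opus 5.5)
Your proposal is correct and follows the paper's proof essentially step by step. The shared steps are: strong induction on $s$; the homological equation to replace $[K_{\ell_n},H]$ by the bounded operator $-\rmi\{B_{\ell_n}\}$, with $\|\{B_{\ell_n}\}\|\le 2\|B_{\ell_n}\|$; the factor $2^n$ from the nested commutators; the convolution inequality $\sum_{|\bm\ell|=m}\prod_i d_{\ell_i}\le d_m$ together with $d_{s-1}\le d_s$; and the defining equation of $\alpha$ to make the bracket equal to $1$.

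The only minor deviation is in proving the Catalan inequality. You use $C=x+C^2$, so that $C^n=C^{n-1}-xC^{n-2}\le C^{n-1}$ coefficientwise. The paper's appendix instead proves it by induction on $n$ directly from the recursion. Both arguments are valid.
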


\begin{proof}
We proceed by induction on $s$. For $s = 1$, the bound~\eqref{second_bound} is trivially satisfied since
\begin{equation}
    \frac{\pi}{\sqrt{3}\,\eta} \| B_1(\varepsilon) \| = \frac{\pi}{\sqrt{3}\,\eta} \| V(\varepsilon) \| \leq\frac{\pi}{\sqrt{3}\,\eta} v= b = \alpha^{0} b^1 d_1 .
\end{equation}
Let $s \geq 2$ and assume that 
\begin{equation}
    \frac{\pi}{\sqrt{3}\,\eta} \| B_\ell(\varepsilon) \| \leq \alpha^{\ell-1} b^\ell d_\ell, \quad \mathrm{for\ all}\ \ell= 1, \dots, s-1.
\end{equation}
We have to prove that $ \frac{\pi}{\sqrt{3}\,\eta} \| B_s(\varepsilon) \| \leq \alpha^{s-1} b^s d_s$. From the definition~\eqref{eq:Bs} and using the triangle inequality, we obtain
\begin{eqnarray}
    \frac{\pi}{\sqrt{3}\,\eta} \| B_s(\varepsilon) \| 
    &\leq& \frac{\pi}{\sqrt{3}\,\eta} \sum_{n=2}^s \frac{1}{n!} \sum_{|\bm{\ell}|=s} 
    \| \tilde{\mathcal{K}}_{\ell_1}^{\varepsilon} \cdots \tilde{\mathcal{K}}_{\ell_n}^{\varepsilon}(H) \| \nonumber \\
    && {}+ \frac{\pi}{\sqrt{3}\,\eta} \sum_{n=1}^{s-1} \frac{1}{n!} \sum_{|\bm{\ell}|=s-1} 
    \| \tilde{\mathcal{K}}_{\ell_1}^{\varepsilon} \cdots \tilde{\mathcal{K}}_{\ell_n}^{\varepsilon}(V(\varepsilon)) \|.
    \label{eq:Ksm}
\end{eqnarray}
Consider  the first sum
\begin{equation}
    \frac{\pi}{\sqrt{3}\,\eta} \sum_{n=2}^s \frac{1}{n!} \sum_{|\bm{\ell}|=s} 
    \| \tilde{\mathcal{K}}_{\ell_1}^{\varepsilon} \cdots \tilde{\mathcal{K}}_{\ell_n}^{\varepsilon}(H) \|.
    \label{eq:firstsum}
\end{equation}
Recall that for all \( s \geq 1 \), \( K_s(\varepsilon) \) is the solution of the homological equation
\begin{equation}
    \rmi [K_s(\varepsilon), H] = \{ B_s(\varepsilon) \}.
\end{equation}
It follows that
\begin{equation}
    \| [K_{\ell_n}(\varepsilon), H] \| \leq 2 \| B_{\ell_n}(\varepsilon) \|,
\end{equation}
and hence
\begin{eqnarray}
\frac{\pi}{\sqrt{3}\,\eta} \| \tilde{\mathcal{K}}_{\ell_1}^{\varepsilon} \cdots \tilde{\mathcal{K}}_{\ell_n}^{\varepsilon}(H) \|
&\leq 2^n \| K_{\ell_1}(\varepsilon) \| \cdots \| K_{\ell_{n-1}}(\varepsilon) \|  \frac{\pi}{\sqrt{3}\,\eta} \| B_{\ell_n}(\varepsilon) \| \nonumber\\
&\leq 2^n \prod_{j=1}^{n} \left( \frac{\pi}{\sqrt{3}\,\eta} \| B_{\ell_j}(\varepsilon) \| \right), \label{usehomeq}
\end{eqnarray}
where in the last line we used the bound~\eqref{eq:relative} in Lemma~\ref{l:61}, that is \( \| K_{\ell_j}(\varepsilon) \| \leq \frac{\pi}{\sqrt{3}\, \eta}  \| B_{\ell_j}(\varepsilon) \| \) for all \( j = 1, \dots, n-1 \).

 Note that the multi-indices $\bm{\ell} = (\ell_1, \dots, \ell_n)$ in the sum~\eqref{eq:firstsum} satisfy 
\begin{equation}
|\bm{\ell}| = \ell_1 + \cdots + \ell_n = s.
\end{equation}
Since $K_0(\varepsilon) = 0$ by definition~\eqref{eq:K} (because $K(\varepsilon,\xi)$ vanishes at $\varepsilon = 0$), only terms with $\ell_j \geq 1$ contribute.  
Moreover, as $n \geq 2$, we have $\ell_j \leq s - 1$ for all $j$, so the inductive hypothesis applies,
\begin{equation}
\frac{\pi}{\sqrt{3}\,\eta}  \| B_{\ell_j}(\varepsilon) \| \leq \alpha^{\ell_j - 1} b^{\ell_j} d_{\ell_j}.
\end{equation}
We have
\begin{equation}
\frac{\pi}{\sqrt{3}\,\eta} \| \tilde{\mathcal{K}}_{\ell_1}^{\varepsilon} \cdots \tilde{\mathcal{K}}_{\ell_n}^{\varepsilon}(H)\|
\leq 2^n \alpha^{s - n} b^s d_{\ell_1} \cdots d_{\ell_n} 
= \left( \frac{2}{\alpha} \right)^n \alpha^s b^s d_{\ell_1} \cdots d_{\ell_n}.
\label{eq:adjoint_action_inequality}
\end{equation}
A similar argument yields
\begin{equation}
\frac{\pi}{\sqrt{3}\,\eta} \| \tilde{\mathcal{K}}_{\ell_1}^{\varepsilon} \cdots \tilde{\mathcal{K}}_{\ell_n}^{\varepsilon}(V(\varepsilon)) \|
\leq \left( \frac{2}{\alpha} \right)^n \alpha^{s-1} b^s d_{\ell_1} \cdots d_{\ell_n}.
\end{equation}
We now combine both estimates,
\begin{eqnarray}
    \frac{\pi}{\sqrt{3}\,\eta} \| B_s(\varepsilon) \| 
    \leq \Bigg[&
        \alpha\sum_{n=2}^{s} \frac{1}{n!} \left( \frac{2}{\alpha} \right)^n 
        \sum_{|\bm{\ell}|=s} d_{\ell_1} \cdots d_{\ell_n}\nonumber \\
    & {}+ 
        \sum_{n=1}^{s-1} \frac{1}{n!} \left( \frac{2}{\alpha} \right)^n 
        \sum_{|\bm{\ell}|=s-1} d_{\ell_1} \cdots d_{\ell_n}
    \Bigg]\,\alpha^{s-1}b^s.
\end{eqnarray}
Using the inequality
\begin{equation}
   \sum_{|\bm{\ell}|=s} d_{\ell_1} \cdots d_{\ell_n} \leq d_s,
   \label{eq:Catalan}
\end{equation}
proved in the~\ref{appendix} in the Theorem~\ref{th:property1},
we get
\begin{equation}
    \frac{\pi}{\sqrt{3}\,\eta} \| B_s(\varepsilon) \| 
    \leq\left[ \alpha \sum_{n=2}^s \left( \frac{2}{\alpha} \right)^n \frac{d_s}{n!} + \sum_{n=1}^{s-1} \left( \frac{2}{\alpha} \right)^n \frac{d_{s-1}}{n!} \right] \alpha^{s-1} b^s.
\end{equation}
Since the terms in the sums over $n$ are positive, we can extend the sums to $+\infty$ to get
\begin{equation}
\frac{\pi}{\sqrt{3}\,\eta} \| B_s(\varepsilon) \|
\leq \left[ \alpha \left( \rme^{\frac{2}{\alpha}} - 1 - \frac{2}{\alpha} \right) d_s + ( \rme^{\frac{2}{\alpha}} - 1 ) d_{s-1} \right] \alpha^{s-1} b^s.
\label{eq:inpol}
\end{equation}
Using the Taylor expansion
\begin{equation}
\rme^{\frac{2}{\alpha}} = 1 + \frac{2}{\alpha} + \sum_{n=2}^\infty \frac{1}{n!} \left( \frac{2}{\alpha} \right)^n,
\end{equation}
and observing that $d_{s-1} \leq d_s$, we conclude
\begin{equation}
\frac{\pi}{\sqrt{3}\,\eta} \| B_s(\varepsilon) \|
\leq \left[ (\alpha + 1) ( \rme^{\frac{2}{\alpha}} - 1 ) - 2 \right] d_s \alpha^{s-1}b^s.
\end{equation}
Since $\alpha$ solves~\eqref{eq:alpha}, the quantity in the brackets is equal to $1$, and we obtain the desired bound.
\end{proof}
By using the estimates for the operators $B_s(\varepsilon)$ established in Lemma~\ref{l:2}, we can prove that the operators $\hat{V}(\varepsilon) = \hat{V}(\varepsilon,\varepsilon)$ and $K(\varepsilon)= K(\varepsilon,\varepsilon)$ are bounded for sufficiently small values of $\varepsilon$.  As a preliminary step, we show that the coefficients appearing in the expansions of these operators can be themselves bounded in terms of the norms of the $B_s(\varepsilon)$.
\begin{lemma}\label{l:bounds_coeffs}
For all $s \geq 1$, the operators $\hat{V}_{s-1}(\varepsilon)$ and $K_s(\varepsilon)$ given in Lemma~\ref{l:63} satisfy the bounds
\begin{eqnarray}
    \| \hat{V}_{s-1}(\varepsilon) \| \leq \| B_s(\varepsilon) \|, \label{eq:boundsV}\\
    \| K_s(\varepsilon) \| \leq \frac{\pi}{\sqrt{3}\, \eta} \| B_s(\varepsilon) \|. \label{eq:bounds}
\end{eqnarray}
\end{lemma}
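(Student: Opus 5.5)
The plan is to read off both bounds directly from the explicit formulas of Lemma~\ref{l:63}, once it is known that each $B_s(\varepsilon)$ is a bounded operator.

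\emph{Step 1: boundedness of $B_s(\varepsilon)$.} This follows by the induction already used in Lemma~\ref{l:2}. Each nested commutator $\tilde{\mathcal{K}}^\varepsilon_{\ell_1}\cdots\tilde{\mathcal{K}}^\varepsilon_{\ell_n}(H)$ has $[K_{\ell_n}(\varepsilon),H]$ as its innermost commutator. On $D(H)$ this equals $-\rmi\{B_{\ell_n}(\varepsilon)\}$ by Lemma~\ref{l:61}, so it extends to a bounded operator. All outer commutators are with bounded operators.

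\emph{Step 2: the bound on $\hat V_{s-1}$.} Lemma~\ref{l:63} gives $\hat V_{s-1}(\varepsilon)=[B_s(\varepsilon)]=\sum_k P_kB_s(\varepsilon)P_k$. Take $\psi\in\mathcal{H}$. Since the ranges of the $P_k$ are mutually orthogonal,
\begin{equation*}
\|[B_s(\varepsilon)]\psi\|^2=\sum_k\|P_kB_s(\varepsilon)P_k\psi\|^2\le\|B_s(\varepsilon)\|^2\sum_k\|P_k\psi\|^2=\|B_s(\varepsilon)\|^2\|\psi\|^2 .
\end{equation*}
This yields~\eqref{eq:boundsV}. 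It is the same estimate $\|[B]\|\le\|B\|$ already used inside the proof of Lemma~\ref{l:61}.

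\emph{Step 3: the bound on $K_s$.} By~\eqref{eq:gen_s}, $K_s(\varepsilon)$ is exactly the operator~\eqref{eq:solution} of Lemma~\ref{l:61} with $B=B_s(\varepsilon)$. Equivalently, it is the unique solution with vanishing diagonal part of the homological equation~\eqref{homological}. The estimate $\|X\|\le\frac{\pi}{\sqrt3\,\eta}\|B\|$ of Lemma~\ref{l:61} then gives~\eqref{eq:bounds}.

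The only point needing care is the legitimacy of invoking Lemma~\ref{l:61}, namely that $B_s(\varepsilon)\in B(\mathcal H)$, which is Step~1. Everything else is a direct application of results already proved.
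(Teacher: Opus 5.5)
Your proposal is correct and matches the paper's proof. The first bound comes from $\hat V_{s-1}(\varepsilon)=[B_s(\varepsilon)]$ together with $\|[B]\|\le\|B\|$, and the second is the estimate $\|X\|\le\frac{\pi}{\sqrt{3}\,\eta}\|B\|$ of Lemma~\ref{l:61} with $B=B_s(\varepsilon)$; your Step~1 (boundedness of $B_s(\varepsilon)$) and the orthogonality computation for $\|[B]\|\le\|B\|$ just make explicit what the paper leaves implicit.
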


\begin{proof}
According to the Lemma~\ref{l:63}$, \hat{V}_{s-1}(\varepsilon)=[B_s(\varepsilon)]$.
Hence,
\begin{equation}
    \| \hat{V}_{s-1}(\varepsilon) \| \leq \| B_s(\varepsilon) \|.
\end{equation}

The bound~\eqref{eq:bounds} for $\| K_s(\varepsilon) \|$ follows directly from the bound~\eqref{eq:relative} of Lemma~\ref{l:61}, since $K_s(\varepsilon)$ is the solution of the homological equation.
\end{proof}
Now, we are ready to prove the boundedness of the operators $K(\varepsilon)$ and $\hat{V}(\varepsilon)$.

\begin{lemma}\label{th:estimate1}
Let $K(\varepsilon)$ and $\hat{V}(\varepsilon)$ be the diagonal evaluation of the operators $K(\varepsilon,\xi)$ and $\hat{V}(\varepsilon,\xi)$ defined by the expansions~\eqref{eq:K} and~\eqref{eq:V}, namely,
\begin{equation}
    K(\varepsilon) = \sum_{s \geq 1} \varepsilon^s K_s(\varepsilon),  \qquad
     \hat{V}(\varepsilon) = \sum_{s \geq 0} \varepsilon^s \hat{V}_s(\varepsilon), 
     \label{eq:diagseries}
 \end{equation}
with $K_s(\varepsilon)$ and $\hat{V}_s(\varepsilon)$ being the bounded self-adjoint operators given in Lemma~\ref{l:63}\@.
Let 
\begin{equation}
v= \sup_{\varepsilon\in[0,1]}\|V(\varepsilon)\| < +\infty,
\end{equation}
and
\begin{equation}
b=\frac{\pi v}{\sqrt{3}\,\eta}, \qquad \rho=\frac{4\pi\alpha}{\sqrt{3}},
\end{equation}
where $\eta>0$ is the minimal spectral gap of the Hamiltonian $H$, and $\alpha$ is the constant in Table~\ref{tab:constants}\@.

Then, for all $\varepsilon$ such that
\begin{equation}
0\le\varepsilon\le\frac{\eta}{v\rho},
\end{equation}
the series~\eqref{eq:diagseries} converge in norm, and the following bounds hold,
\begin{equation}
\|K(\varepsilon)\|\le \varepsilon b\mathcal D(\alpha\varepsilon b),
\qquad 
\|\hat V(\varepsilon)\|\le v\mathcal{D}(\alpha \varepsilon b),
\end{equation}
where 
\begin{equation}
\mathcal D(y)=\frac{1-\sqrt{1-4y}}{2y}
\label{eq:genCat}
\end{equation}
is the generating function of the Catalan numbers. 
\end{lemma}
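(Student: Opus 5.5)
The plan is to bound the two series in \eqref{eq:diagseries} term by term in norm, combining Lemma~\ref{l:bounds_coeffs} with Lemma~\ref{l:2}, and then to sum the resulting majorant series using the generating function of the Catalan numbers. Since $\mathcal{B(H)}$ is a Banach space, absolute convergence of the norm series gives norm convergence of \eqref{eq:diagseries}, with the norm of the sum bounded by the sum of the norms.

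For $K(\varepsilon)$, Lemma~\ref{l:bounds_coeffs} and Lemma~\ref{l:2} give $\|K_s(\varepsilon)\|\le \frac{\pi}{\sqrt3\,\eta}\|B_s(\varepsilon)\|\le \alpha^{s-1}b^s d_s$. Hence
\begin{equation*}
\sum_{s\ge1}\varepsilon^s\|K_s(\varepsilon)\|\le \varepsilon b\sum_{s\ge1} d_s(\alpha\varepsilon b)^{s-1}.
\end{equation*}
With the indexing convention $d_1=1$, the right-hand side is $\varepsilon b\,\mathcal D(\alpha\varepsilon b)$. The identity $\sum_{s\ge1}d_s y^{s-1}=\mathcal D(y)$ is the standard Catalan generating function; I take it from the appendix or verify it directly. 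Setting $f(y)=\sum_{s\ge 1} d_s y^s$, the recursion gives $f=y+f^2$, so $f(y)=(1-\sqrt{1-4y})/2$ and therefore $f(y)/y=\mathcal D(y)$. For $\hat V(\varepsilon)$, the bound $\|\hat V_{s-1}(\varepsilon)\|\le\|B_s(\varepsilon)\|\le \frac{\sqrt3\,\eta}{\pi}\alpha^{s-1}b^sd_s = v\,\alpha^{s-1}b^{s-1}d_s$ yields
\begin{equation*}
\sum_{s\ge0}\varepsilon^s\|\hat V_s(\varepsilon)\|\le v\sum_{s\ge1} d_s(\alpha\varepsilon b)^{s-1}=v\,\mathcal D(\alpha\varepsilon b).
\end{equation*}

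The remaining point is the radius of convergence. The Catalan series converges, including at the endpoint, exactly when $y\le 1/4$, because $d_s\sim 4^{s}s^{-3/2}$ up to constants, so $\sum_s d_s 4^{-s}<\infty$ and the closed form remains valid there by Abel's theorem. The condition $\alpha\varepsilon b\le 1/4$ reads $\varepsilon\le \frac{\sqrt3\,\eta}{4\pi\alpha v}=\frac{\eta}{v\rho}$, which is precisely the hypothesis. I expect this endpoint bookkeeping to be the only delicate step, and it amounts to checking that the Catalan majorant series converges at $y=1/4$. The uniformity in the second variable $\xi$ is automatic, since the bounds of Lemma~\ref{l:2} depend only on $v$.
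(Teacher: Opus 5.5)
Your proposal is correct and follows the paper's proof: term-by-term norm bounds from Lemma~\ref{l:bounds_coeffs} and Lemma~\ref{l:2}, summed with the Catalan generating function, and the radius condition $\alpha\varepsilon b\le 1/4$ rewritten as $\varepsilon\le\eta/(v\rho)$. The paper simply asserts convergence for $|y|\le 1/4$. Your endpoint check via $d_s\sim 4^s s^{-3/2}$ and Abel's theorem is slightly more careful than that.
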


\begin{proof}
We start by estimating the norm of the operator $K(\varepsilon)$. Using the bound~\eqref{eq:bounds} and Lemma~\ref{l:2}, we obtain
\begin{equation}
    \| K(\varepsilon,\xi) \| \leq \sum_{\ell \geq 1}  \varepsilon ^\ell \| K_\ell(\xi) \|
    \leq  \varepsilon  b \sum_{\ell \geq 1} d_\ell (\alpha  \varepsilon  b)^{\ell - 1},
\end{equation}
uniformly in $\xi\in[0,1]$.
Moreover, it can be shown that the series of the generating function of the Catalan numbers converges for all $|y| \leq \frac{1}{4}$, namely,
\begin{equation}
    \sum_{\ell \geq 1} d_\ell y^{\ell - 1} = \mathcal{D}(y).
\end{equation}
See  Theorem~\ref{th:Catalan_gen} in~\ref{appendix}\@.
    Hence,
\begin{equation}
    \| K(\varepsilon) \| = \| K(\varepsilon,\varepsilon) \| \leq \sup_{\xi\in[0,1]} \| K(\varepsilon,\xi) \| \leq  \varepsilon  b  \mathcal{D}(\alpha  \varepsilon  b),
\end{equation}
provided that
\begin{equation}
    0\leq \varepsilon  \leq \frac{1}{4 \alpha b}
    = \frac{\sqrt{3}\, \eta}{4 v \pi \alpha}
    = \frac{\eta}{v\rho}.
\end{equation}
The bound for $\hat{V}(\varepsilon)$ follows similarly. Using the bound~\eqref{eq:boundsV}, we can write
\begin{equation}
    \| \hat{V}(\varepsilon,\xi) \| 
    \leq \sum_{\ell \geq 0}  \varepsilon ^\ell \| \hat{V}_\ell(\xi) \| 
    \leq \sum_{\ell \geq 0}  \varepsilon ^\ell \| B_{\ell+1}(\xi) \|.
\end{equation}
And, by using the bound of Lemma~\ref{l:2},
\begin{equation}
  \| B_{\ell+1}(\xi) \| \leq \frac{\sqrt{3}\,\eta}{\pi} \alpha^{\ell} b^{\ell+1} d_{\ell+1} = v (\alpha b)^{\ell} d_{\ell+1},  
\end{equation}
 we get, after shifting the index of the sum,
\begin{equation}
   \fl\qquad \quad \| \hat{V}(\varepsilon) \| = \| \hat{V}(\varepsilon,\varepsilon) \| \leq \sup_{\xi\in[0,1]}\| \hat{V}(\varepsilon,\xi) \| 
    \leq v \sum_{\ell \geq 1} d_\ell (\alpha  \varepsilon  b)^{\ell - 1}
    = v  \mathcal{D}(\alpha  \varepsilon  b),
\end{equation}
again for $0\leq \varepsilon  \leq \eta / (v\rho)$.
\end{proof}
This lemma concludes the construction of the bounded self-adjoint operators $K(\varepsilon)$ and $\hat{V}(\varepsilon)$. 
\subsection{\texorpdfstring{Proof of $(iii)$ and $(vi)$ of Theorem~\ref{thm:fund}}{Proof of (iii) and (vi) of Theorem~\ref{thm:fund}}}
In the previous subsections, we have explicitly constructed the operators $W(\varepsilon) = \rme^{\rmi K(\varepsilon)}$ and $\hat{V}(\varepsilon)$ appearing in Theorem~\ref{thm:fund}\@. We now verify that they satisfy the required properties. Properties $(i)$, $(ii)$, $(iv)$,  and $(v)$ follow directly from the construction. It remains to prove properties $(iii)$ and $(vi)$. We begin with property $(iii)$, namely that, for all $\psi \in D(H)$, we have $\| HW(\varepsilon)\psi \| <+ \infty$. In fact, we can prove a more quantitative bound in terms of the Catalan generating function $\mathcal{D}$.
\begin{lemma}
Let $W(\varepsilon) = \rme^{\rmi K(\varepsilon)}$ with $K(\varepsilon)$ as in~\eqref{eq:diagseries} of Lemma~\ref{th:estimate1},  and assume that 
$0\leq \varepsilon \le\eta/(v\rho)$, with $\rho$, $v$, and $\eta$ as in Theorem~\ref{thm:fund}\@. Then, for all $\psi \in D(H)$,
\begin{equation}
\| HW(\varepsilon)\psi \|
\leq
\rme^{ \varepsilon  b\mathcal{D}(\alpha  \varepsilon  b)}
\,\Bigl(
\| H\psi \|
+
2 v \mathcal{D}(\alpha  \varepsilon  b)
\|\psi\|
\Bigr),
\end{equation}
where $\mathcal{D}$ is the generating function of the Catalan numbers~\eqref{eq:genCat}, $\alpha$ is the constant in Table~\ref{tab:constants}, and 
\begin{equation}\label{eq:b}
 b=\frac{\pi v}{\sqrt{3}\,\eta}.
 \end{equation}
\end{lemma}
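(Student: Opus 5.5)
The natural route is a Duhamel-type (Gr\"onwall) argument on the one-parameter family $W_\tau=\rme^{\rmi\tau K(\varepsilon)}$, $\tau\in[0,1]$. The starting point is a bound on the commutator $[H,K(\varepsilon)]$. By construction, each $K_s(\varepsilon)$ solves the homological equation $\rmi[K_s(\varepsilon),H]=\{B_s(\varepsilon)\}$ on $D(H)$ (Lemma~\ref{l:61}). Hence $K_s(\varepsilon)D(H)\subseteq D(H)$ and $\|[H,K_s(\varepsilon)]\psi\|\le \|\{B_s(\varepsilon)\}\|\|\psi\|\le 2\|B_s(\varepsilon)\|\|\psi\|$. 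Summing over $s$ with weights $\varepsilon^s$ and using Lemma~\ref{l:2} in the form $\|B_s(\varepsilon)\|\le v(\alpha b)^{s-1}d_s$, exactly as in the proof of Lemma~\ref{th:estimate1}, gives
\begin{equation*}
\|[H,K(\varepsilon)]\psi\|\le 2\varepsilon v\,\mathcal D(\alpha\varepsilon b)\|\psi\|,\qquad \psi\in D(H).
\end{equation*}
Before this can be used, I need $K(\varepsilon)D(H)\subseteq D(H)$ for the full series. I would get it from closedness of $H$. The partial sums map $D(H)$ into $D(H)$, and they converge in norm. Applying $H$ to them gives $\sum_s\varepsilon^s(K_sH\psi+[H,K_s]\psi)$, which also converges because both pieces are norm-summable. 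Closedness of $H$ then yields $K(\varepsilon)\psi\in D(H)$, together with the commutator bound above.

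Next I estimate $\|HW_\tau\psi\|$. For $\psi\in D(H)$, define $f(\tau)=W_\tau^\dagger H W_\tau\psi$. To have this well defined, I first show that $W_\tau=\sum_n(\rmi\tau K)^n/n!$ preserves $D(H)$. Again this follows from closedness of $H$: iterating gives $\|HK^n\psi\|\le \|K\|^n\|H\psi\|+n\|K\|^{n-1}c\|\psi\|$, with $c=2\varepsilon v\mathcal D$, and these bounds are summable in $n$. Differentiating formally gives
\begin{equation*}
f'(\tau)=\rmi W_\tau^\dagger [H,K] W_\tau\psi,
\end{equation*}
so that $\|f'(\tau)\|\le c\|\psi\|$. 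This would in fact give $\|HW\psi\|=\|f(1)\|\le\|H\psi\|+c\|\psi\|$, which is even sharper than the claim. Since the formal derivative is the delicate point, a safer route is to bound the series directly. Write $HK^n\psi=K^nH\psi+\sum_{j}K^{j}[H,K]K^{n-1-j}\psi$. This gives
\begin{equation*}
\|HW\psi\|\le\sum_n\frac{\|K\|^n}{n!}\|H\psi\|+\sum_n\frac{n\|K\|^{n-1}}{n!}c\|\psi\|=\rme^{\|K\|}(\|H\psi\|+c\|\psi\|).
\end{equation*}
Inserting $\|K(\varepsilon)\|\le\varepsilon b\mathcal D(\alpha\varepsilon b)$ from Lemma~\ref{th:estimate1} and $c=2v\varepsilon\mathcal D(\alpha\varepsilon b)\le 2v\mathcal D(\alpha\varepsilon b)$ (using $\varepsilon\le1$) yields exactly the stated inequality.

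The main obstacle is the domain bookkeeping. One must justify that $H$ can be applied termwise to the series defining both $K(\varepsilon)$ and $\rme^{\rmi K(\varepsilon)}$. I would handle this each time through the closedness of $H$: a sequence $\psi_N\to\psi$ with $H\psi_N$ Cauchy forces $\psi\in D(H)$. The required Cauchy property comes from the absolute norm-summability established above.
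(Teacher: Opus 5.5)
Your proof is correct and is essentially the paper's argument: the paper sums the relative bounds of Lemma~\ref{l:61} over $s$ to control $HK(\varepsilon)\psi$ (equivalent to your $\|K(\varepsilon)\|\le\varepsilon b\mathcal{D}(\alpha\varepsilon b)$ together with $\|[H,K(\varepsilon)]\psi\|\le 2\varepsilon v\mathcal{D}(\alpha\varepsilon b)\|\psi\|$), proves $\|HK(\varepsilon)^n\psi\|\le(\|H\psi\|+2n\frac{\sqrt{3}\,\eta}{\pi}\|\psi\|)(\varepsilon b\mathcal{D}(\alpha\varepsilon b))^n$ by induction, and sums the exponential series to reach $\rme^{\varepsilon b\mathcal{D}(\alpha\varepsilon b)}(\|H\psi\|+2\varepsilon v\mathcal{D}(\alpha\varepsilon b)\|\psi\|)$, which is the same intermediate bound you obtain before using $\varepsilon\le1$. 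Your closedness-of-$H$ argument makes rigorous the termwise application of $H$ to the series for $K(\varepsilon)$ and $\rme^{\rmi K(\varepsilon)}$, a step the paper leaves implicit.
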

\begin{proof}
Recall the expansion of $ K(\varepsilon)$ in~\eqref{eq:diagseries},
where the  $K_s(\varepsilon)$ are solutions the homological equations. By  inequality~\eqref{eq:relative} of Lemma~\ref{l:61}, they satisfy the bound
\begin{equation} \label{eq:relative1}
    \| HK_s (\varepsilon)\psi \| \leq \left( 2\| \psi \| + \frac{\pi}{\sqrt{3}\,\eta} \| H\psi \| \right) \| B_s(\varepsilon) \|, \quad \forall \psi \in D(H).
\end{equation}
This yields
\begin{equation}
    \| HK(\varepsilon) \psi \| \leq \left( 2 \| \psi \| + \frac{\pi}{\sqrt{3}\,\eta} \| H\psi \| \right) \sum_{s \geq 1}  \varepsilon ^s \| B_s(\varepsilon) \|.
\end{equation}
Using the bound of Lemma~\ref{l:2},
\begin{equation}
    \| B_s(\varepsilon) \| \leq \frac{\sqrt{3}\,\eta}{\pi} \alpha^{s-1} d_s b^s,
\end{equation}
we find, for $\varepsilon<\eta/(v\rho)$, that
\begin{equation} \label{eq:boundKeps}
    \| HK(\varepsilon)\psi \| \leq \left( \| H\psi \| + 2 \frac{\sqrt{3}\,\eta}{\pi} \| \psi \| \right)  \varepsilon  b  \mathcal{D}(\alpha  \varepsilon  b),
\end{equation}
where $\mathcal{D}(y)$ is the generating function of the Catalan numbers. 

This shows that $K(\varepsilon) D(H) \subseteq D(H)$, and we now prove by induction that, for all $n \geq 1$,
\begin{equation}
    \| H K(\varepsilon)^n \psi \| \leq \left( \| H\psi \| + 2n \frac{\sqrt{3}\,\eta}{\pi} \| \psi \| \right) (   \varepsilon  b  \mathcal{D}(\alpha  \varepsilon  b) )^n.
\end{equation}
We have already shown the case $n = 1$. Suppose the estimate holds for $n = s-1$,
\begin{equation} \label{eq:induction}
    \| H K(\varepsilon)^{s-1} \psi \| \leq \left( \| H\psi \| + 2(s-1) \frac{\sqrt{3}\,\eta}{\pi} \| \psi \| \right) (   \varepsilon  b  \mathcal{D}(\alpha  \varepsilon  b) )^{s-1}.
\end{equation}
Then, since $K(\varepsilon)\psi \in D(H)$, we can apply~\eqref{eq:induction} to $\varphi = K(\varepsilon)\psi$, obtaining
\begin{eqnarray}
    \| H K(\varepsilon)^s \psi \| 
    &= \| H K(\varepsilon)^{s-1} K(\varepsilon) \psi \| \nonumber\\
    &\leq \left( \| HK(\varepsilon)\psi \| + 2(s-1) \frac{\sqrt{3}\,\eta}{\pi} \| K(\varepsilon)\psi \| \right)  \varepsilon  b  \mathcal{D}(\alpha  \varepsilon  b) \nonumber \\
    &\leq \left( \| H\psi \| + 2s \frac{\sqrt{3}\,\eta}{\pi} \| \psi \| \right) (   \varepsilon  b  \mathcal{D}(\alpha  \varepsilon  b) )^s,
\end{eqnarray}
where we used Theorem~\ref{th:estimate1} and~\eqref{eq:boundKeps}.

We are now ready to estimate $\| HW(\varepsilon) \psi \|$ for all $\psi \in D(H)$,
\begin{eqnarray}
    \| HW(\varepsilon) \psi \| 
    &= \| H \psi + H(W(\varepsilon) - \mathbb{I}) \psi \| \nonumber\\
    &\leq \| H \psi \| + \left\| H \sum_{n \geq 1} \frac{(-\rmi)^n K(\varepsilon)^n}{n!} \psi \right\| \nonumber\\
    &\leq \| H \psi \| + \sum_{n \geq 1} \frac{ \| H K(\varepsilon)^n \psi \| }{n!} \nonumber\\
    &\leq \| H \psi \| + \sum_{n \geq 1} \left( \| H\psi \| + 2n \frac{\sqrt{3}\,\eta}{\pi} \| \psi \| \right) \frac{ (  \varepsilon  b  \mathcal{D}(\alpha  \varepsilon  b) )^n }{n!} \nonumber\\
    &= \rme^{  \varepsilon  b  \mathcal{D}(\alpha  \varepsilon  b)} \left( \| H\psi \| + 2 \frac{\sqrt{3}\,\eta}{\pi}   \varepsilon  b  \mathcal{D}(\alpha  \varepsilon  b) \| \psi \| \right)\nonumber\\
    &=\rme^{  \varepsilon  b  \mathcal{D}(\alpha  \varepsilon  b)}\,\Bigl( \| H\psi \| + 2   \varepsilon  v \mathcal{D}(\alpha  \varepsilon  b) \| \psi \|\Bigr),
\end{eqnarray}
which is valid again for $\varepsilon\leq\eta/(v\rho)$. In the last step, we have used the definition~\eqref{eq:b} of $b$.
\end{proof}
The last point to verify is property~$(vi)$ of Theorem~\ref{thm:fund}\@. 
We first prove the following lemma.

\begin{lemma}\label{lem:Weps_bound}
Let $W(\varepsilon) = \rme^{\rmi K(\varepsilon)}$ with $K(\varepsilon)$ as in~\eqref{eq:diagseries} of Lemma~\ref{th:estimate1},  and assume that 
$0\leq \varepsilon \le\eta/(v\rho)$, with $\rho$, $v$, and $\eta$ as in Theorem~\ref{thm:fund}\@.
Then,
\begin{equation}
    \| W(\varepsilon) - \mathbb{I} \| 
    \le \rme^{ \varepsilon  b \mathcal{D}(\alpha  \varepsilon  b)} - 1,
    \label{eq:inequalityepskam}
\end{equation}
where $\mathcal{D}$ is the generating function of the Catalan numbers~\eqref{eq:genCat}, $\alpha$ is the constant in Table~\ref{tab:constants}, and $b=\pi v/(\sqrt{3}\,\eta)$.
\end{lemma}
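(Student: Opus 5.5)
The plan is to expand the exponential in its power series and bound it term by term using the norm estimate on $K(\varepsilon)$ from Lemma~\ref{th:estimate1}. Since $K(\varepsilon)$ is a bounded self-adjoint operator (its series converges in norm for $0\le\varepsilon\le\eta/(v\rho)$), the exponential $W(\varepsilon)=\rme^{\rmi K(\varepsilon)}$ is given by its norm-convergent power series. Hence
\begin{equation}
W(\varepsilon)-\mathbb{I}=\sum_{n\ge1}\frac{(\rmi K(\varepsilon))^n}{n!}.
\end{equation}

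First I apply the triangle inequality and submultiplicativity of the operator norm to get
\begin{equation}
\|W(\varepsilon)-\mathbb{I}\|\le\sum_{n\ge1}\frac{\|K(\varepsilon)\|^n}{n!}=\rme^{\|K(\varepsilon)\|}-1.
\end{equation}
Then I insert the bound $\|K(\varepsilon)\|\le \varepsilon b\mathcal{D}(\alpha\varepsilon b)$ from Lemma~\ref{th:estimate1}, which is valid on the stated range of $\varepsilon$. Because $x\mapsto \rme^x-1$ is increasing on $[0,\infty)$, this gives~\eqref{eq:inequalityepskam} directly.

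There is no real obstacle here. The only points to check are that the exponential series converges in norm, which follows from boundedness of $K(\varepsilon)$, and that the admissible range $\varepsilon\le\eta/(v\rho)=1/(4\alpha b)$ guarantees $\alpha\varepsilon b\le 1/4$. The latter ensures that $\mathcal{D}$ is finite and the estimate of Lemma~\ref{th:estimate1} applies.

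If a sharper bound were wanted, one could use instead that $\|\rme^{\rmi K}-\mathbb{I}\|=\sup_{\lambda\in\sigma(K)}|\rme^{\rmi\lambda}-1|\le\|K\|$ for self-adjoint $K$. The stated exponential bound is what the power series gives, however, and it is what feeds property $(vi)$ of Theorem~\ref{thm:fund}.
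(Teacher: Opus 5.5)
Your proposal is correct and follows the paper's proof: you expand $\rme^{\rmi K(\varepsilon)}-\mathbb{I}$ as a norm-convergent power series, bound it by $\rme^{\|K(\varepsilon)\|}-1$, and insert $\|K(\varepsilon)\|\le\varepsilon b\mathcal{D}(\alpha\varepsilon b)$ from Lemma~\ref{th:estimate1} on the range $\alpha\varepsilon b\le 1/4$. Your closing remark that $\|\rme^{\rmi K}-\mathbb{I}\|\le\|K\|$ for self-adjoint $K$ is also valid, and it would give the slightly sharper bound $\varepsilon b\mathcal{D}(\alpha\varepsilon b)$ directly.
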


\begin{proof}
We estimate
\begin{eqnarray}
    \| W(\varepsilon) - \mathbb{I} \| 
    &= \| \rme^{\rmi K(\varepsilon)} - \mathbb{I}\| 
      = \left\| \sum_{n \ge 1} \frac{(-\rmi K(\varepsilon))^n}{n!} \right\| \nonumber\\
    &\le \sum_{n \ge 1} \frac{\| K(\varepsilon) \|^n}{n!} 
      = \rme^{\| K(\varepsilon) \|} - 1 \le \rme^{ \varepsilon  b \mathcal{D}(\alpha  \varepsilon  b)} - 1,
\end{eqnarray}
for all \(  \varepsilon  \le \eta / (v\rho) \).
The last step follows from Theorem~\ref{th:estimate1}\@.
\end{proof}

Let us define
\begin{equation}
    f_\alpha(x) = \rme^{\frac{x}{4\alpha} \mathcal{D}(\frac{x}{4})} - 1.
\end{equation}
A direct analysis shows that, for \(0 \leq  x \leq 1 \),
\begin{equation}
    f_\alpha(x) \leq ( \rme^{\frac{1}{2\alpha}} - 1 ) x.
    \label{eq:lin_bou}
\end{equation}
Setting \( x =  4\alpha\varepsilon  b \), we obtain
\begin{equation}\label{linear_bound}
    \| W(\varepsilon) - \mathbb{I} \| 
    \le \frac{\beta}{4} \frac{v}{\eta}  \varepsilon,
    \quad \mathrm{for}\quad  \varepsilon  \le \frac{\eta}{v\rho},
\end{equation}
where
\begin{equation}
    \beta = \frac{16\pi\alpha}{\sqrt{3}} 
    ( \rme^{\frac{1}{2\alpha}} - 1).
\end{equation}
This completes the proof of property~$(vi)$.
\begin{remark}
    Alternatively, \( f_\alpha(x) \) may be approximated by a quadratic upper bound,
\begin{equation}\label{quadratic_bound}
    f_\alpha(x) \leq \frac{x}{4\alpha} + c x^2 ,
\end{equation}
for $0\leq x\leq 1$, where
\begin{equation}
    c =   \rme^{\frac{1}{2\alpha}} - 1  - \frac{1}{4\alpha} .
\end{equation}
A comparison between \( f_\alpha(x) \), its optimal linear bound~\eqref{linear_bound},
and the quadratic upper bound~\eqref{quadratic_bound} is shown in Fig.~\ref{fig:enter-label}\@.
\end{remark}
\begin{figure}[ht]
    \centering
    \includegraphics[width=0.7\linewidth]{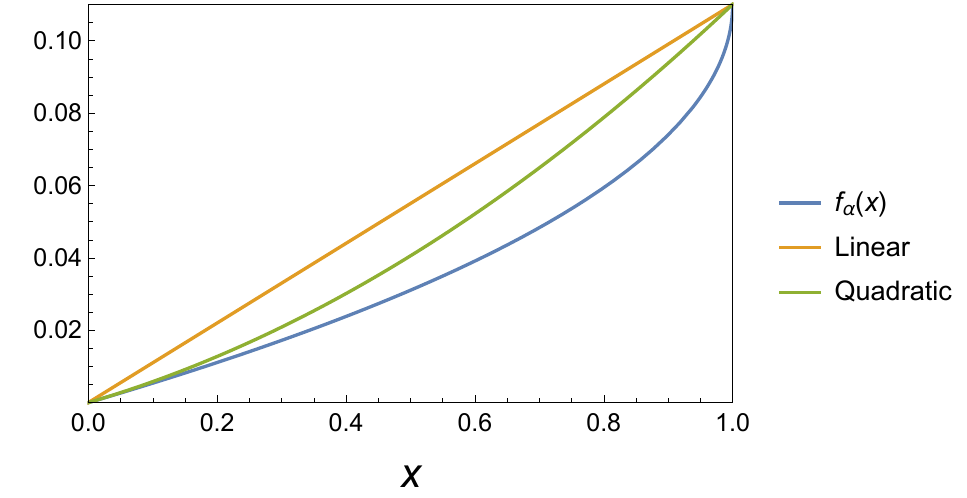}
    \caption{Comparison of $ f_\alpha(x) = \rme^{\frac{x}{4\alpha} \mathcal{D}(\frac{x}{4})} - 1 = \rme^{\frac{1-\sqrt{1-x}}{2\alpha}} - 1$ with its linear bound~\eqref{linear_bound} and quadratic upper bound~\eqref{quadratic_bound}, as a function of $x= 4\alpha \varepsilon b$, with $b=\pi v/(\sqrt{3}\,\eta)$.}
    \label{fig:enter-label}
\end{figure}

The following example illustrates the application of our results to a concrete physical model arising in superconducting circuits.

\begin{example}\label{ex:josephson}
We consider the Hamiltonian of a Josephson junction in an inductive loop~\cite{quasicharge},
\begin{equation}
H_{\mathrm{JJ}}=
4E_C\left(\frac{Q}{2e}\right)^2
+\frac12 E_L\phi^2
-E_J\cos(\phi-\phi_{\mathrm{ext}}).
\end{equation}
Here, $Q$ denotes the charge operator,  and $\phi$ denotes the phase operator describing the phase difference across the junction, while $e$ is the elementary electric charge, $E_C$ is the charging energy, $E_L$ is the inductive energy, $E_J$ is the Josephson energy, and $\phi_{\mathrm{ext}}$ is a constant phase describing the flux biasing of the circuit.

In the phase representation, $\phi$ acts as a multiplication operator on the wave function $\psi(\phi)$, while
\begin{equation}
   \frac{Q}{2e}  = -\rmi \frac{\dd}{\dd \phi} 
\end{equation}
is the corresponding momentum operator.

In order to apply the KAM iteration, we write the Hamiltonian in the form
\begin{equation}
H(\varepsilon)=H+\varepsilon V,
\end{equation}
where $H$ is the unperturbed Hamiltonian and $V$ is the perturbation. The operator $H$ must have pure point spectrum and a nonvanishing minimal spectral gap. The identification of $H$ and $V$ depends on the domain on which the Hamiltonian is defined.

If we work on the whole real line, then
\begin{equation}
D(H)=\{\psi\in H^2(\mathbb{R}) : \phi^2\psi\in L^2(\mathbb{R})\},
\end{equation}
with $H^2(\mathbb{R})$ being the second Sobolev space of square integrable functions with square integrable (weak) second derivative. In this case, we take
\begin{equation}
H=
4E_C\left(\frac{Q}{2e}\right)^2
+\frac12 E_L\phi^2,
\end{equation}
which corresponds to a quantum harmonic oscillator with frequency
\begin{equation}
\omega=\sqrt{8E_CE_L}.
\end{equation}
The operator $H$ satisfies the hypotheses of Theorem~\ref{th:bicomm}, with minimal spectral gap $\eta=\sqrt{8E_CE_L}$. The perturbation is
\begin{equation}
\varepsilon V=-E_J\cos(\phi-\phi_{\mathrm{ext}}),
\end{equation}
which is bounded, with
\begin{equation}
 \varepsilon \|V\|=E_J.
\end{equation}
Therefore, the KAM iteration can be applied whenever
\begin{equation}
 \varepsilon \|V\|\le\frac{\eta}{\rho},
\end{equation}
where $\rho\approx34.7$ is the constant defined in Table~\ref{tab:constants}\@. 
This happens if
\begin{equation}
\frac{E_J/E_C}{\sqrt{E_L/E_C}}\le\frac{2\sqrt{2}}{\rho}.
\end{equation}

If, on the other hand, we work on the circle, the Hilbert space is $L^2(\mathbb{S})\simeq L^2([0,2\pi])$, 
and the Hamiltonian is defined on the second Sobolev space on the circle,
\begin{equation}
D(H)= H^2(\mathbb{S}),
\end{equation}
In this case, we choose
\begin{equation}
H=
4E_C\left(\frac{Q}{2e}\right)^2,
\end{equation}
whose spectrum is discrete with minimal spectral gap
\begin{equation}
\eta=4E_C.
\end{equation}
The perturbation is then
\begin{equation}
\varepsilon V=
\frac12 E_L\phi^2
-E_J\cos(\phi-\phi_{\mathrm{ext}}).
\end{equation}
Since $\phi$ is bounded ($\|\phi\|=2\pi$), we have
\begin{equation}
 \varepsilon \|V\|\le2\pi^2E_L+E_J.
\end{equation}
In this case, the KAM iteration can be applied provided that
\begin{equation}
\frac{\pi^2}{2}\frac{E_L}{E_C}+\frac{E_J}{4E_C}<\frac{1}{\rho}.
\end{equation}
\end{example}
\section*{Conclusions}

In this work, we analyzed the wandering range of symmetries that are robust against perturbations of the Hamiltonian. In particular, for Hamiltonians with pure point spectrum, we investigated its dependence on the strength of the perturbation and identified sufficient conditions for which it scales linearly with it.  We showed that, for states in a dense subspace given by the linear span of eigenvectors of the unperturbed Hamiltonian and for finite-rank symmetries, the wandering range is of order $\varepsilon$. These results rely on Kato's perturbation theory together with the results of Ref.~\cite{rob}.

We then studied the wandering range of completely robust symmetries under bounded perturbations. In this case as well, it scales as $\varepsilon$; however, a stronger statement can be made: it admits an explicit \emph{norm} bound in terms of the perturbation strength and the minimal spectral gap of the unperturbed Hamiltonian.
This bound generalizes the corresponding result for finite-dimensional systems to unbounded Hamiltonians, and sharpens it by removing the dependence on the number of distinct eigenvalues.

The derivation of this bound constitutes the main technical contribution of the second part of the paper. To obtain it, we employed a Schrieffer-Wolff transformation, constructed through a KAM iteration scheme, analogous to the one appearing in classical mechanics. In order to control the various steps of the iteration, we used a property of the Catalan numbers, which naturally arise in the combinatorial structure of the expansion.

To ensure the convergence of the KAM iteration, we considered Hamiltonians with pure point spectrum and a nonvanishing minimal spectral gap. These conditions are satisfied by many  physical models, such as (any relatively bounded perturbation of) the harmonic oscillator and, in general, any quantum particle in a confining potential. However, there are other physically relevant situations in which they fail. In solid-state physics, one typically encounters spectra consisting of continuous energy bands separated by gaps. On the other hand, in molecular systems, one often deals with Hamiltonians with pure point spectrum but vanishing minimal spectral gap, as in the paradigmatic case of the hydrogen atom. Extending the present techniques to these situations would be an interesting direction for future work.

\ack
KY was supported by JSPS KAKENHI Grant No.~JP24K06904 from the Japan Society for the Promotion of Science (JSPS)\@.
PF, ML and VV acknowledge support from the Italian National Group of Mathematical Physics (GNFM-INdAM) and from PNRR MUR projects CN00000013 -``Italian National Centre on HPC, Big Data and Quantum Computing''. 
PF and VV acknowledge support from INFN through the project ``QUANTUM'' and from the Italian funding within the ``Budget MUR - Dipartimenti di Eccellenza 2023--2027''  - Quantum Sensing and Modelling for One-Health (QuaSiModO).
DB and PF acknowledge support from PNRR MUR project PE0000023-NQSTI.

\appendix
\renewcommand{\thedefinition}{A.\arabic{definition}}
\renewcommand{\thetheorem}{A.\arabic{theorem}}
\renewcommand{\theexample}{A.\arabic{example}}
\section{Catalan numbers}\label{appendix}
In this appendix, we give a brief review of the Catalan numbers, the combinatorial sequence crucial in the KAM iteration.

The Catalan numbers are defined by
\begin{equation}
    d_1 = 1, \qquad
    d_s = \sum_{\ell=1}^{s-1} d_\ell d_{s-\ell}.
\end{equation}

First, we present an important inequality this sequence satisfies.

\begin{theorem}\label{th:property1}
Let $\bm{\ell} = (\ell_1, \dots, \ell_n)\in\mathbb{N}^n$ be a multi-index with \(\ell_j \geq 1\). Then,
\begin{equation}
    \sum_{|\bm{\ell}|=s} d_{\ell_1} \cdots d_{\ell_n} \leq d_s,
    \label{eq:property1}
\end{equation}
where \(|\bm{\ell}| = \ell_1 + \cdots + \ell_n\).
\end{theorem}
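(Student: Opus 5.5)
Define $T_n(s)=\sum_{|\bm{\ell}|=s} d_{\ell_1}\cdots d_{\ell_n}$, the sum over multi-indices with all $\ell_j\ge1$. We prove $T_n(s)\le d_s$ by induction on $n$, for all $s\ge1$ simultaneously. The empty sum convention gives $T_n(s)=0$ whenever $s<n$, and in that case the claim is trivial.

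\emph{Base cases.} For $n=1$ we have $T_1(s)=d_s$, so the inequality holds with equality. For $n=2$, $T_2(s)=\sum_{\ell=1}^{s-1}d_\ell d_{s-\ell}=d_s$ for $s\ge2$ by the defining recursion~\eqref{eq:Catalan_numbers}, and $T_2(1)=0\le d_1$.

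\emph{Inductive step.} Suppose $n\ge3$ and $T_{n-1}(m)\le d_m$ for all $m$. Split off the first index:
\begin{equation}
T_n(s)=\sum_{\ell_1=1}^{s-1} d_{\ell_1}\, T_{n-1}(s-\ell_1)\le \sum_{\ell_1=1}^{s-1} d_{\ell_1} d_{s-\ell_1}=d_s,
\end{equation}
where the inequality uses the inductive hypothesis and nonnegativity of the $d_\ell$. The last equality is again the recursion, valid for $s\ge2$. If $s=1$ then $T_n(1)=0$ since $n\ge2$.

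The only point needing care is the bookkeeping of ranges, since $T_{n-1}(s-\ell_1)$ vanishes when $s-\ell_1<n-1$. This only shrinks the sum, so it does not affect the inequality. In fact the argument shows the sharper identity $T_n(s)=\sum_{\ell}d_\ell T_{n-1}(s-\ell)$. One could also note $T_n(s)=[y^{s-n}]\mathcal D(y)^n$ via generating functions, but the direct induction above suffices and avoids convergence issues.
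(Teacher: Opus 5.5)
Your proof is correct and follows essentially the same route as the paper. Both argue by induction on $n$, peel off one index (the paper splits off the last one, $\ell_{m+1}$; you split off the first), bound the remaining $(n-1)$-fold sum by the inductive hypothesis, and close with the Catalan recursion after extending the summation range to $s-1$. Your separate handling of $s=1$ and the explicit $n=2$ base case are harmless refinements that the paper leaves implicit.
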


\begin{proof}
We prove~\eqref{eq:property1} by induction on \(n\).
For \(n=1\), the claim trivially holds,
\begin{equation}
    \sum_{\ell=s} d_\ell = d_s.
\end{equation}
Assume that the property holds for \(n = m\), and let us prove it for \(n = m+1\). Let $\bm{\ell}=(\ell_1,\dots,\ell_{m+1})$, with $\ell_j\ge1$. We get
\begin{eqnarray}
    \sum_{|\bm{\ell}|=s} d_{\ell_1} \cdots d_{\ell_{m+1}} 
    &=& \sum_{\ell_{m+1}=1}^{s - m} d_{\ell_{m+1}} \sum_{|\bm{\ell}|=s - \ell_{m+1}} d_{\ell_1} \cdots d_{\ell_m} \nonumber\\
    &\leq& \sum_{\ell_{m+1}=1}^{s - m} d_{\ell_{m+1}} d_{s - \ell_{m+1}} 
    \leq \sum_{\ell_{m+1}=1}^{s-1} d_{\ell_{m+1}} d_{s - \ell_{m+1}} = d_s.
\end{eqnarray}
\end{proof}
In the next subsection, we introduce and calculate the generating function of the Catalan numbers.
\subsection{Generating function}

The generating function of the Catalan numbers is defined by the series
\begin{equation}
    \mathcal{D}(x) = \sum_{s \geq 1} d_s x^{s-1}.
    \label{eq:Generating}
\end{equation}

It is possible to find an analytical expression for \(\mathcal{D}(x)\), as stated in the following theorem.

\begin{theorem}\label{th:Catalan_gen}
Let \(\mathcal{D}(x)\) be the generating function of the Catalan numbers defined in~\eqref{eq:Generating}. Then,
\begin{equation}
    \mathcal{D}(x) = \frac{1 - \sqrt{1 - 4x}}{2x}.
    \label{eq:genfunCat}
\end{equation}
\end{theorem}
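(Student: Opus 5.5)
We derive the closed form of $\mathcal{D}(x)$ from the quadratic functional equation that the Catalan recursion forces on it, and then pick the correct root.

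\emph{Step 1: the series converges.} We show by induction that $d_s \le 4^{s-1}$. For $s\ge 2$, the inductive hypothesis gives
\begin{equation}
d_s=\sum_{\ell=1}^{s-1} d_\ell d_{s-\ell}\le (s-1)\,4^{s-2},
\end{equation}
which is too weak for large $s$. Two fixes are available. One is to establish the exact formula $d_s = \frac{1}{s}\binom{2s-2}{s-1}$ afterwards, once the generating function is known; that formula gives $d_s\le 4^{s-1}$ and polynomial decay. The other is a crude bound $d_s\le C^{s}$ for some $C$, which already yields a positive radius of convergence. The argument below only needs some positive radius, and we then extend by analytic continuation. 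So it suffices to note $d_s\le \binom{2s-2}{s-1}\le 4^{s-1}$. This follows by a direct induction using Vandermonde's identity,
\begin{equation}
\sum_{\ell}\binom{2\ell-2}{\ell-1}\binom{2s-2\ell-2}{s-\ell-1}=4^{s-2},
\end{equation}
together with $4^{s-2}\le\binom{2s-2}{s-1}$ for $s\ge 2$. Either way, $\mathcal{D}$ is analytic on $|x|<1/4$.

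\emph{Step 2: the functional equation.} Inside the disk of convergence we multiply the series by itself and take the Cauchy product:
\begin{equation}
x\,\mathcal{D}(x)^2=\sum_{s\ge2}\Big(\sum_{\ell=1}^{s-1}d_\ell d_{s-\ell}\Big)x^{s-1}=\mathcal{D}(x)-d_1 .
\end{equation}
Hence $x\mathcal{D}^2-\mathcal{D}+1=0$.

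\emph{Step 3: choosing the root.} For $x\ne0$ the two roots are $\frac{1\pm\sqrt{1-4x}}{2x}$. Since $\mathcal{D}$ is continuous at $0$ with $\mathcal{D}(0)=d_1=1$, it stays bounded near $0$, whereas the root with the plus sign blows up like $1/x$. By continuity and connectedness, the minus root is the correct one on the whole disk $|x|<1/4$, using the principal branch of the square root, which is analytic there.

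\emph{Step 4: the boundary $|x|=1/4$.} This is needed in Lemma~\ref{th:estimate1}. The coefficients $d_s$ are positive and satisfy $d_s\sim 4^{s-1}s^{-3/2}$, so the series converges absolutely on the closed disk. Abel's theorem then extends the identity to $|x|=1/4$. At $x=1/4$ this gives $\mathcal{D}=2$.

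The main obstacle is Step 1, the a priori growth bound, since the formula itself comes from routine algebra. If the Vandermonde route is inconvenient, an alternative is to avoid the a priori bound entirely: work with formal power series, extract the coefficients of $\frac{1-\sqrt{1-4x}}{2x}$ by the binomial series to obtain $d_s=\frac{1}{s}\binom{2s-2}{s-1}$, and only then deduce convergence and Abel continuity.
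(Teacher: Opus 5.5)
Your Steps 2 and 3 are the paper's proof. You square the series, use the recursion to get $x\mathcal{D}^2-\mathcal{D}+1=0$, and keep the root with $\mathcal{D}(0)=1$. The paper does this purely formally and never proves convergence, so your Steps 1 and 4 are the only new content. Step 1 fails as written.

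The convolution identity $\sum_{\ell=1}^{s-1}{2\ell-2\choose \ell-1}{2s-2\ell-2\choose s-\ell-1}=4^{s-2}$ is correct. The closing inequality $4^{s-2}\le{2s-2\choose s-1}$, however, is false for every $s\ge 6$: already ${10\choose 5}=252<256=4^4$. Since ${2n\choose n}4^{-n}$ is decreasing in $n$, it stays below $1/4$ from $n=5$ on. This is not a matter of constants. The convolution loses a factor of order $\sqrt{s}$ relative to ${2s-2\choose s-1}$, so no induction hypothesis of the form $d_\ell\le c{2\ell-2\choose\ell-1}$ can close.

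Your two other quick fixes do not help as stated either:
\begin{itemize}
\item The hypothesis $d_\ell\le C^\ell$ only propagates to $d_s\le(s-1)C^s$.
\item Reading off $d_s=\frac{1}{s}{2s-2\choose s-1}$ ``afterwards'' from the closed form is circular, if that closed form was obtained analytically using the very convergence you are trying to prove.
\end{itemize}

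The alternative you mention at the very end rescues the proof. In $\mathbb{C}[[x]]$, the equation $xF^2-F+1=0$ is equivalent to $F(0)=1$ together with the Catalan recursion for the coefficients, so it has a unique solution. The binomial expansion of $(1-\sqrt{1-4x})/(2x)$ is such a solution. Hence $d_s=\frac{1}{s}{2s-2\choose s-1}$, and convergence follows. Note that $d_s\sim 4^{s-1}/(\sqrt{\pi}\,s^{3/2})$; you dropped the $1/\sqrt{\pi}$, which is harmless.

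If you prefer to keep your analytic route, here is an a priori bound that avoids the closed form. For $x\ge0$ the partial sums $D_N(x)=\sum_{s=1}^{N}d_sx^{s-1}$ satisfy $D_N(x)\le 1+x\,D_{N-1}(x)^2$, because every pair $(\ell,s-\ell)$ with $s\le N$ has both entries at most $N-1$. By induction, $D_N(1/4)\le 1+\frac{1}{4}\cdot 4=2$, i.e.\ $\sum_s d_s4^{1-s}\le 2$. By the Weierstrass M-test the series then converges absolutely and uniformly on the closed disk $|x|\le 1/4$. Your Steps 2--3 give the identity on the open disk, and continuity extends it to $|x|=1/4$ without Abel's theorem or asymptotics.

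With that repair, your argument is more complete than the paper's formal one. Your connectedness argument for the choice of branch is sound. So is your treatment of the boundary, which Lemma~\ref{th:estimate1} genuinely uses at $\varepsilon=\eta/(v\rho)$.
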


\begin{proof}
Consider the square of \(\mathcal{D}(x)\),
\begin{equation}
    \mathcal{D}^2(x) = \left(\sum_{s_1 \geq 1} d_{s_1} x^{s_1 - 1}\right)\left(\sum_{s_2 \geq 1} d_{s_2} x^{s_2 - 1}\right) = \sum_{s_1, s_2 \geq 1} d_{s_1} d_{s_2} x^{(s_1 + s_2) - 2}.
\end{equation}

Introduce the sum variable \( s = s_1 + s_2 \). Then,
\begin{equation}
    \mathcal{D}^2(x) = \sum_{s \geq 2} x^{s-2} \sum_{s_1=1}^{s-1} d_{s_1} d_{s - s_1} = \frac{1}{x} \sum_{s \geq 2} d_s x^{s-1} = \frac{\mathcal{D}(x) - 1}{x},
\end{equation}
where in the last step we used the definition~\eqref{eq:Generating} and \( d_1 = 1 \).

Thus, \(\mathcal{D}(x)\) satisfies the quadratic equation
\begin{equation}
    x \mathcal{D}^2(x) - \mathcal{D}(x) + 1 = 0.
\end{equation}

Solving for \(\mathcal{D}(x)\) and choosing the branch satisfying \(\mathcal{D}(0) = 1\), we obtain~\eqref{eq:genfunCat}.
\end{proof}

\newcommand{\newblock}{}
\bibliographystyle{iop-title-hyperref}
\bibliography{name}

@Article{symmetry,
  author    = {Gross, David J.},
  journal   = {Proc. Natl. Acad. Sci. U.S.A.},
  title     = {The role of symmetry in fundamental physics},
  year      = {1996},
  issn      = {1091-6490},
  month     = dec,
  number    = {25},
  pages     = {14256--14259},
  volume    = {93},
  doi       = {10.1073/pnas.93.25.14256},
  publisher = {Proceedings of the National Academy of Sciences},
  url       = {https://doi.org/10.1073/pnas.93.25.14256},
}

@Book{Wigner_symmetry,
  author    = {Wigner, Eugene P.},
  publisher = {MIT Press},
  title     = {Symmetries and Reflections: Scientific Essays},
  year      = {1970},
  address   = {Cambridge, MA},
}

@Book{Sakurai,
  author    = {Sakurai, J. J. and Napolitano, Jim},
  publisher = {Cambridge University Press},
  title     = {Modern Quantum Mechanics},
  year      = {2020},
  address   = {Cambridge},
  edition   = {3rd},
  isbn      = {9781108473224},
  month     = sep,
  doi       = {10.1017/9781108587280},
  url       = {https://doi.org/10.1017/9781108587280},
}

@Article{rob,
  author    = {Facchi, Paolo and Ligab{\`{o}}, Marilena and Viesti, Vito},
  journal   = {J. Phys. A: Math. Theor.},
  title     = {Robustness of quantum symmetries against perturbations},
  year      = {2025},
  issn      = {1751-8121},
  month     = mar,
  number    = {12},
  pages     = {125305},
  volume    = {58},
  doi       = {10.1088/1751-8121/adbfe5},
  publisher = {IOP Publishing},
  url       = {https://doi.org/10.1088/1751-8121/adbfe5},
}

@Article{harm,
  author        = {Gr{\'e}bert, B. and Thomann, L.},
  journal       = {Commun. Math. Phys.},
  title         = {KAM for the Quantum Harmonic Oscillator},
  year          = {2011},
  issn          = {1432-0916},
  month         = aug,
  number        = {2},
  pages         = {383--427},
  volume        = {307},
  doi           = {10.1007/s00220-011-1327-5},
  publisher     = {Springer Science and Business Media LLC},
  url           = {https://doi.org/10.1007/s00220-011-1327-5},
}

@InBook{bhatia1989offdiagonal,
  author    = {Bhatia, Rajendra and Choi, Man-Duen and Davis, Chandler},
  editor    = {Dym, H. and Goldberg, S. and Kaashoek, M. A. and Lancaster, P.},
  pages     = {151--164},
  publisher = {Birkh{\"{a}}user},
  title     = {Comparing a Matrix to its Off-Diagonal Part},
  year      = {1989},
  address   = {Basel},
  isbn      = {9783034891448},
  series    = {Operator Theory: Advances and Applications},
  volume    = {40--41},
  booktitle = {The Gohberg Anniversary Collection},
  doi       = {10.1007/978-3-0348-9144-8_4},
  url       = {https://doi.org/10.1007/978-3-0348-9144-8_4},
}

@Book{stanley2015catalan,
  author    = {Stanley, Richard P.},
  publisher = {Cambridge University Press},
  title     = {Catalan Numbers},
  year      = {2015},
  address   = {Cambridge},
  isbn      = {9781107075091},
  month     = mar,
  doi       = {10.1017/CBO9781139871495},
  url       = {https://doi.org/10.1017/CBO9781139871495},
}

@Article{kam,
  author    = {Burgarth, Daniel and Facchi, Paolo and Nakazato, Hiromichi and Pascazio, Saverio and Yuasa, Kazuya},
  journal   = {Phys. Rev. Lett.},
  title     = {Kolmogorov-{Arnold}-{Moser} Stability for Conserved Quantities in Finite-Dimensional Quantum Systems},
  year      = {2021},
  issn      = {1079-7114},
  month     = apr,
  number    = {15},
  pages     = {150401},
  volume    = {126},
  doi       = {10.1103/PhysRevLett.126.150401},
  issue     = {15},
  numpages  = {6},
  publisher = {American Physical Society},
  url       = {https://doi.org/10.1103/PhysRevLett.126.150401},
}

@Article{quasicharge,
  author        = {Pechenezhskiy, Ivan V. and Mencia, Raymond A. and Nguyen, Long B. and Lin, Yen-Hsiang and Manucharyan, Vladimir E.},
  journal       = {Nature},
  title         = {The superconducting quasicharge qubit},
  year          = {2020},
  issn          = {1476-4687},
  month         = sep,
  number        = {7825},
  pages         = {368--371},
  volume        = {585},
  date          = {2020/09/01},
  doi           = {10.1038/s41586-020-2687-9},
  id            = {Pechenezhskiy2020},
  isbn          = {1476-4687},
  publisher     = {Springer Science and Business Media LLC},
  url           = {https://doi.org/10.1038/s41586-020-2687-9},
}

@Article{simul,
  author        = {Sarovar, Mohan and Zhang, Jun and Zeng, Lishan},
  journal       = {Eur. Phys. J. Quantum Technol.},
  title         = {{Reliability of analog quantum simulation}},
  year          = {2017},
  issn          = {2196-0763},
  month         = jan,
  number        = {1},
  pages         = {1},
  volume        = {4},
  doi           = {10.1140/epjqt/s40507-016-0054-4},
  publisher     = {Springer Science and Business Media LLC},
  url           = {https://doi.org/10.1140/epjqt/s40507-016-0054-4},
}

@Article{simul2,
  author    = {Georgescu, I. M. and Ashhab, S. and Nori, Franco},
  journal   = {Rev. Mod. Phys.},
  title     = {Quantum simulation},
  year      = {2014},
  issn      = {1539-0756},
  month     = mar,
  number    = {1},
  pages     = {153--185},
  volume    = {86},
  doi       = {10.1103/RevModPhys.86.153},
  issue     = {1},
  numpages  = {33},
  publisher = {American Physical Society},
  url       = {https://doi.org/10.1103/RevModPhys.86.153},
}

@Article{qsimul,
  author    = {Schwenk, Iris and Reiner, Jan-Michael and Zanker, Sebastian and Tian, Lin and Lepp\"akangas, Juha and Marthaler, Michael},
  journal   = {Phys. Rev. A},
  title     = {Reconstructing the ideal results of a perturbed analog quantum simulator},
  year      = {2018},
  issn      = {2469-9934},
  month     = apr,
  number    = {4},
  pages     = {042310},
  volume    = {97},
  doi       = {10.1103/PhysRevA.97.042310},
  issue     = {4},
  numpages  = {13},
  publisher = {American Physical Society},
  url       = {https://doi.org/10.1103/PhysRevA.97.042310},
}

@Article{gapless,
  author     = {Facchi, Paolo and Ligab{\`{o}}, Marilena},
  journal    = {J. Math. Phys.},
  title      = {Stability of the gapless pure point spectrum of self-adjoint operators},
  year       = {2024},
  issn       = {1089-7658},
  month      = mar,
  number     = {3},
  pages      = {032102},
  volume     = {65},
  doi        = {10.1063/5.0187017},
  publisher  = {AIP Publishing},
  url        = {https://doi.org/10.1063/5.0187017},
}

@Book{Arnold_DynSysIII_1988,
  author    = {Arnold, V. I. and Kozlov, V. V. and Neishtadt, A. I.},
  publisher = {Springer-Verlag},
  title     = {Dynamical Systems III: Mathematical Aspects of Classical and Celestial Mechanics},
  year      = {1988},
  address   = {Berlin, Heidelberg},
  isbn      = {978-3-662-02535-2},
  series    = {Encyclopaedia of Mathematical Sciences},
  volume    = {3},
  doi       = {10.1007/978-3-662-02535-2},
  issn      = {0938-0396},
  url       = {https://doi.org/10.1007/978-3-662-02535-2},
}

@Book{Dumas,
  author    = {Dumas, H. Scott},
  publisher = {World Scientific},
  title     = {The {KAM} Story: A Friendly Introduction to the Content, History, and Significance of Classical {Kolmogorov}-{Arnold}-{Moser} Theory},
  year      = {2014},
  address   = {Singapore},
  isbn      = {9789814556590},
  month     = apr,
  doi       = {10.1142/8955},
  url       = {https://doi.org/10.1142/8955},
}

@Article{schrieffer_wolff,
  author    = {Schrieffer, J. R. and Wolff, P. A.},
  journal   = {Phys. Rev.},
  title     = {Relation between the {Anderson} and {Kondo} Hamiltonians},
  year      = {1966},
  issn      = {0031-899X},
  month     = sep,
  number    = {2},
  pages     = {491--492},
  volume    = {149},
  doi       = {10.1103/PhysRev.149.491},
  publisher = {American Physical Society},
  url       = {https://doi.org/10.1103/PhysRev.149.491},
}

@Article{bravyi_schrieffer_wolff,
  author    = {Bravyi, Sergey and DiVincenzo, David P. and Loss, Daniel},
  journal   = {Ann. Phys.},
  title     = {Schrieffer-{Wolff} transformation for quantum many-body systems},
  year      = {2011},
  issn      = {0003-4916},
  month     = oct,
  number    = {10},
  pages     = {2793--2826},
  volume    = {326},
  doi       = {10.1016/j.aop.2011.06.004},
  publisher = {Elsevier BV},
  url       = {https://doi.org/10.1016/j.aop.2011.06.004},
}

@Article{Cederbaum_1989,
  author    = {Cederbaum, L. S. and Schirmer, J. and Meyer, H. D.},
  journal   = {J. Phys. A: Math. Gen.},
  title     = {Block diagonalisation of {Hermitian} matrices},
  year      = {1989},
  issn      = {1361-6447},
  month     = jul,
  number    = {13},
  pages     = {2427--2439},
  volume    = {22},
  doi       = {10.1088/0305-4470/22/13/035},
  publisher = {IOP Publishing},
  url       = {https://doi.org/10.1088/0305-4470/22/13/035},
}

@Article{Szabo2025robust,
  author    = {Szab{\'{o}}, Zsolt and Gehr, Stefan and Facchi, Paolo and Yuasa, Kazuya and Burgarth, Daniel and Lonigro, Davide},
  journal   = {Phys. Rev. A},
  title     = {Robust quantification of spectral transitions in perturbed quantum systems},
  year      = {2025},
  issn      = {2469-9934},
  month     = sep,
  number    = {3},
  pages     = {032202},
  volume    = {112},
  doi       = {10.1103/j44p-13j7},
  publisher = {American Physical Society (APS)},
  url       = {https://doi.org/10.1103/j44p-13j7},
}

@Book{BR,
  author        = {Bratteli, Ola and Robinson, Derek W.},
  publisher     = {Springer},
  title         = {Operator Algebras and Quantum Statistical Mechanics 1},
  year          = {1987},
  address       = {Berlin},
  edition       = {2nd},
  isbn          = {9783662025208},
  doi           = {10.1007/978-3-662-02520-8},
  url           = {https://doi.org/10.1007/978-3-662-02520-8},
}

@Book{Kato1995,
  author        = {Kato, Tosio},
  publisher     = {Springer},
  title         = {Perturbation Theory for Linear Operators},
  year          = {1995},
  address       = {Berlin},
  edition       = {2nd},
  isbn          = {978-3-540-58661-6},
  doi           = {10.1007/978-3-642-66282-9},
  issn          = {1431-0821},
  url           = {https://doi.org/10.1007/978-3-642-66282-9},
}

@PhdThesis{ViestiPhD,
  author = {Viesti, Vito},
  school = {Universit{\`{a}} degli Studi di Bari Aldo Moro},
  title  = {Quantum Symmetries and the Robustness of Dynamics {}},
  year   = {2026},
  url    = {https://hdl.handle.net/11586/568884},
}

@Article{Eternal,
  author    = {Burgarth, Daniel and Facchi, Paolo and Nakazato, Hiromichi and Pascazio, Saverio and Yuasa, Kazuya},
  journal   = {Phys. Rev. A},
  title     = {Eternal adiabaticity in quantum evolution},
  year      = {2021},
  issn      = {2469-9934},
  month     = mar,
  number    = {3},
  pages     = {032214},
  volume    = {103},
  doi       = {10.1103/PhysRevA.103.032214},
  issue     = {3},
  numpages  = {23},
  publisher = {American Physical Society (APS)},
  url       = {https://doi.org/10.1103/PhysRevA.103.032214},
}

@Book{arnold1989classical,
  author    = {Arnold, V. I.},
  publisher = {Springer-Verlag},
  title     = {Mathematical Methods of Classical Mechanics},
  year      = {1989},
  address   = {New York},
  edition   = {2nd},
  isbn      = {978-0-387-96890-2},
  doi       = {10.1007/978-1-4757-2063-1},
  issn      = {0072-5285},
  journal   = {Graduate Texts in Mathematics},
  url       = {https://doi.org/10.1007/978-1-4757-2063-1},
}

@Book{bhatia1997matrix,
  author    = {Bhatia, Rajendra},
  publisher = {Springer-Verlag},
  title     = {Matrix Analysis},
  year      = {1997},
  address   = {New York},
  isbn      = {978-0-387-94846-1},
  doi       = {10.1007/978-1-4612-0653-8},
  issn      = {0072-5285},
  url       = {https://doi.org/10.1007/978-1-4612-0653-8},
}

@Book{teschl2014quantum,
  author    = {Teschl, Gerald},
  publisher = {American Mathematical Society},
  title     = {Mathematical Methods in Quantum Mechanics: With Applications to Schr{\"o}dinger Operators},
  year      = {2014},
  address   = {Providence, RI},
  edition   = {2nd},
  isbn      = {978-1-4704-1704-8},
}
\end{document}